\theoremstyle{plain}
\newtheorem{theorem}{Theorem}
\newtheorem{lemma}[theorem]{Lemma}
\newtheorem{corollary}[theorem]{Corollary}
\theoremstyle{definition}
\theoremstyle{remark}
\newtheorem{remark}{Remark}
\numberwithin{equation}{section}
\numberwithin{theorem}{section}
\newcommand{\bX}{\mathbf X}
\newcommand{\bY}{\mathbf Y}
\newcommand{\bZ}{\mathbf Z}
\newcommand{\bW}{\mathbf W}
\newcommand{\bS}{\mathbf S}
\newcommand{\bD}{\mathbf D}
\newcommand{\be}{\mathbf\epsilon}
\newcommand{\R}{\mathbb R}
\newcommand{\N}{\mathbb N}
\newcommand{\E}{\mathbb E}
\renewcommand{\P}{\mathbb P}
\newcommand{\1}{\mathbf 1}
\newcommand{\T}{\mathscr T}
\newcommand{\Nor}{\mathcal N}
\newcommand{\B}{\mathcal B}
\begin{document}

\title[Change point detection for high-dimensional
regression]{Computationally efficient change point detection for
    high-dimensional regression}  

\author{Florencia Leonardi}
\address{University of S\~ao Paulo, Brazil}
\email{florencia@usp.br}

\author{Peter B\"uhlmann}
\address{Seminar for Statistics, ETH Z\"urich, Switzerland}
\email{buhlmann@stat.math.ethz.ch}

\date{\today}

\thanks{This article was produced as part of the activities of FAPESP  Research, Innovation and Dissemination Center for Neuromathematics (grant \#2013/ 07699-0 , S.Paulo Research Foundation). F.L. was partially supported by a FAPESP's fellowship (grant \#2014/00947-0) and CNPq's fellowship (grant \#233216/2014-6).}

\maketitle

\begin{abstract}
Large-scale sequential data is often exposed to some degree of
  inhomogeneity in the form of sudden changes in the parameters of the
data-generating process.
We consider the problem of detecting such structural changes
in a high-dimensional regression setting. We propose a joint
estimator of the number and the locations of the change points and of the
parameters in the corresponding segments. The estimator can be computed
using dynamic programming or, as we emphasize here, it can be approximated
using a binary search algorithm with $O(n \log(n) \mathrm{Lasso}(n))$
computational operations while still enjoying essentially the same
theoretical properties; here $\mathrm{Lasso}(n)$ 
denotes the computational cost of computing the Lasso for sample size $n$.   
We establish oracle inequalities for the estimator as well as for its
binary search approximation, covering also the case with a large 
(asymptotically growing) number of change points. We evaluate the
performance of the proposed estimation algorithms on simulated data and
apply the methodology to real data.   
\end{abstract}

\section{Introduction}

Much progress and work has been done in the last decade on methodology and
theory of high-dimensional data, and we refer to
\cite{fanlv10,peter-sara-book} for 
some overview. The vast majority of the focus has been on regression or
classification with homogeneous data from a model with the same
high-dimensional parameter for all the 
samples. Such a homogeneity assumption is not realistic for some datasets,
in particular for large-scale data where sample size and the
dimensionality are large. Some work addressing the issue of heterogeneous
data in high-dimensional settings include factor models
\cite{mclachlan2003modelling,carvalho2008high,fan2008high}, mixture
regression models \cite{staedleretal10}, change point regression models
\cite{leeetal15} or ``maximin'' worst case analysis \cite{mebu14}. 

We consider here a change point, high-dimensional regression model. We
propose a joint estimator, using regularization with $\ell_1$-norms of the
parameters in different segments, for the
number and the locations of the change points and for the parameters of
each corresponding segment. We establish
an oracle inequality and consistency for the number of change points,
implying near optimal convergence rates for the underlying regression
parameters. Our analysis includes the case where the number of change
points can be large (and asymptotically growing). 

Our estimator can be computed using dynamic programming. To markedly speed up
computational time for large-scale data, we can use a computationally
efficient binary search algorithm, having computational cost of the order
$O(n \log(n) \mathrm{Lasso}(n))$, to approximate the estimator \cite{killick2012,fryzlewicz2014}; here
$\mathrm{Lasso}(n)$ denotes the cost to compute the Lasso for sample size
$n$. 
A main result of our paper establishes that the binary search
algorithm essentially enjoys the same theoretical properties as the
original estimator. We thus provide a strong justification for using binary
search in change point detection in large-scale regression problems.  

We evaluate the performance of the estimation algorithms by means of
simulations and we also show the utility of our  
approach for real data.
Our work is related to the one in \cite{leeetal15} and we will
outline the differences in Section \ref{subsec.relatedwork}. 

The problem of change point detection has been 
studied already by e.g. Page \cite{page1955} and since the early 1980s there has
been an explosion of contributions (see \cite{review2013} and references
therein). Change point models cover a wide range of applications, from
e.g. econometrics \cite{andreou2002,cho-fryzlewicz2015} to genomics
\cite{braun2000,boys2004,decastro2015}. In most of the literature, ``change
point detection'' deals with the problem 
of finding the piecewise constant means in univariate or multivariate data,
see for example \cite{fricketal14} which contains many references. There
are also some  works studying changes in the parameters of autoregressive models
\cite{chan2014} or on network data \cite{kolar2012,barnett2014}. A vast list of
contributions on the change point detection problem can be found in 
the recent review paper \cite{review2013} or in the repository
\cite{changepoint}. However, change point models for
high-dimensional regression or classification where the number of
parameters can be much larger than sample size have not been considered
very much.  

\subsection{Related work and our contribution}\label{subsec.relatedwork}

We propose a joint estimator of 
the change points and the parameters for each segment in a
high-dimensional linear model, even in the case where 
the number of segments is unknown. 
To the best of our knowledge, there is only the independently developed
work \cite{leeetal15} which is related to our study in the sense of
considering a similar motivation and high-dimensional model. In
\cite{kolar2012}, an undirected Gaussian network model is considered which
can be broken into single regressions: the mathematical analysis is not
treating the high-dimensional case, and the proposed approach is based on a
total-variation, Fused Lasso type penalty with a corresponding approximate
computational optimization only. As described next, our results cover multiple,
high-dimensional change point regression models with corresponding
theoretical guarantees of a computationally efficient algorithm. 

In \cite{leeetal15}, a high-dimensional linear model with
\emph{one} potential change point for two different high-dimensional
regression parameters is considered. We address here the situation with
multiple change points, with a possibly growing number thereof as the
sample size increases. In particular, we face here also the issue if
efficient computation (as mentioned in the next paragraph) as well as the
problem of determining the 
number of change points. We use the Lasso, similarly as in
\cite{leeetal15}, for each segment arising from the change points and we
then minimize an overall penalized residual sum of squares.
We prove an oracle
inequality for the penalized residual sum of squares procedure using a sum
of $\ell_1$-norm penalties. The result implies near optimal convergence
rates for the parameters and in addition, we obtain directly a consistent
estimator for the possibly growing number of change points, without the
need to do some additional model selection in the spirit of e.g. BIC.  

Furthermore and especially important for large-scale data, and since we are
considering multiple and possibly very many change points, we focus on the
computational task as well whereas the case 
with one change point as in \cite{leeetal15} is computationally very
easy. While a dynamic 
programming algorithm works in general, we prove that a much more
efficient binary search algorithm is
consistent and has (essentially) the 
same rates in the oracle inequality as mentioned above. Of course, binary
segmentation algorithms are not new, see for example \cite{fryzlewicz2014},
but the derivation of a theoretical consistency guarantee in the
high-dimensional change point problem as considered here is entirely
novel. 

\section{Change point model and estimation}

Consider a sequence of independent observations $\{(Y_i,X_i)\}_{i=1}^n$
with $p$-dimensional covariates $X_i\in\R^p$ and univariate response
$Y_i\in\R$. Assume $\{X_i\}_{i=1}^n$ are i.id. with covariance matrix $\Sigma$
and $\{Y_i\}_{i=1}^n$ are given by   
\begin{equation}\label{model}
Y_i = X_i^T\beta^{(i)} + \epsilon_i\ \ (i=1,\ldots,n),
\end{equation}
where $\epsilon_1,\dotsc,\epsilon_n$ are i.i.d., independent of $X_1,\dotsc,X_n$, and 
$\{\beta^{(i)}\}_{i=1}^n$ are piecewise  constant. The i.i.d. assumption for
$\{X_i\}_{i=1}^n$ is made for notational simplicity but can be relaxed to
an i.i.d. assumption within each segment where $\{\beta^{(i)}\}_{i=1}^n$ is
constant.
That is, we assume 
 there exists a $(k_0+1)$-dimensional vector $\alpha^0=(\alpha^0_0,\dotsc,\alpha^0_{k_0})$ satisfying
\begin{equation}\label{alph0}
0=\alpha^0_0 < \alpha^0_1<\dotsc < \alpha^0_{k_0}=1
\end{equation}
and $k_0$ real vectors $\beta^0(1),\dotsc, \beta^0(k_0)$ in $\R ^p$ such that
\begin{equation}\label{beta0}
\beta^{(i)} \;=\; \sum_{j=1}^{k_0} \beta^0(j)\1\{i/n \in
(\alpha^0_{j-1},\alpha^0_j] \}\,.\end{equation}
This means that the sequence $(X_1,Y_1),\dotsc, (X_n,Y_n)$ is independent but
only piecewise identically distributed, with change points at the elements
of $\alpha^0$. To simplify notation here and in the sequel we assume,
without loss of generality,  
that $\alpha^0_j n \in \N$ for all $j=1,\dotsc,k_0$.  

Sometimes we will use matrix notation for the equations in
\eqref{model}. Given an interval $(u,v]\subset[0,1]$ such that $un,vn\in\N$
we will denote by $\bY_{(u,v]}$ the  vector $(Y_{un+1},\dotsc,Y_{vn})^T$
and by $\be_{(u,v]}$ the  vector
$(\epsilon_{un+1},\dotsc,\epsilon_{vn})^T$. Analogously, $\bX_{(u,v]}$ will
denote the $(v-u)n\times p$ matrix
$(X^{(1)}_{(u,v]},\dotsc,X^{(p)}_{(u,v]})$. Then the model in
\eqref{model} can be written as 
\begin{equation}\label{eqmat}
\bY_{(\alpha^0_{j-1},\alpha^0_j]} = \bX_{(\alpha^0_{j-1},\alpha^0_j ]} \beta^0(j) +\epsilon_{(\alpha^0_{j-1},\alpha^0_j  ]}
\end{equation}
for $j=1,\dots,k_0$.

We propose a joint estimator for the change points and the regression
parameters in  the model given by  \eqref{model},  \eqref{alph0} and
\eqref{beta0}, without assuming a known upper bound on the number of
segments.  
Given a vector $\alpha=(\alpha_0,\dotsc,\alpha_k)$ satisfying
\begin{equation}\label{alph}
0=\alpha_0 < \alpha_1<\dotsc < \alpha_k=1
\end{equation}
we  denote by $\ell(\alpha)$ the number of positive components, that is 
$\ell(\alpha) = k$. This value also corresponds to the number of segments in the model. For any $j=1,\dotsc,\ell(\alpha)$ we denote by $I_j(\alpha)$ the $j$-th interval in $\alpha$ and by $r_j(\alpha)$ its length; that is $I_j(\alpha)=(\alpha_{j-1},\alpha_j]$
and $r_j(\alpha) =   \alpha_{j}-\alpha_{j-1}$. We will denote by $r(\alpha)$  the smallest size of such intervals defined by  
\begin{equation}\label{ralpha}
r(\alpha) = \min_{j=1,\dotsc,\ell(\alpha)}\{ r_j(\alpha)\}\,.
\end{equation}

In the sequel we will denote by $\|\cdot\|_r$ the $r$-norm in
$\R^p$. Given tuning parameters $\lambda> 0$, $\gamma>0$ and
  $\delta>0$, and see below for a discussion, we define the joint lasso 
estimator of the change point parameter $\alpha^0$ and the coefficients
$\beta^0(1),\dotsc,\beta^0(\ell(\alpha^0))$ for the
$\ell(\alpha^0)$-segments by 
\begin{align}\label{hatalpha}
\hat\alpha \; &=\; \underset{k}{\arg\min} \;\underset{\alpha\colon\ell(\alpha)=k}{\arg\min}\;\Bigl\{ \;\sum_{j=1}^{k} 
 L_n(I_j(\alpha),\hat\beta(j))
+ \gamma k  \;\Bigr\}\\
\hat\beta(j)\; &=\; \underset{\beta}{\arg\min}\;\Bigl\{\; L_n(I_j(\alpha),\beta) +\lambda \textstyle\sqrt{r_j(\alpha)}\|\beta\|_1   \; \Bigr\}\,,\quad j=1,\dots,\ell(\alpha)\,, \label{hatbeta}
\end{align}
where the loss function $L_n$ is given by
\begin{equation}\label{LXY}
L_n(I_j(\alpha),\beta) = \|\bY_{I_j(\alpha)}-\bX_{I_j(\alpha)}\beta\|_2^2/n
\end{equation}
and the minimization in \eqref{hatalpha} is over the set of all vectors
$\alpha=(\alpha_0,\dotsc,\alpha_k)$ satisfying
$0=\alpha_0<\alpha_1<\dotsc<\alpha_k=1$ and   
$r(\alpha)\geq\delta$. The role of $\delta$ is to ensure that within
  each segment (between two consecutive candidates of change points) there
  are sufficiently many samples ensuring a reasonable accuracy of the
  corresponding estimated regression parameter.
We   sometimes refer to \eqref{hatalpha} as the global estimator which is
  contrasted with a computationally more efficient version in Section~\ref{sec.defBS}.
Note that we do not impose an upper bound for k, but the condition on the
minimal spacing $r(\alpha) \ge \delta$
implies that $k\leq 1/\delta$. 

We propose a cross-validation scheme for ordered data, as
outlined in Section \ref{real-data}, to choose the tuning parameters
$\lambda$ for regularizing with 
respect to high-dimensionality and sparsity and $\gamma$ for regularizing
the number of segments. The ideal value of the parameter $\delta$ is related to
the density of the true underlying change points: 
  in practice, it should be chosen reasonably small such as $\delta = 0.1$
  while from a theoretical view point, one needs $O(\sqrt{\log(p)/n}) \le
  \delta < r(\alpha^0) - O(\sqrt{\log(p)/n})$, 
  i.e., smaller than the minimal distance between the true change points,
  but it cannot be chosen too small (not too fast convergence to zero
  asymptotically) 
  for consistent estimation of the change points and the parameters, as
  described in the theoretical results in Section \ref{mainsec}.

We relate the global estimator by considering the Lasso \cite{tibshirani1996}
for the sub-interval $(u,v]$ 
with $un,vn\in\N$, $vn - un \ge 1$ with parameter
$\lambda/\sqrt{\max(v-u,\delta)}$. It is given by  
\begin{equation}\label{lassobeta}
\hat\beta_{(u,v]} \;=\; \underset{\beta}{\arg\min}\;\Bigl\{\; \|\bY_{(u,v]}-\bX_{(u,v]}\beta\|_2^2/(v-u)n+  \frac{\lambda \|\beta\|_1}{\sqrt{\max(v-u,\delta)}}   \; \Bigr\}\,.
\end{equation}
Observe that the estimator $\hat\beta(j)$ in \eqref{hatbeta} equals 
\begin{equation}\label{lassobeta2}
\hat\beta(j) \;=\; \underset{\beta}{\arg\min}\;\Bigl\{\; L_n(I_j(\alpha),\beta)/r_j(\alpha) +
\frac{ \lambda}{\textstyle\sqrt{r_j(\alpha)}} \|\beta\|_1  \; \Bigr\},
\end{equation}
and therefore, as $r(\alpha)\geq \delta$,  $\hat\beta(j)$ is equal to  the
Lasso estimator in \eqref{lassobeta} with $(u,v] = I_j(\alpha)$;  that is
$\hat\beta(j) = \hat\beta_{I_j(\alpha)}$.  
To compute $\hat\beta(j)$ we can use, for example, the \texttt{R}-package \texttt{glmnet} \cite{glmnet}, and for computing the vector
$\hat\alpha$ in \eqref{hatalpha} we can use dynamic programming as
described next.

\subsection{Exact dynamic programming algorithm}

We present first a dynamic programming approach, known for a long time
\cite[cf.]{hawkins1976}, to compute the estimator in \eqref{hatalpha}. It 
computes the optimum in \eqref{hatalpha} and the estimates in
\eqref{hatbeta}, at the computational cost of $O(n^2 \text{Lasso}(n))$
operations where
$\text{Lasso}(n)$ is the cost to compute the Lasso estimator for a sample
of size $n$ (see also \cite{bai-perron2003} and references therein). 

Let $F_{k}(v)$ denote the minimum value of the function in \eqref{hatalpha} when considering only the sample $(\bY_{(0,v]},\bX_{(0,v]})$ 
and vectors $\alpha$ of size  $\ell(\alpha)=k$;  
that is
\[
F_{k}(v)\; =\; \underset{\alpha\colon
  \ell(\alpha)=k}{\min}\;\Biggl\{\;\sum_{j=1}^k
L_n(I_j(v\alpha),\hat\beta_{I_j(v\alpha)}) + \gamma k  \; \Biggr\}\,.
\]
It is easy to see that the optimal $(k+1)$-dimensional vector $\alpha$
corresponding to $F_{k}(1)$ consists of $k-1$ optimal
change points over $(\bY_{(0,\alpha_{k-1}]},\bX_{(0,\alpha_{k-1}]})$ and a
single segment over $(\bY_{(\alpha_{k-1},1]},\bX_{(\alpha_{k-1},1]})$,  
where $\alpha_{k-1}$ is the rightmost change point proportion. 
Moreover, the $k-1$ segments over
$(\bY_{(0,\alpha_{k-1}]},\bX_{(0,\alpha_{k-1}]})$ must minimize the
function \eqref{hatalpha} for the sample
$(\bY_{(0,\alpha_{k-1}]},\bX_{(0,\alpha_{k-1}]})$, leading to $F_{k-1}(\alpha_{k-1})$.  
In this way, the dynamic programming recursion is computed for any $v\in V_n= \{i/n\colon i=1,\dotsc, n\}$ by 
\begin{align*}
F_1(v) &=\begin{cases}
L_n((0,v],\hat\beta_{(0,v]}) + \gamma\,, & \text{ if }v\geq \delta\,;\\
+\infty\,, &  \text{ if }v < \delta\,.
\end{cases}\\
F_{k}(v) &= \min_{u\in V_n, u<v}\{\,F_{k-1}(u)+
           L_n((u,v],\hat\beta_{(u,v]})+\gamma \,\}\,,\quad k=2,\dotsc, \text{kmax},
\end{align*}
where kmax is an upper bound on $k$ (in our case $\text{kmax}=1/\delta$). 
The estimator $\hat\alpha$ in \eqref{hatalpha} is computed by tabulating 
$F_1(v)$ for all $v\in V_n$ 
and then by computing $F_2(v)$ for all $v\in  V_n$
and so on up to $F_{\text{kmax}}(v)$.
The optimal value of $k$ is obtained by
the equation
\begin{equation}\label{optk}
\hat k = \underset{k=1,\dots,\text{kmax}}{\arg\min}\{\, F_k(1)\,\} 
\end{equation}
and the vector $\hat\alpha=(0,\hat \alpha_1,\dotsc, \hat\alpha_{\hat k-1},1)$ is given by
\begin{align*}
\hat\alpha_{j-1} &= \underset{u\in V_n, u<\hat\alpha_j}{\arg\min} \{\,F_{j-1}(u)+ L_n((u,\hat\alpha_j],\hat\beta_{(u,\hat\alpha_j]})+\gamma \,\}
\,, \quad j = 2,\dotsc,\hat k\,.
\end{align*}

\subsection{Binary Segmentation algorithm}\label{sec.defBS}

Here we describe an efficient Binary Segmentation algorithm
\cite[cf.]{venka1992,fryzlewicz2014} to approximate the estimator given by 
\eqref{hatalpha}, with computational cost of the
order $O(n \log(n) \text{Lasso}(n))$, where $\text{Lasso}(n)$ is the cost to compute the Lasso estimator for a sample of size $n$. The
algorithm will not compute the global estimator defined in
\eqref{hatalpha}, but we
will nevertheless provide in Section \ref{sec.BS} theoretical guarantees for
the algorithm which are  the same
  as for the global
estimator.  

For $u,v\in V_n= \{i/n\colon i=1,\dotsc, n\}$  denote by 
\begin{equation}\label{defH}
H(u,v) = 
\begin{cases}
L_n((u,v],\hat\beta_{(u,v]}) + \gamma\,, & \text{ if } (v-u)n \geq 1\,;\\
0\,,  & \text{ otherwise}
\end{cases}
\end{equation}
and define
\begin{equation}\label{h}
h(u,v)=  \underset{s\in \{u\}\cup [u+\delta,v-\delta]}{\arg\min} \{\, H(u,s) + H(s,v)\,\}\,.
\end{equation}
The idea of the Binary
Segmentation algorithm is to compute the best single change point for the
interval $(0,1]$ (given by $h(0,1)\neq 0$) and then to
iterate this criterion on both segments separated by this point, until no
more change points are found (due to the penalty in the objective
function). We can describe this algorithm by using a binary tree structure
$T$ with nodes labeled by sub-intervals of the form $(u,v] \in V_n^2$ such that $(v-u)n\geq 1$. The steps of the algorithm are then given by:

\begin{enumerate}
\item Initialize $T$ to the tree with a single root node labeled by $(0,1]$.
\item For each terminal node $(u,v] $ in $T$ compute $s = h(u,v)$. If $s>u$ add to $T$
the additional nodes $(u,s]$ and $(s,v]$ as
descendants of node $(u,v]$.
\item Repeat 2. until no more nodes can be added to $T$.
\end{enumerate} 

The set of terminal nodes in $T$,  denoted by $T^0$,  will produce the estimated change point vector $\hat\alpha^{bs}$, by picking up the extremes in these intervals; that is
\[ 
\hat\alpha^{bs} = \bigcup_{(u,v]   \in T^0}\{u,v\}\,.
\]

\section{Theoretical properties}\label{mainsec}

In this section we present the main theoretical results for the global
estimator in \eqref{hatalpha}, which can be computed with dynamic
programming, as well as for the binary segmentation algorithm. 
In the sequel we denote by $S(\beta)$ the support of a parameter vector $\beta$,
given by  $S(\beta)=\{i\colon \beta_i\neq 0\}$. Our assumptions are as
follows. 

\subsection*{Assumption 1.} There exists $K_X<\infty$ 
such that
\[
\|X_i\|_\infty\leq K_X
\]
and $\E(X_i)=0$ for all $i$. 

\subsection*{Assumption 2.} There exists $\sigma^2<\infty$ such that 
\[
 \E(\epsilon_i^2) \leq \sigma^2
 \]
and $\E(\epsilon_i) = 0$ for all $i$.

\subsection*{Assumption 3 (compatibility condition \cite{vandeGeer:07a}).}

The covariance matrix $\Sigma$ is positive definite and the compatibility condition holds for  $\Sigma$ and the set $S_*=\cup_{j=1}^{k_0} S(\beta^0(j))$, with constant $\phi_*>0$. That is, for all  $\beta\in \R^p$ that satisfy $\|\beta_{S_*^c}\|_1\leq 3\|\beta_{S_*}\|_1$ it holds that
\begin{equation}\label{compatibility}
\|\beta_{S_*}\|_1^2 \;\leq\; \frac{\bigl(\beta^T\Sigma \beta \bigr)s_*}{\phi_*^2}
\,,
\end{equation}
where $s_*$  is the cardinality of $S_*$, see also \cite[Ch.6.2.2]{peter-sara-book}. 

\medskip
We note that the compatibility constant $\phi_*^2$ is always lower-bounded by
the minimal eigenvalue of $\Sigma$.  

For any $0\leq i \leq j < k \leq k_0$ 
 denote by
\[
\gamma(i,j,k)=  \frac{\alpha^0_{j}-\alpha^0_{j-1}}{\alpha^0_k-\alpha^0_{i-1}}\,.
\]

We assume the following condition on the vectors $\beta^0(1),\dotsc,\beta^0(k_0)$ to guarantee the identifiability of the model parameters. 

\subsection*{Assumption 4 (identifiability).} If $k_0>1$  there exists a constant 
$m_*>0$ such that 
\[
\min_{1\leq i \leq j < k\leq k_0}   \frac{\|  \sum_{r=i}^j \gamma(i,r,j) \beta^0(r) -  \sum_{r=j+1}^k \gamma(j+1,r,k) \beta^0(r) \|_1}{s_*} \;\geq \; m_*  \,.
\]

\medskip
\begin{remark}\label{identbetas}
Observe that in the case $k_0=2$ the first condition amounts to say that 
$\|\beta^0(1)-\beta^0(2)\|_1\geq m_* s_*$. 
\end{remark}

\medskip
We will denote by $M_*$ the minimal upper bound such that 
\[
\max_{1\leq j \leq k_0}  \|\beta^0(j)\|_\infty \;\leq\; M_*, \quad\text{and
  if $k_0 > 1$ also:}\ \max_{1< j \leq k_0}  \|\beta^0(j-1) - \beta^0(j)\|_\infty \; \leq \; M_*\,.  
\]
Given $K_X$, $\phi_*$, $M_*$ and $m_*$ specified by Assumptions~1-4 we define the constants
\begin{equation*}
d_* =  \begin{cases}
\frac{m_*^2\phi_*^2}{32M_*} & \text{ if }k_0>1\\
+\infty & \text{ if }k_0=1,
\end{cases} 
\end{equation*}
and 
\begin{equation*}
c_*  = \Bigl(\frac{K_XM_*}{d_*}+\frac{\sqrt{8}}{\phi_*} \Bigr)^2\,.
\end{equation*}

\subsection{Global estimator with dynamic programming}

For the global estimator in \eqref{hatalpha} computed by dynamic
programming, we present here a finite-sample result. The corresponding
constants are not of main interest, and an asymptotic interpretation
presented afterwards leads to simpler statements. 
\begin{theorem}\label{modsel} 
Suppose Assumptions 1-4 hold. Given $t>0$, let $\lambda$, $\delta$, $s_*$
and $\gamma$ satisfy: 
\begin{enumerate} 
\item $\delta + \lambda\sqrt\delta/d_*<  r(\alpha^0)$, \\[-10pt]
\item $\lambda/\sqrt\delta < M_*\phi_*^2/24$ and $\lambda\sqrt{\delta} \geq \lambda_0$, with $\lambda_0=40 t \sigma K_X\sqrt{ \frac{\log(n p)}{n}}$, \\[-10pt]
\item $s_* <  \frac{\lambda_1^{-1}}{4c_*}$ 
 with $\lambda_1=  10tK_X^2\sqrt{ \frac{\log(np)}{n}}$,
  \item $\gamma>6c_*\lambda^2 s_*$ and $\gamma +2\lambda\sqrt\delta M_*s_*
    < 4d_*M_* s_*\delta$.
\end{enumerate}
Then, with probability at least $1-  2/t^2$ we have that
\begin{enumerate}
\item$ \ell(\hat\alpha) = k_0$,\\[-10pt]
\item $\|\hat \alpha  -\alpha^0 \|_1\,\leq\,\lambda\sqrt\delta/d_*$ \\[-10pt]
\item $\sum_{j=1}^{k_0} \left(\|\bX_{I_j(\hat\alpha)}(\hat\beta^{(j)} - \beta^0(j))
  \|_2^2/n \,+\lambda \sqrt{r_j(\hat\alpha)}\|\hat\beta^{(j)} -\beta^0(j)\|_1\right)\;
  \leq \; 4c_* k_0\lambda^2 s_*$.
  \end{enumerate}
\end{theorem}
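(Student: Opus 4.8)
The plan is to reduce all randomness to a single deterministic event, prove a two–sided per–segment estimate (a Lasso oracle bound on ``pure'' segments and an irreducibility lower bound on ``straddling'' segments), and then run a combinatorial comparison of the objective \eqref{hatalpha} that rules out every configuration of $\hat\alpha$ except the correct one.

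\textbf{Reduction to a good event.} First I would isolate the randomness into an event $\mathcal A=\mathcal A_1\cap\mathcal A_2$ on which two bounds hold uniformly over \emph{all} admissible sub-intervals $(u,v]$ with $un,vn\in\N$: a noise bound controlling $\|\bX_{(u,v]}^T\be_{(u,v]}\|_\infty/n$ by a constant multiple of $\lambda_0$, and a Gram-deviation bound controlling $\|\bX_{(u,v]}^T\bX_{(u,v]}/((v-u)n)-\Sigma\|_\infty$ by a constant multiple of $\lambda_1$. Since Assumptions~1--2 only provide bounded covariates and a finite error variance, the natural tool is a second-moment (maximal) inequality rather than an exponential one; each of the two bounds then fails with probability at most $1/t^2$, which explains simultaneously the factor $1-2/t^2$ and the polynomial forms of $\lambda_0=40t\sigma K_X\sqrt{\log(np)/n}$ and $\lambda_1=10tK_X^2\sqrt{\log(np)/n}$. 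The union bound over the $O(n^2)$ intervals and $p$ coordinates only inflates the logarithm, already absorbed in $\log(np)$. Everything afterwards is deterministic on $\mathcal A$.

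\textbf{Per-segment estimates.} On $\mathcal A$, for an interval $(u,v]$ of length $\geq\delta$ lying inside a single true segment (so the target is the constant $\beta^0(j)$), condition~(2) is designed so that the penalty dominates twice the empirical noise, and condition~(3), through $4c_*\lambda_1 s_*<1$, keeps the empirical compatibility constant bounded below in terms of $\phi_*$ (the empirical Gram matrix inherits the compatibility condition of $\Sigma$ up to the correction $\lambda_1 s_*$). The standard basic inequality then yields the single-segment oracle bound $\|\bX_{(u,v]}(\hat\beta_{(u,v]}-\beta^0(j))\|_2^2/n+\lambda\sqrt{v-u}\,\|\hat\beta_{(u,v]}-\beta^0(j)\|_1\leq c_*\lambda^2 s_*$, which is precisely where $c_*$ is defined. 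The complementary ingredient is a \emph{lower} bound: if $(u,v]$ straddles a true change point, Assumption~4 forces any attainable population fit to incur an irreducible error, so that the segment's contribution to the objective exceeds the pure-segment value by an amount proportional to $d_*$ times the straddled length.

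\textbf{Number and location of the change points.} With these one-sided bounds I would compare the value of \eqref{hatalpha} at $\hat\alpha$ with its value at the common refinement of $\hat\alpha$ and $\alpha^0$ (and at $\alpha^0$ itself). The \emph{no over-segmentation} direction uses the penalty: by the per-segment upper bound each additional split lowers the residual sum of squares by at most $O(c_*\lambda^2 s_*)$ while costing $\gamma$, so $\gamma>6c_*\lambda^2 s_*$ in condition~(4) makes splitting never profitable and gives $\ell(\hat\alpha)\leq k_0$. The \emph{correct location} direction uses the lower bound: a missed or badly placed change point leaves some estimated segment straddling a true jump, whose excess cost (of order $d_*$ times the displacement) cannot be absorbed; balancing it against the $\lambda\sqrt\delta$-scale savings delivers both $\ell(\hat\alpha)=k_0$ and the displacement bound $\|\hat\alpha-\alpha^0\|_1\leq\lambda\sqrt\delta/d_*$ of statement~(2). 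Condition~(1), $\delta+\lambda\sqrt\delta/d_*<r(\alpha^0)$, is exactly what guarantees that each true change point receives its own distinct nearby estimate and that no two estimates collapse, while the second half of condition~(4) calibrates $\gamma$ against the $d_*M_*s_*\delta$-scale cost of a genuine straddle.

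\textbf{Final oracle inequality and the main obstacle.} Once $\ell(\hat\alpha)=k_0$ and the locations are pinned to within $\lambda\sqrt\delta/d_*$, each estimated segment $I_j(\hat\alpha)$ is pure apart from an endpoint contamination controlled by the location bound and by $M_*$; applying the per-segment oracle inequality on each $I_j(\hat\alpha)$, bounding the contamination, and summing over $j$ yields statement~(3), the passage from $c_*$ to $4c_*$ absorbing the straddle correction. The main obstacle is the lower-bound step: quantifying precisely how Assumption~4 prevents a straddling Lasso fit from reducing the residual sum of squares, \emph{uniformly} over all straddle positions and in the presence of the $\ell_1$-penalty, and then threading this estimate through the combinatorial comparison so that every admissible configuration of $\hat\alpha$ relative to $\alpha^0$ is excluded except the correct one. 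This is where the interplay of $m_*,\phi_*,M_*$ inside $d_*$ is delicate and where most of the technical work will lie.
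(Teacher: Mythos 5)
Your proposal follows essentially the same route as the paper's own proof: the good event $\T_0\cap\T_1$ controlled by second-moment maximal inequalities (which is indeed why the probability is $1-2/t^2$ and $\lambda_0,\lambda_1$ scale with $t$), the uniform per-interval Lasso oracle bound (the paper's Theorem~\ref{mixture} and Lemma~\ref{lem4}), the identifiability-driven lower bound for intervals straddling a true change point (Lemma~\ref{lem1}), a contradiction argument that rules out over-segmentation via $\gamma>6c_*\lambda^2 s_*$ and missed/misplaced change points via the straddle cost calibrated by condition~(4), and finally statement~(3) from the per-segment bound with endpoint contamination giving $c(2)\le 4c_*$. You also correctly located the main technical burden in the straddle lower bound and in threading it through the case analysis (the paper's cases (b1)/(b2), where adding a true change point can violate the $\delta$-spacing and one must simultaneously remove a nearby estimated point).
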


\noindent
\emph{Asymptotic interpretation.} For simplifying the discussion, assume
that $p \gg n$, $K_X = O(1)$, $\sigma = O(1)$ and 
that $\phi_*^2,M_*, c_*, d_*$ are all behaving like $\asymp O(1)$ (bounded
away from zero and bounded above by a fixed constant). We
then distinguish two cases, namely where $r(\alpha^0) \asymp \delta \asymp
O(1)$ and where $r(\alpha^0) \asymp \delta = o(1)$. 

For $r(\alpha^0) \asymp O(1)$ (bounded away from zero), saying that the change
points are well separated 
and there are only finitely many of them, we obtain $\lambda \asymp
\sqrt{\log(p)/n}$, $\lambda_1 \asymp \sqrt{\log(p)/n}$ and we thus require 
that the sparsity $s_* = O(\sqrt{n/\log(p)})$. This is a rather standard
assumption for establishing an oracle inequality with $\ell_1$-norm control
over the estimated parameter (as in statement (3)), see
\cite[Th.6.2-6.3]{peter-sara-book}. We
then obtain the following convergence rates which are analogous as for the
Lasso in a standard high-dimensional sparse linear model:
\begin{eqnarray*}
& &\|\hat{\alpha} - \alpha^0\|_1 = 
O_P(\sqrt{\log(p)/n}),\\
& &\sum_{j=1}^{k_0} \|\bX_{I_j(\hat\alpha)}(\hat\beta^{(j)} - \beta^0(j))
    \|_2^2/n = O_P(s_* \log(p)/n),\\ 
& &\sum_{j=1}^{k_0} \|\hat{\beta}^{(j)}
  -\beta^0(j)\|_1 = O_P(s_* \sqrt{\log(p)/n}).
\end{eqnarray*}
For $r(\alpha^0) \asymp \delta = o(1)$, the conditions require that
$\delta^{-1} = 
O(\sqrt{n/\log(p)})$, i.e., $\delta$ cannot converge faster to zero than
$\sqrt{\log(p)/n}$. In this regime where the change points can be
$O(\sqrt{\log(p)/n})$-dense and where there can be a growing number
thereof, we obtain the results for ``the minimal within segments sample
size'' $\delta n$. That is, $\lambda \asymp O(\sqrt{\log(p)/(\delta n)})$,
and we 
require again that the sparsity $s_* = O(\sqrt{n/\log(p)})$. The
convergence rates 
become
\begin{eqnarray*}
& &\|\hat{\alpha} - \alpha^0\|_1 = 
O_P(\sqrt{\log(p)/n})\ \ \mbox{(independent of $\delta$)},\\
& &\sum_{j=1}^{k_0}
\|\bX_{I_j(\hat\alpha)}(\hat\beta^{(j)} - \beta^0(j)) 
  \|_2^2/n = O_P(s_* k_0 \log(p)/(\delta n)),\\
& &\sum_{j=1}^{k_0}
  \sqrt{r_j(\hat{\alpha})} \|\hat\beta^{(j)} -\beta^0(j)\|_1 = O_P(s_* k_0
  \sqrt{\log(p)/(\delta n)}).
\end{eqnarray*}
One can further distinguish whether $k_0$
  would grow or not, with maximal growth rate of the order $O(\delta^{-1})
  = O(\sqrt{n/\log(p)})$. A most extreme case happens when all the
  change points are equally dense with $k_0 \asymp 
  O(\delta^{-1})$. For the expression $k_0 \sqrt{\log(p)/(\delta n)}$ to
  converge to zero we need that $\delta^{-1} = o((n/\log(p))^{1/3})$, that
  is a somewhat less dense regime, and the sparsity then needs to be of the
  order $s_* = o(\sqrt{\delta^3n/\log(p)}$ to imply that $s_* k_0 \sqrt{\log(p)/(\delta n)}
  = o(1)$). 
We summarize the asymptotic interpretations in Table \ref{tab1}.
\begin{table}[!htb]
\begin{tabular}{l|l|l|l|l}
regime & $\delta \asymp r(\alpha^0)$ & $\lambda$ & $k_0$ & $s_*$ \\
\hline
non-dense & $> O(1)$ & $O(\sqrt{\frac{\log(p)}{n}})$ & $O(1)$ &
                                                        $o(\sqrt{\frac{n}{\log(p)}})$\\
dense, finite $k_0$ & $\gg O(\sqrt{\frac{\log(p)}{n}})$ &
                                                       $O(\sqrt{\frac{\log(p)}{\delta n}})$ & $ O(1)$ &
                                                                 $o(\sqrt{\frac{\delta n}{\log(p)}})$
  \\
equi-dense & $\gg O((\frac{\log(p)}{n})^{1/3})$ & $O(\sqrt{\frac{\log(p)}{\delta n}})$
                                                             &
                                                                     $o((\frac{n}{\log(p)})^{1/3})$
                                      & $o(\sqrt{\frac{\delta^3 n}{\log(p)}})$
 
\end{tabular}
\caption{Different asymptotic regimes such that $s_* k_0 \lambda = o(1)$
  (which ensures convergence to zero for $\sum_{j=1}^{k_0}
  \sqrt{r_j(\hat{\alpha})} \|\hat\beta^{(j)} -\beta^0(j)\|_1$).}\label{tab1}
\end{table}

\subsection{Binary Segmentation algorithm}\label{sec.BS}

For the binary segmentation algorithm we obtain a similar result as for the
global estimator. 
\begin{theorem}\label{modsel2} 
Suppose Assumptions 1-4 hold. Given $t>0$, let $\lambda$, $\delta$, $s_*$
and $\gamma$ satisfy the conditions (1)-(4) in Theorem~\ref{modsel}. 
Then, with probability at least $1-  2/t^2$ we have that
\begin{enumerate}
\item$ \ell(\hat\alpha^{bs}) = k_0$,\\[-10pt]
\item $\|\hat \alpha^{bs}  -\alpha^0 \|_1\,\leq\,\lambda\sqrt\delta/d_*$ and  \\[-10pt]
\item $\sum_{j=1}^{k_0} \left(\|\bX_{I_j(\hat\alpha^{bs})}(\hat\beta^{(j)} - \beta^0(j))
  \|_2^2/n \,+\lambda \sqrt{r_j(\hat\alpha^{bs})}\|\hat\beta^{(j)}
  -\beta^0(j)\|_1 \right) \;
  \leq \; 4c_* k_0\lambda^2 s_*$.
  \end{enumerate}
\end{theorem}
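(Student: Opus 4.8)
The plan is to run the entire argument on the same high-probability event $\mathcal A$ (with $\P(\mathcal A) \geq 1 - 2/t^2$) used in the proof of Theorem~\ref{modsel}, on which the stochastic terms — the noise inner products $\be_{(u,v]}^T \bX_{(u,v]}$ and the induced Lasso oracle bounds — are controlled uniformly over all sub-intervals $(u,v]$ with $un,vn\in\N$. Since Theorem~\ref{modsel2} is stated under exactly the conditions (1)--(4) of Theorem~\ref{modsel}, I would not re-derive any of the concentration or compatibility estimates; instead I would isolate the purely deterministic, combinatorial difference between the global optimizer and binary segmentation and show it is harmless on $\mathcal A$. Concretely, I would prove by induction on the depth of the binary tree $T$ that every terminal node lies within $\lambda\sqrt\delta/d_*$ of a distinct true change point and that every true change point is detected; this yields conclusions (1) and (2) simultaneously, after which conclusion (3) follows from the per-segment Lasso oracle inequality already established for Theorem~\ref{modsel}, applied to the intervals $I_j(\hat\alpha^{bs})$.

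The engine of the induction is a single-split detection lemma for the criterion $h(u,v)$ in \eqref{h}. For an interval $(u,v]$ whose endpoints are within $\lambda\sqrt\delta/d_*$ of true change points (or equal to $0$ or $1$) and which contains at least one true change point in its interior, I would show that on $\mathcal A$ the minimizer $s=h(u,v)$ satisfies $|s - \alpha^0_\ell| \leq \lambda\sqrt\delta/d_*$ for some $\ell$. This is the binary-segmentation analogue of the one-change-point argument: splitting at a true change point reduces the fitted residual sum of squares by an amount bounded below through the identifiability gap $m_*$ of Assumption~4 (after accounting for the Lasso estimation error of order $c_*\lambda^2 s_*$), while splitting at a point far from every true change point leaves a constant-sized approximation bias that the loss cannot absorb. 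Condition (4), $\gamma > 6 c_* \lambda^2 s_*$, is exactly what makes this detection gain exceed the added penalty, and conditions (1)--(2) guarantee that the relevant segment is long enough ($\geq \delta$) for the compatibility-based bound to apply. The complementary \emph{no-spurious-split} direction — that on a segment carrying no true change point the algorithm returns $s=u$ and stops — uses the same penalty inequality in reverse: the most the loss can drop by introducing a split is a sum of two Lasso errors, again of order $c_*\lambda^2 s_*$, which is dominated by $\gamma$, so that $\ell(\hat\alpha^{bs}) = k_0$ is forced.

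The main obstacle is the greedy, sequential character of the recursion: after the first split lands near some $\alpha^0_\ell$, each child interval inherits an endpoint displaced from the truth by up to $\lambda\sqrt\delta/d_*$, and may still contain several remaining true change points. I would therefore need the detection lemma to hold not for the exact blocks $(\alpha^0_{i},\alpha^0_{j}]$ but for their $\lambda\sqrt\delta/d_*$-perturbations. The delicate point is that this error must not accumulate across the $O(k_0)$ levels of the recursion: because each detected point is pinned to within $\lambda\sqrt\delta/d_*$ of a \emph{true} change point rather than of the previously estimated one, the displacement is reset at every node instead of compounding, and condition (1), $\delta + \lambda\sqrt\delta/d_* < r(\alpha^0)$, ensures that a perturbed segment still separates neighbouring true change points with a margin exceeding the tolerance. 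Verifying that the identifiability quantity in Assumption~4, phrased through the weights $\gamma(i,r,j)$, degrades continuously rather than catastrophically under these $O(\lambda\sqrt\delta)$ endpoint shifts is the one place where estimates genuinely beyond those in Theorem~\ref{modsel} are required, and is where I expect the bulk of the technical work to lie.
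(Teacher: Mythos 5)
Your proposal follows essentially the same route as the paper's proof: work on the event $\T_0\cap\T_1$ from Theorem~\ref{modsel}, show that each split $h(u,v)$ either lands within $\lambda\sqrt\delta/d_*$ of a true change point or does not occur when no change point is present (via the same case (a)/(b1)/(b2) comparisons, which apply because $G((0,u,1])=H(0,u)+H(u,1)$), then recurse over the tree, with conclusion (3) following from Lemma~\ref{lem4} on the final segments. The only place you overestimate the difficulty is the perturbed-endpoint issue: the paper's Lemma~\ref{lem1} and Lemma~\ref{lem4} (with its constant $c(r)$, $r=\1\{u<\alpha^0_{j-1}\}+\1\{\alpha^0_{j}<v\}$) are already stated uniformly over intervals whose endpoints miss the true change points, so no estimates beyond those of Theorem~\ref{modsel} are needed, which is exactly why the paper can simply ``replicate the same argument'' on the sub-intervals under the spacing condition $\delta \leq r(\alpha^0)-2\lambda\sqrt\delta/d_*$.
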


We note that the conditions and statements in Theorem~\ref{modsel2} for
the binary segmentation algorithm are the same as for the global estimator
in Theorem~\ref{modsel}.



\section{Simulation study}\label{algsec}

We evaluate here the performance of the global change point estimator
computed with the
dynamic programming algorithm (DPA) and of the binary segmentation 
algorithm (BSA). In the simulations we considered a two segments model,
with $\alpha^0=(0,0.5,1)$, $\beta^0(1)=(1,1,0,\dotsc,0)$,
$\beta^0(2)=(0,\dotsc,0,1,1)$, and a three segments model with
$\alpha^0=(0,0.3,0.7,1)$, $\beta^0(1)=(1,1,0,\dotsc,0)$,
$\beta^0(2)=(0,\dotsc,0,1,1)$ and $\beta^0(3)=\beta^0(1)$. For both cases
we use the standard deviation of the error $\sigma=1$ and
$X\sim\Nor(0,\Sigma)$, for different structures of $\Sigma$: 
\begin{enumerate}
\item $\Sigma_{ij}=\1_{\{i=j\}}$ for all $i,j$ (the identity matrix); 
\item  $\Sigma_{ij}=0.8^{|i-j|}$ for all $i,j$ (Toeplitz matrix);
\item  $\Sigma_{ij}=1 - 0.8\cdot \1_{\{i\neq j\}}$ for all $i,j$
  (equi-correlation). 
\end{enumerate}
We consider a range of sample sizes and taking as number of  
covariates $p=2n$. For all the simulation results, we always used the
tuning parameters values $\delta=0.25$, 
  $\lambda=\sqrt{\log(p/(\delta n)}$ and $\gamma  =0.25\lambda$ without
  further fine-tuning (for both algorithms). For the computations we used the
  {\tt R} software and  
the package \emph{glmnet} \cite{glmnet} to fit the parameters in each segment.  

The results of the methods are shown in Figures
\ref{fig:sim1}--\ref{fig:sim3}.  
For each sample size we construct boxplots of the
first change point fraction $\hat\alpha_1$ for 100 replications (when
$\ell(\hat{\alpha})=1$ the first change point was treated as missing
value). We also computed the proportion of $\ell(\hat{\alpha})$ in the 100
replications, to illustrate the performance in estimating the
number of segments. 
\begin{figure}[!htb]
\includegraphics[scale=0.35,trim=1cm 1cm 0 0,clip=true]
{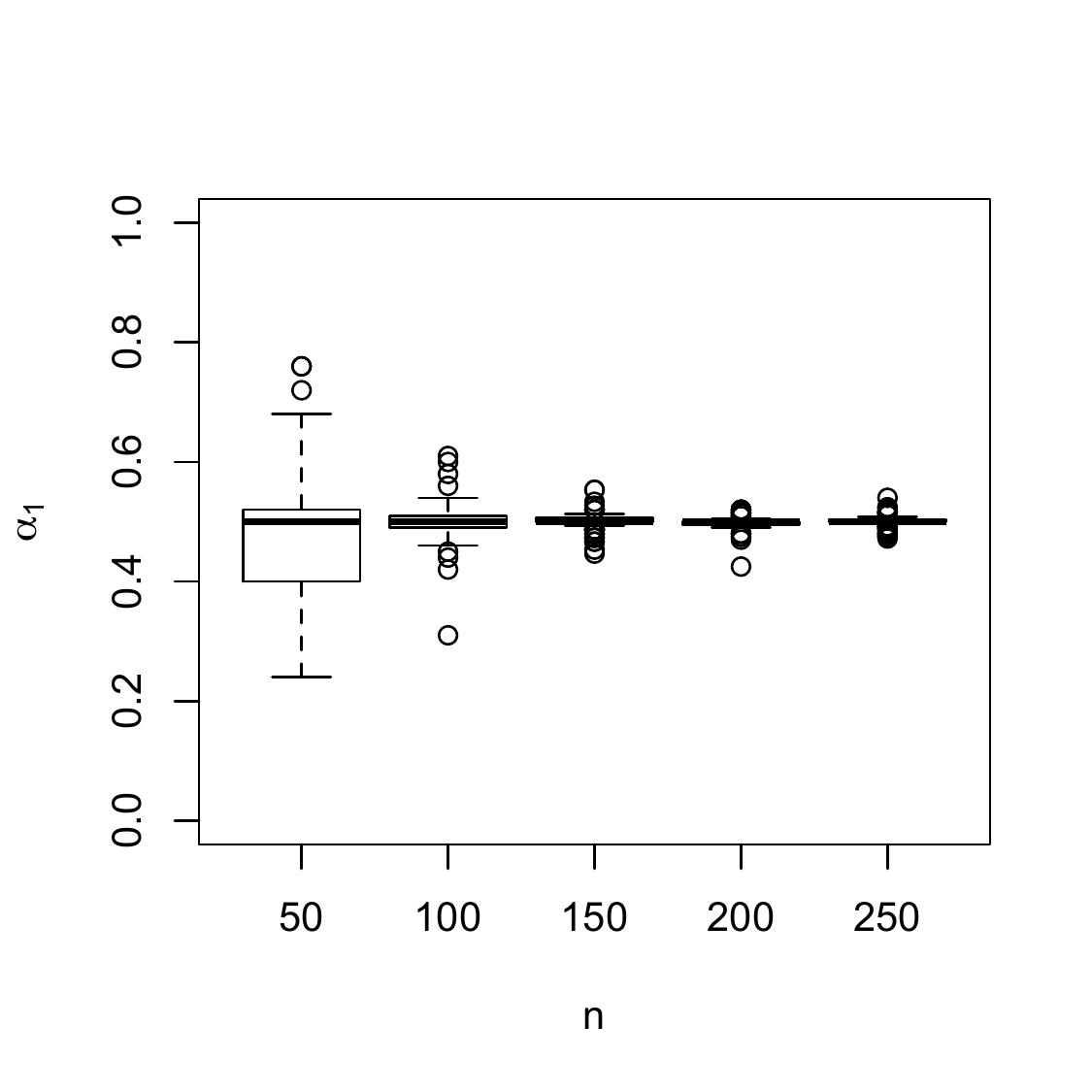}
\includegraphics[scale=0.35,trim=1cm 1cm 0 0,clip=true]
{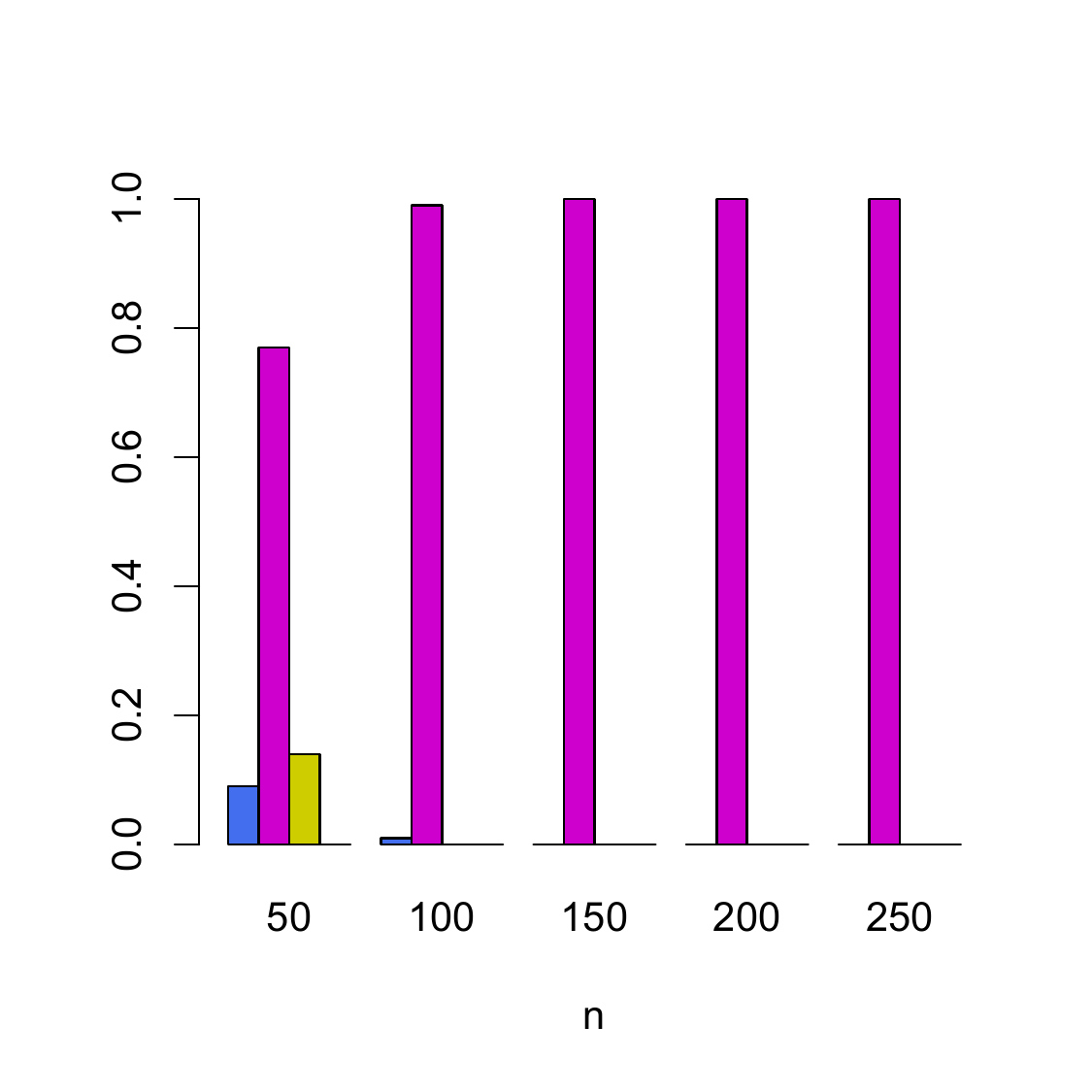}
\includegraphics[scale=0.35,trim=1cm 1cm 0 0,clip=true]
{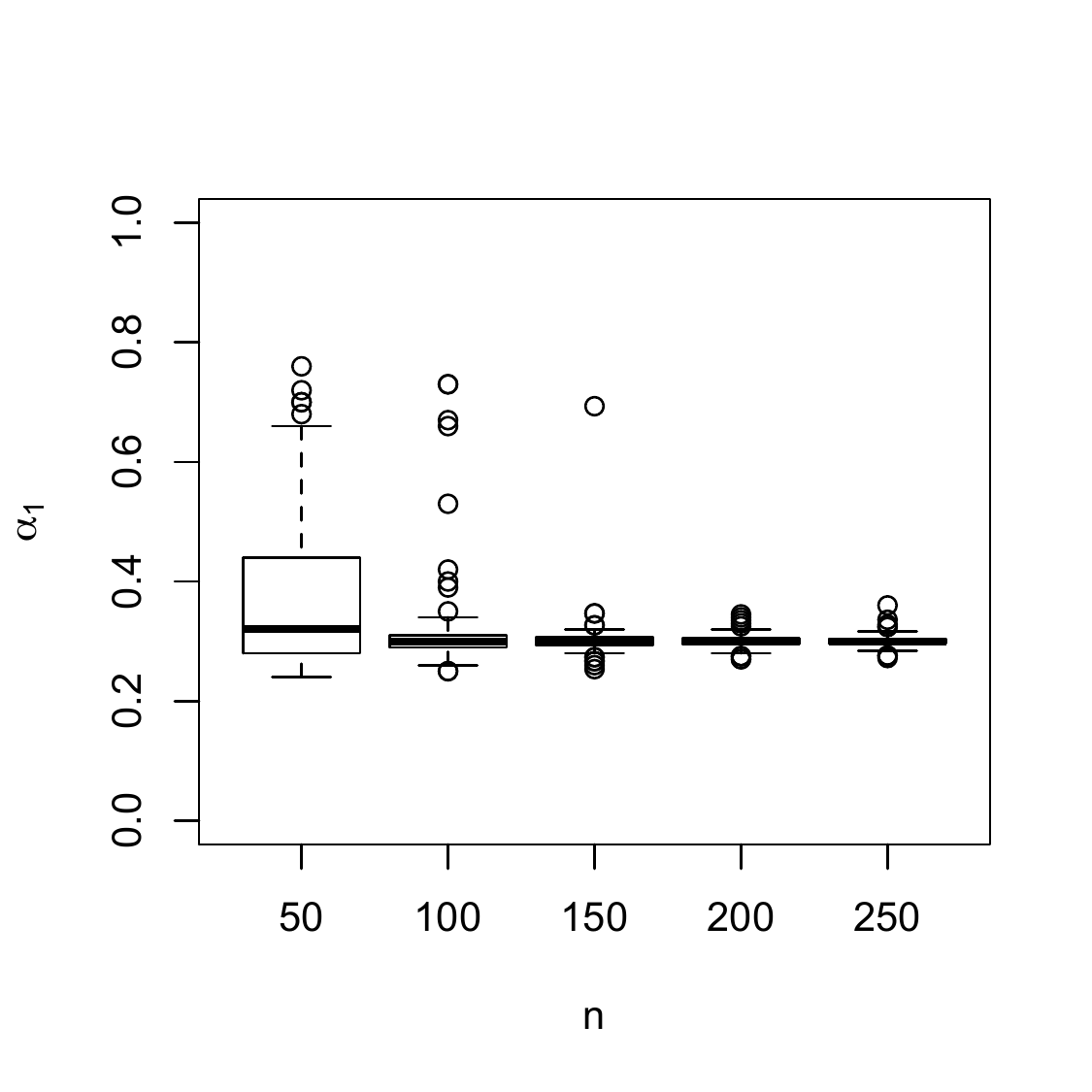}
\includegraphics[scale=0.35,trim=1cm 1cm 0 0,clip=true]
{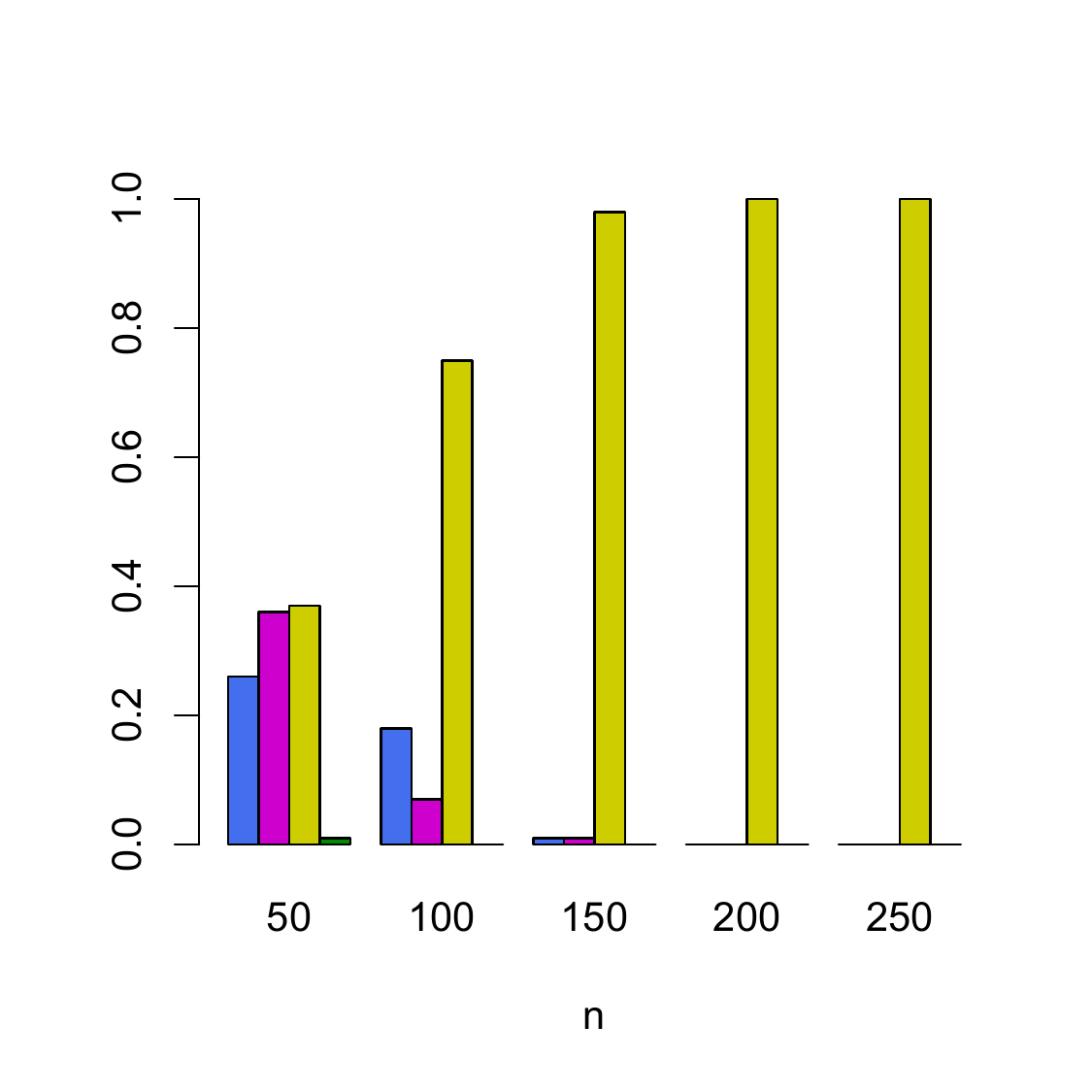}
\\
\includegraphics[scale=0.35,trim=1cm 1cm 0 0,clip=true]
{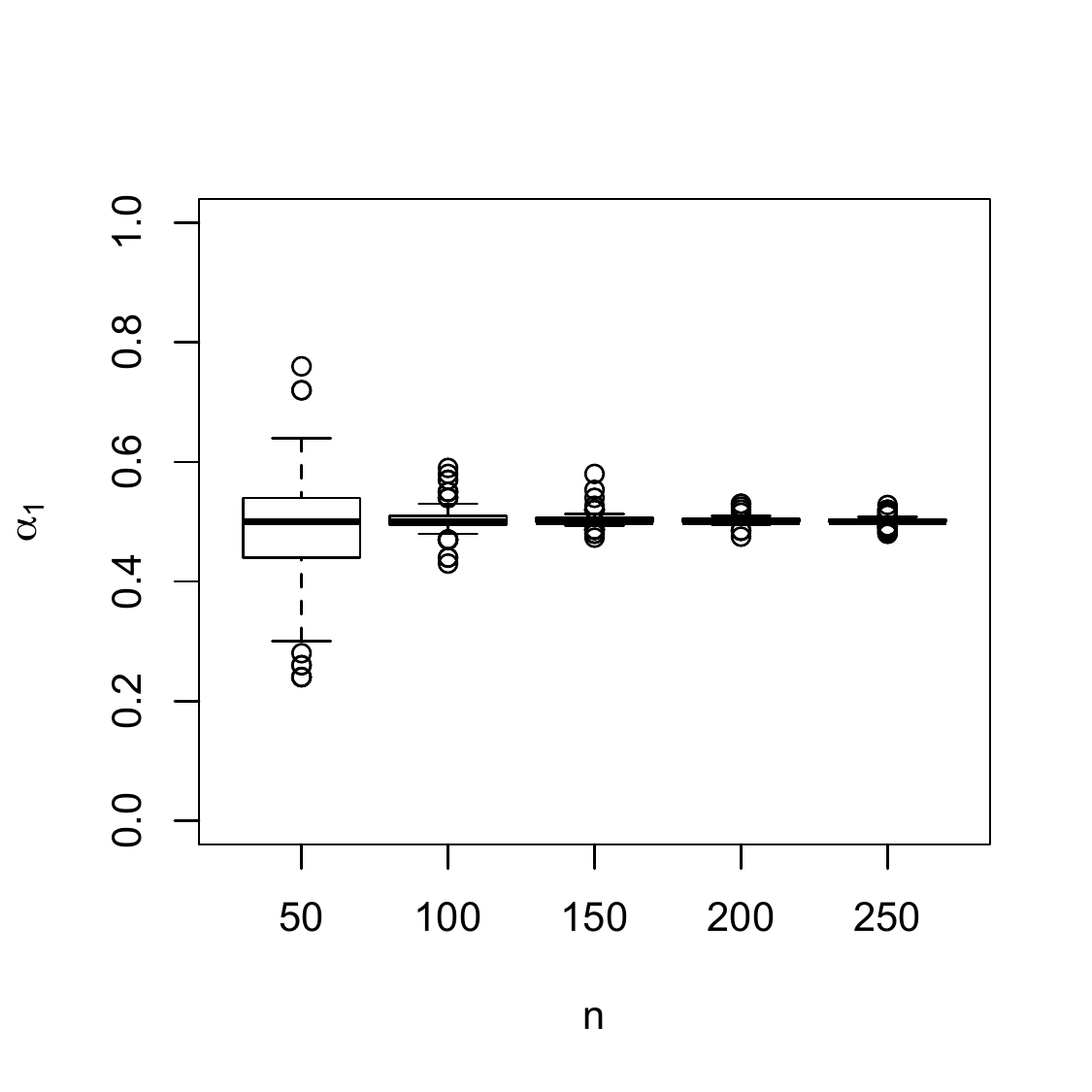}
\includegraphics[scale=0.35,trim=1cm 1cm 0 0,clip=true]
{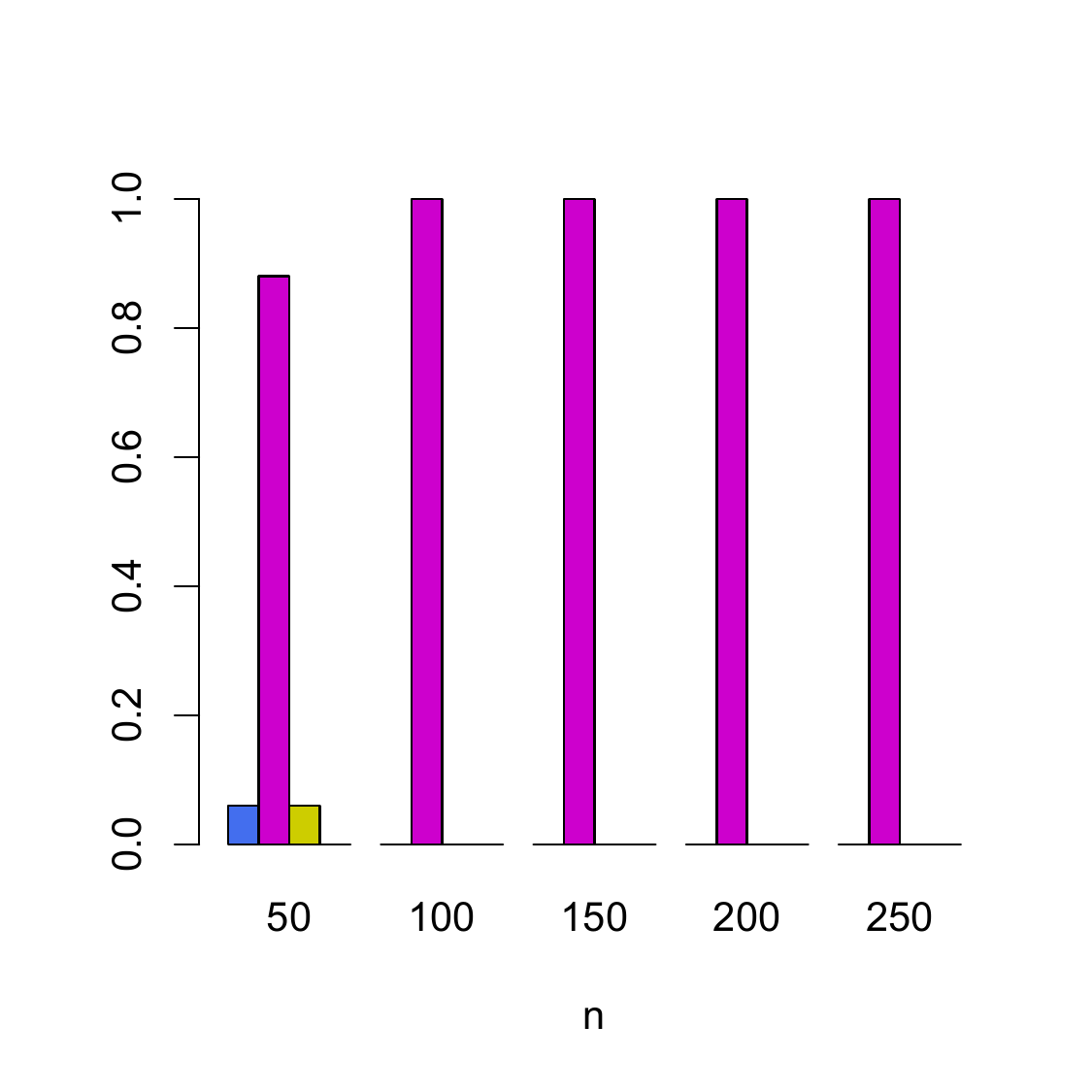}
\includegraphics[scale=0.35,trim=1cm 1cm 0 0,clip=true]
{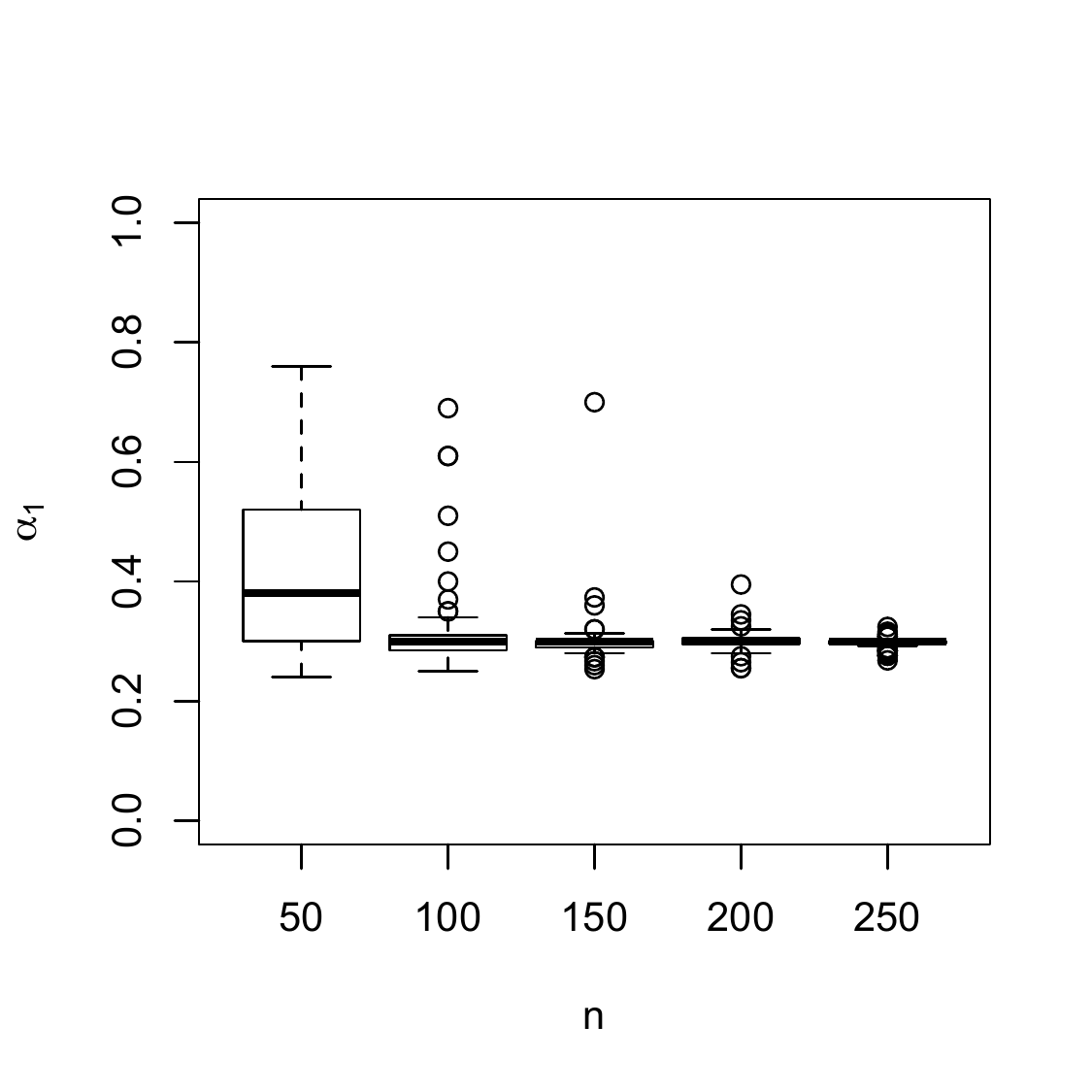}
\includegraphics[scale=0.35,trim=1cm 1cm 0 0,clip=true]
{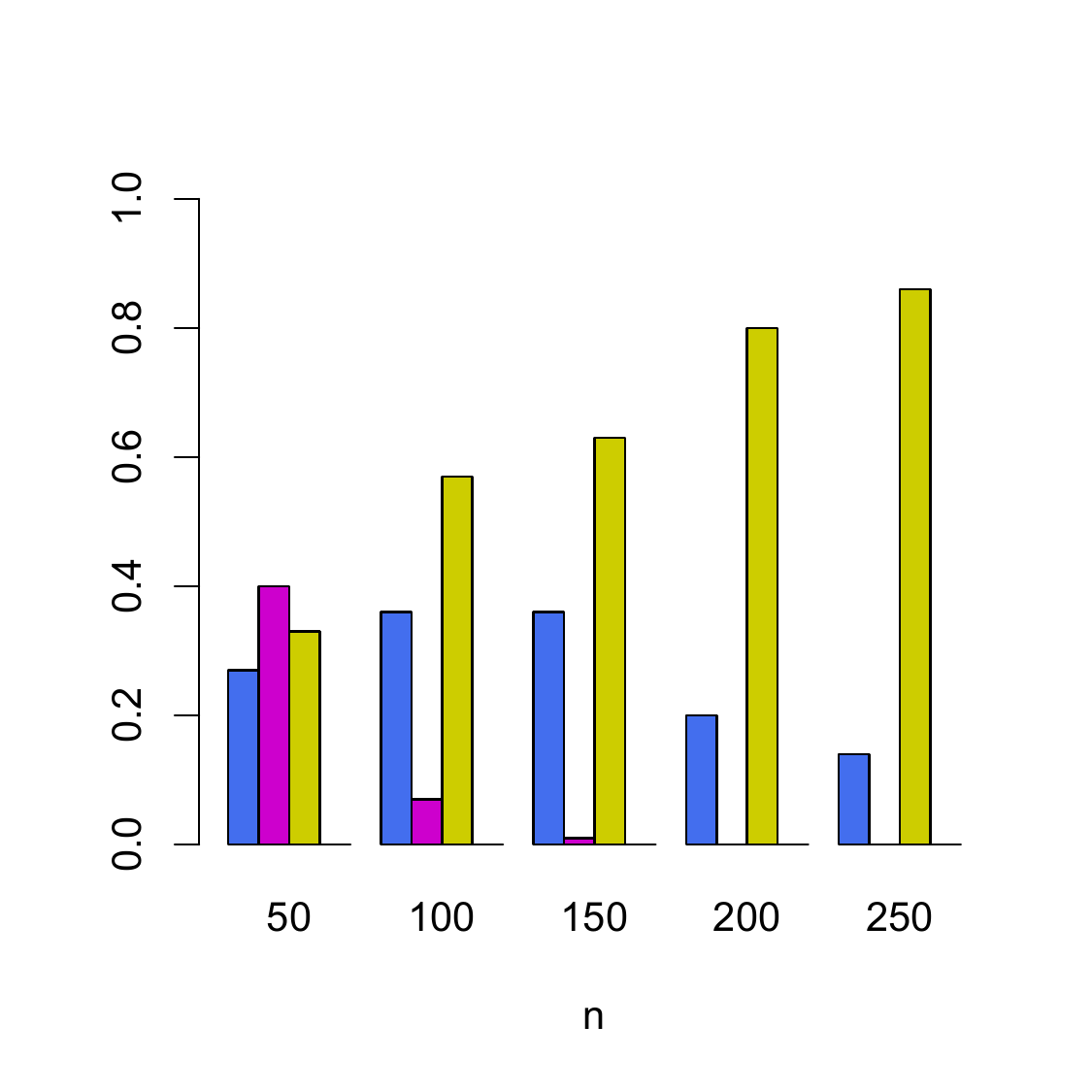}
\caption{First estimated change point fraction $\hat{\alpha}_1$ and number
  of estimated segments $\ell(\hat{\alpha})$, as a function of sample size
  $n$ and $p = 2n$. Model (1) for covariance structure
  $\Sigma_{ij}=\1_{\{i=j\}}$ for all $i,j$. Top: global estimator
  \eqref{hatalpha} using DPA; Bottom: BS-algorithm. Left two panels: two
  segments model with 
  $\alpha^0=(0,0.5,1)$; Right two panels: three segments model with
  $\alpha^0=(0,0.3,0.7,1)$. The barplots correspond to the relative
  frequencies that the algorithm gave a estimated single segment model
  (blue), a two 
  segments model (magenta), a three segments model (yellow) or a four or
  more  segments model (green).} 
\label{fig:sim1}
\end{figure}
\begin{figure}
\includegraphics[scale=0.35,trim=1cm 1cm 0 0,clip=true]
{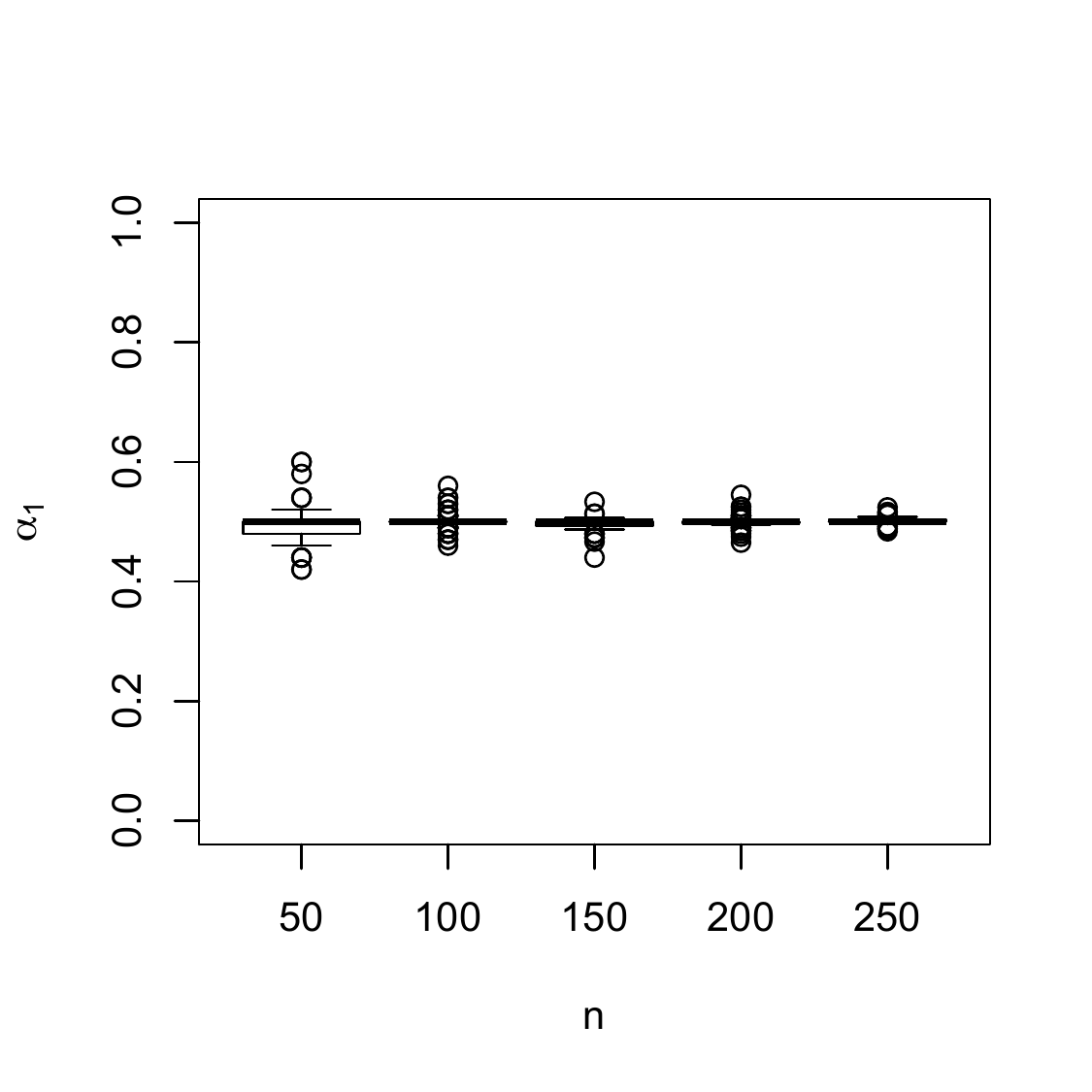}
\includegraphics[scale=0.35,trim=1cm 1cm 0 0,clip=true]
{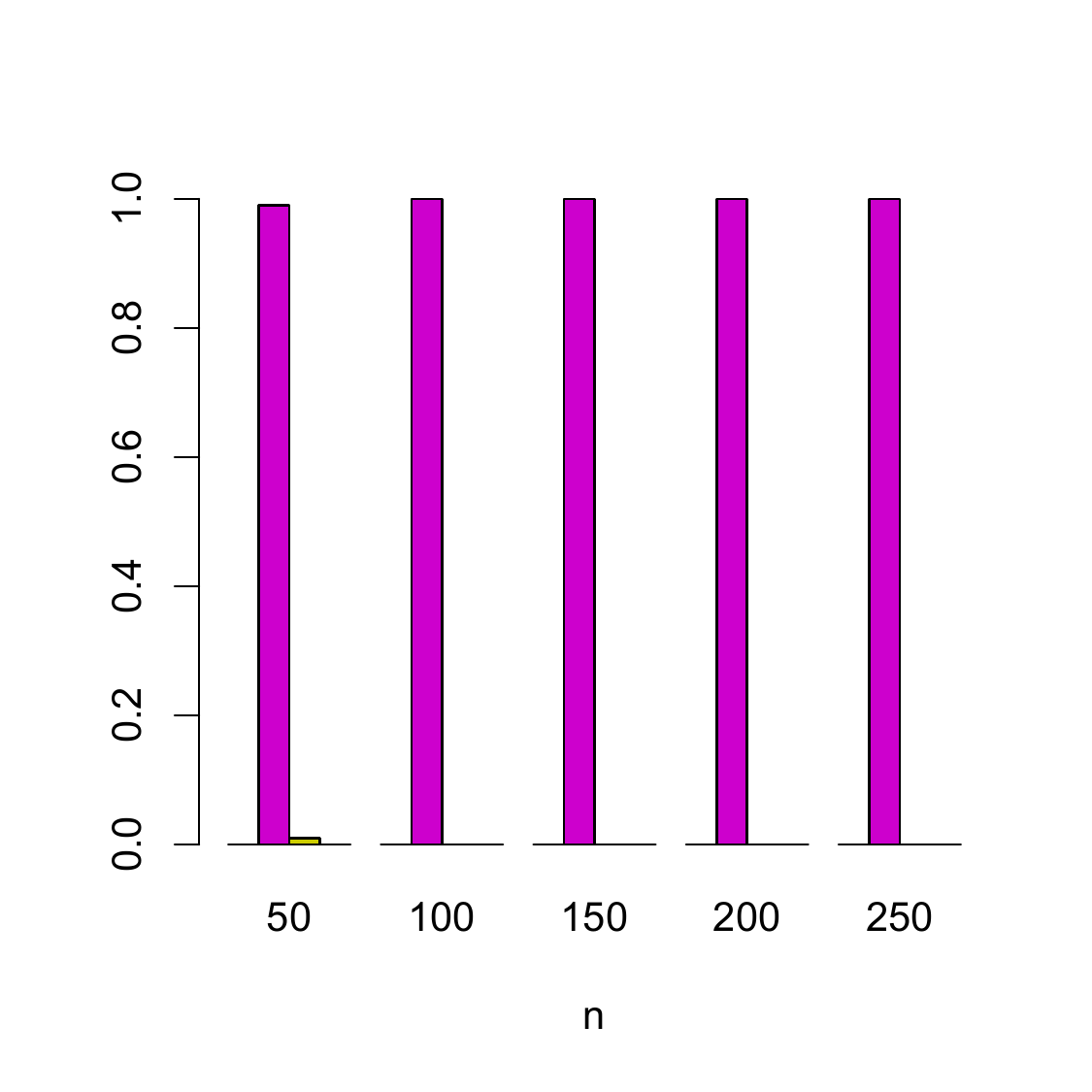}
\includegraphics[scale=0.35,trim=1cm 1cm 0 0,clip=true]
{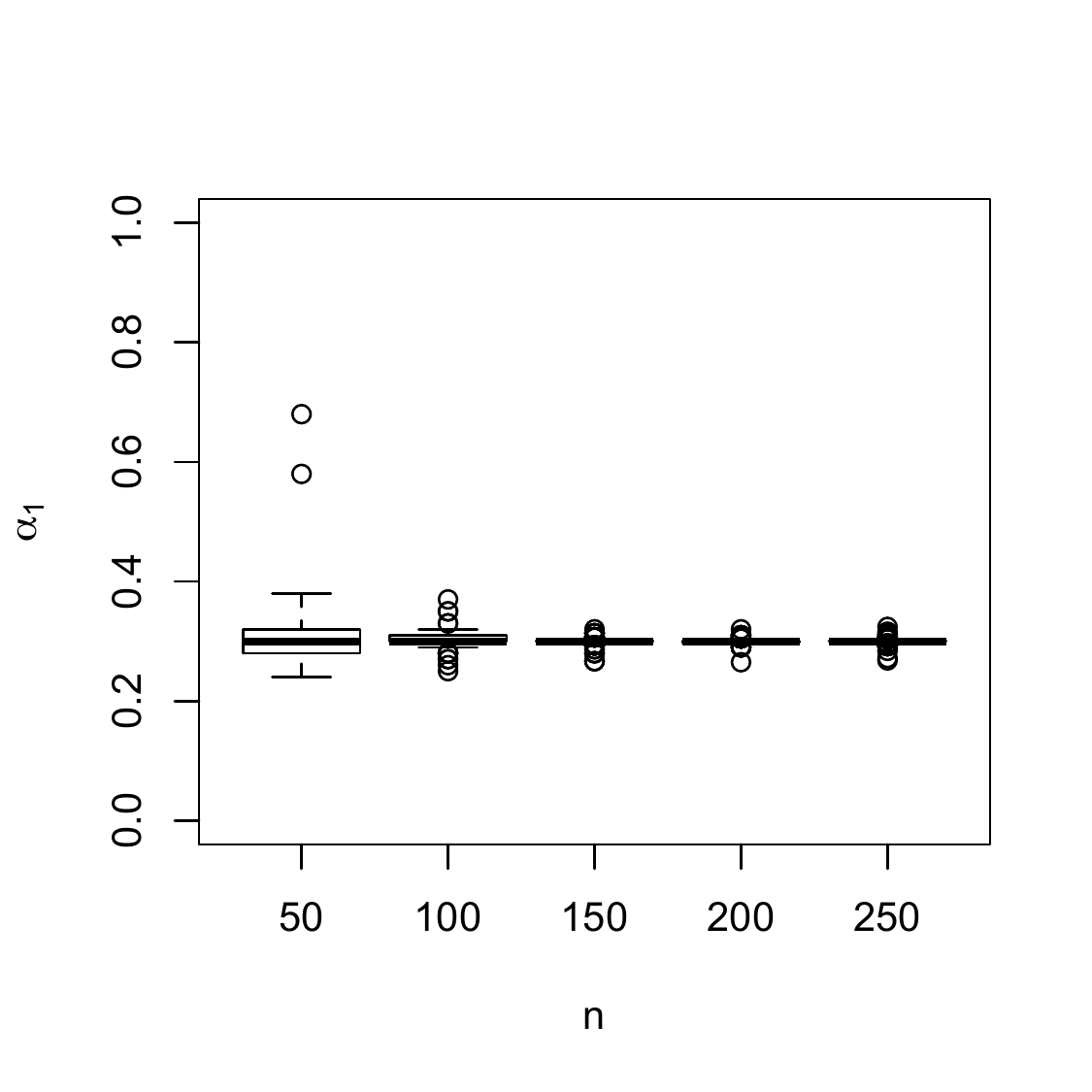}
\includegraphics[scale=0.35,trim=1cm 1cm 0 0,clip=true]
{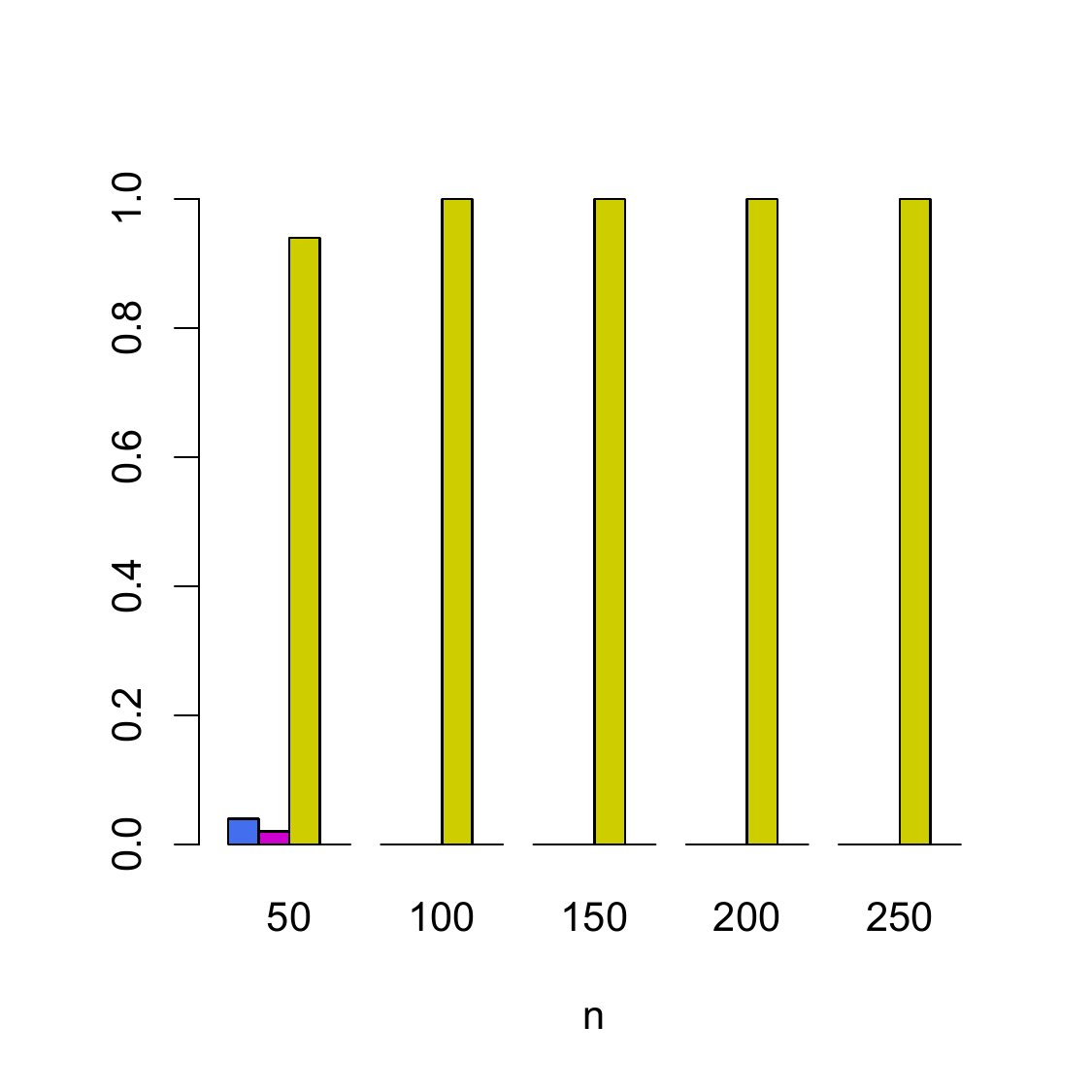}
\\
\includegraphics[scale=0.35,trim=1cm 1cm 0 0,clip=true]
{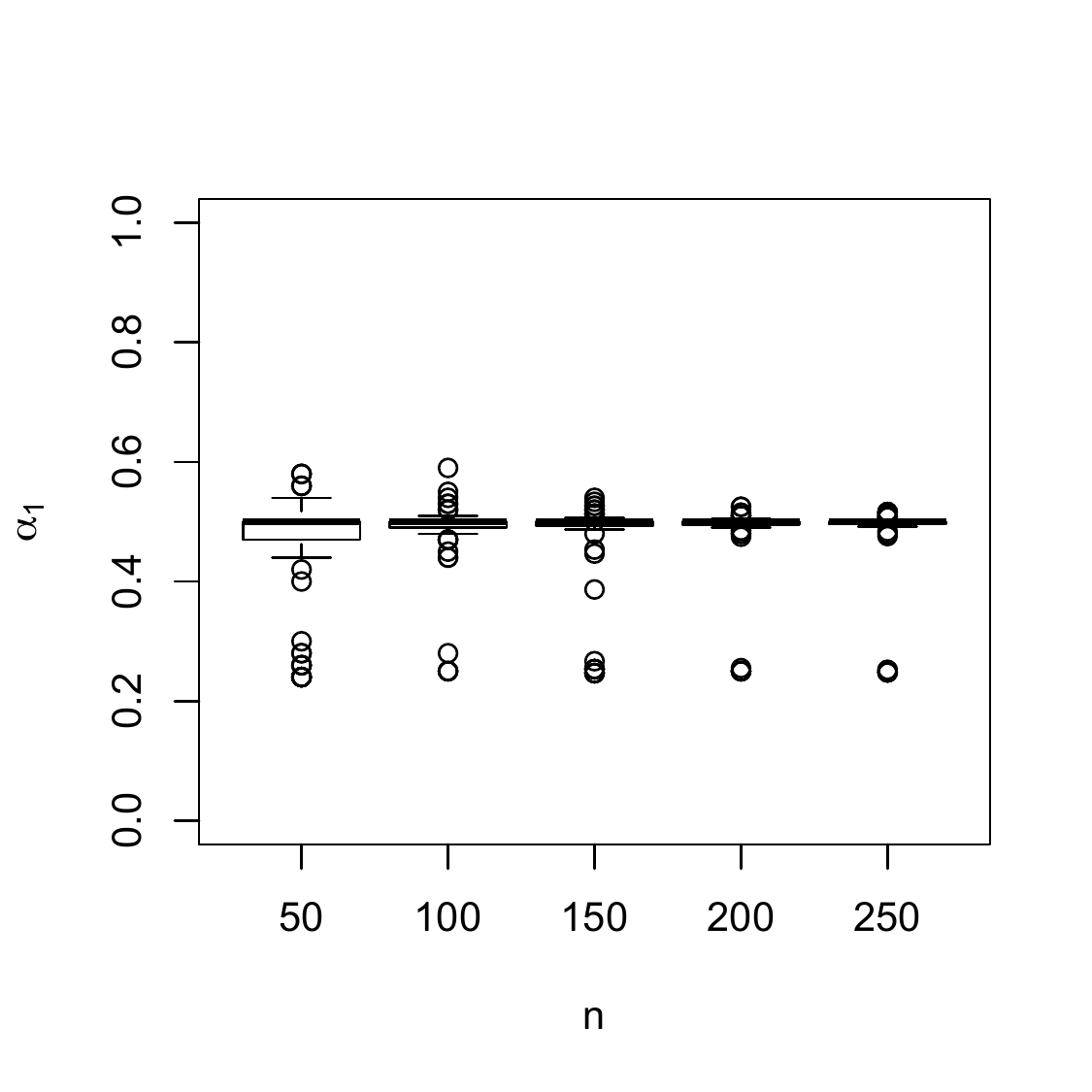}
\includegraphics[scale=0.35,trim=1cm 1cm 0 0,clip=true]
{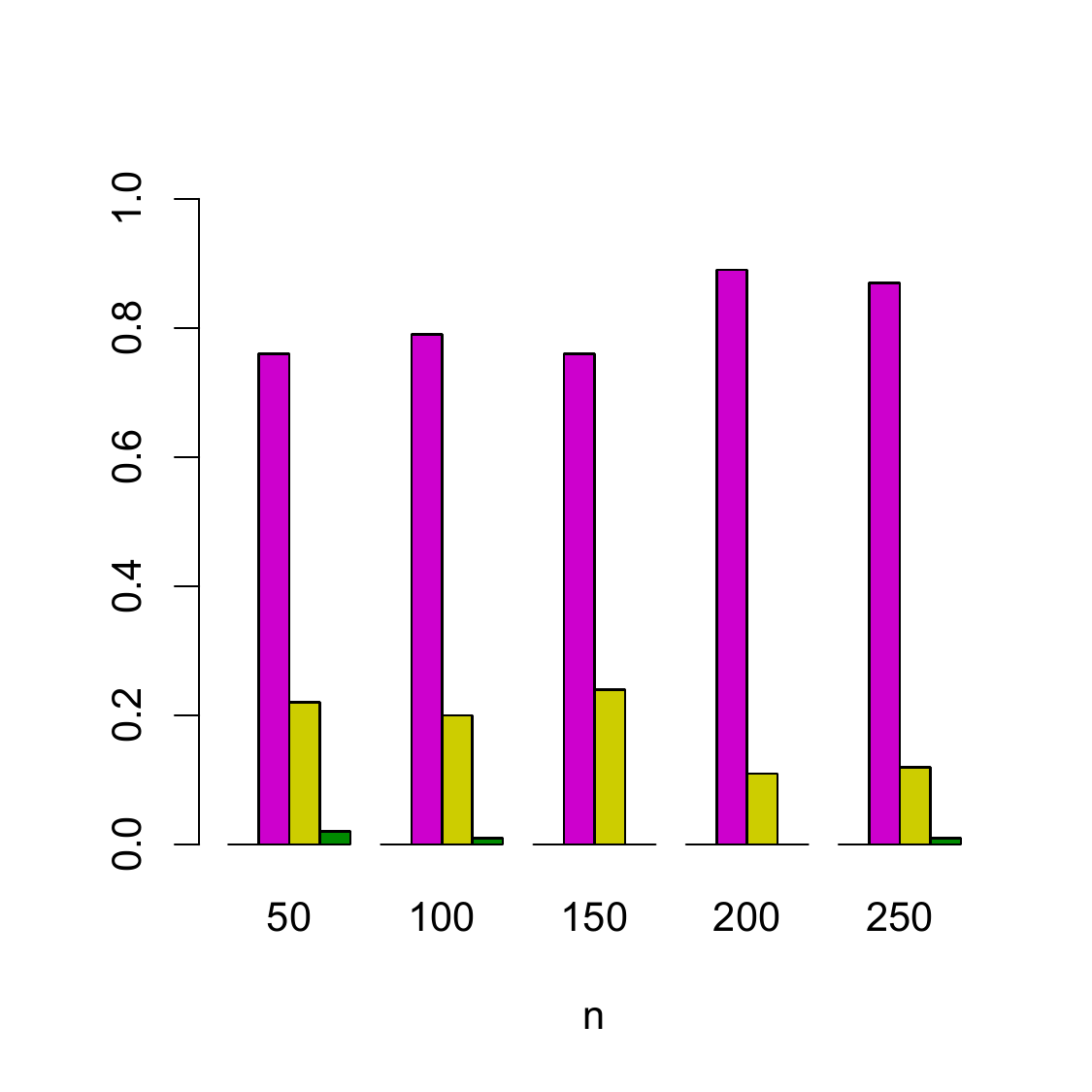}
\includegraphics[scale=0.35,trim=1cm 1cm 0 0,clip=true]
{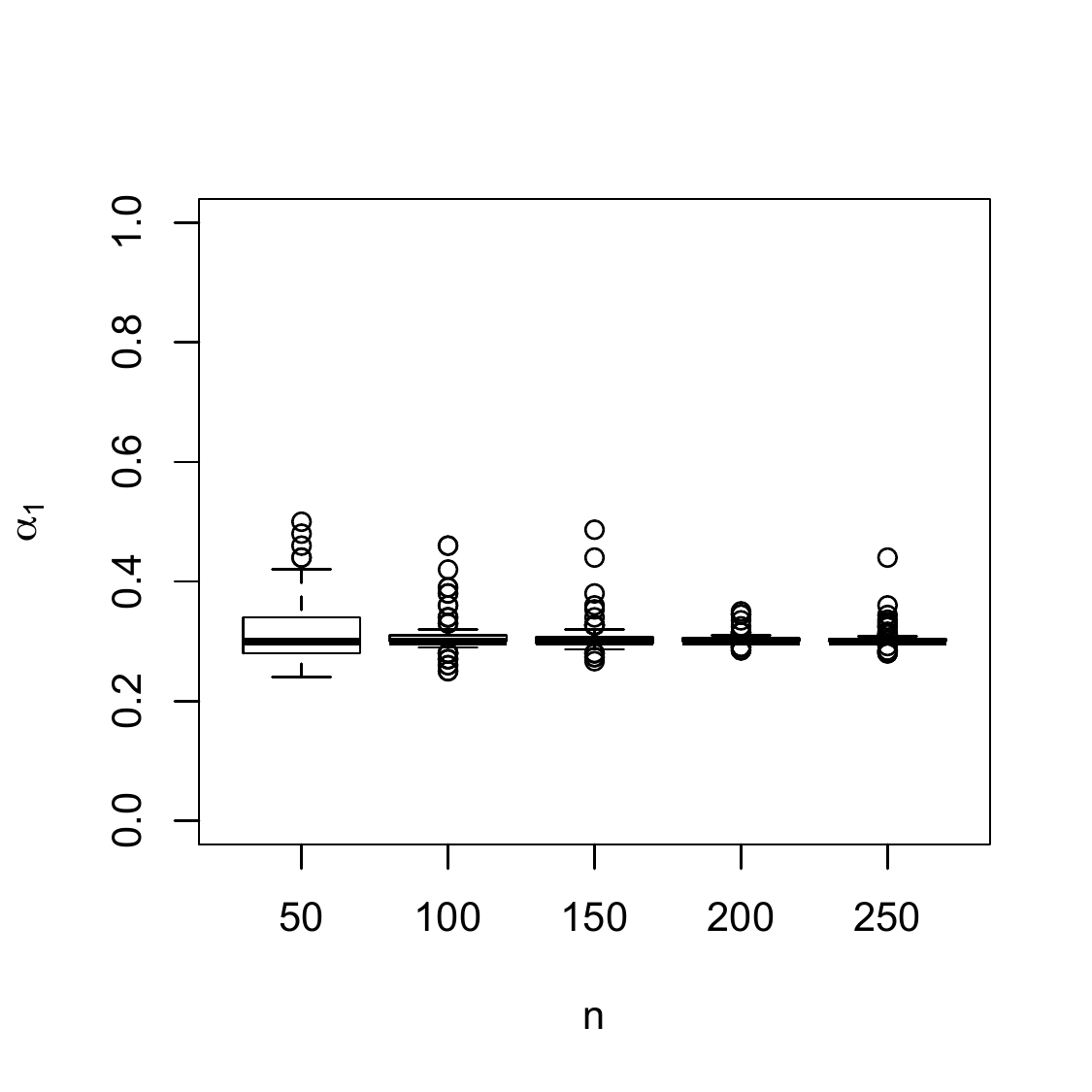}
\includegraphics[scale=0.35,trim=1cm 1cm 0 0,clip=true]
{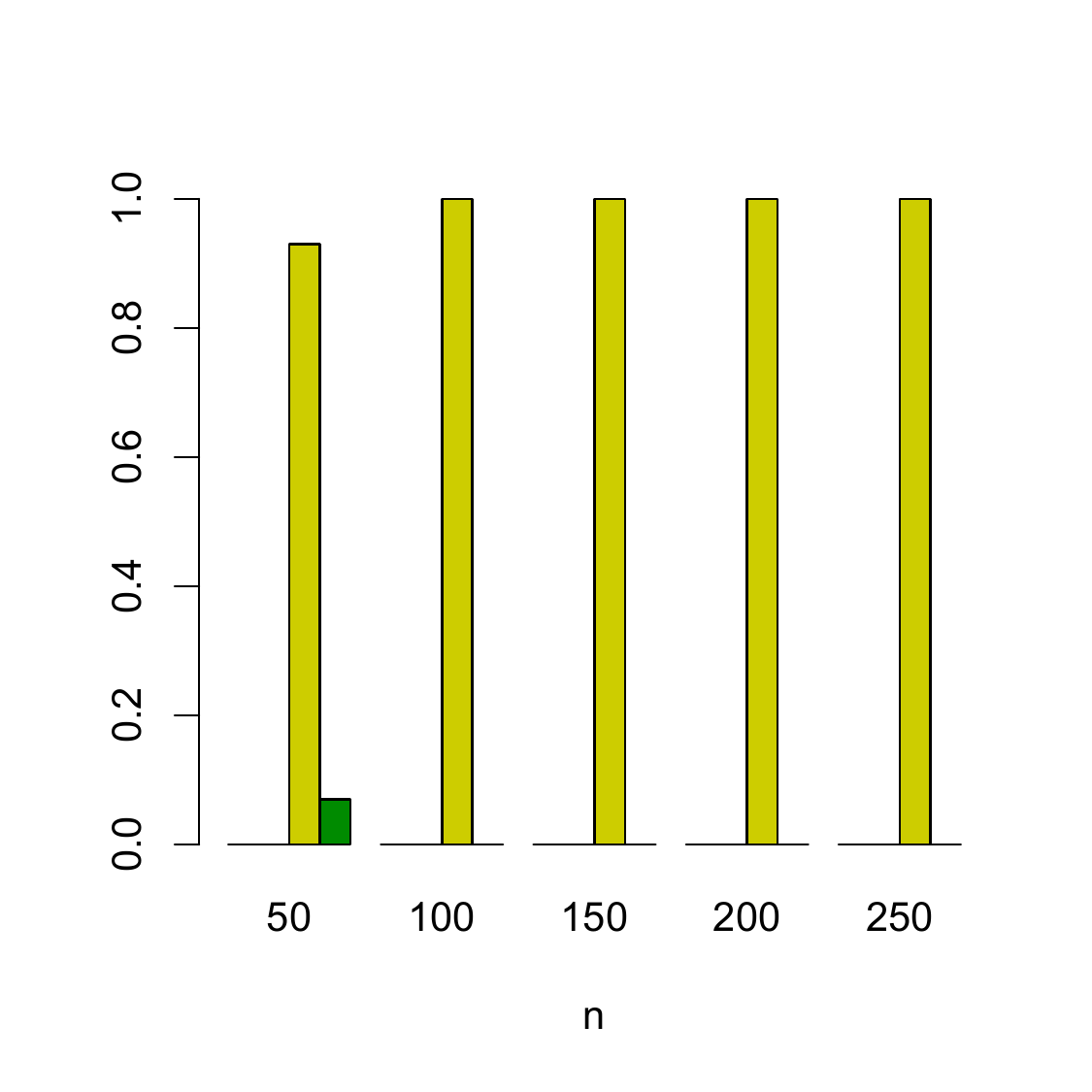}
\caption{As in Figure~\ref{fig:sim1}, but with model (2) for
  covariance matrix $\Sigma_{ij}=0.8^{|i-j|}$ for all $i,j$. Top: global estimator
  \eqref{hatalpha} using DPA; Bottom: BS-algorithm. Left two panels: two
  segments model; Right two panels: three segments model. The
  barplots correspond to the relative
  frequencies that the algorithm gave a
  estimated single segment model (blue), a two segments model (magenta), a
  three segments model (yellow) or a four or more segments model (green).} 
\label{fig:sim2}
\end{figure}
\begin{figure}
\includegraphics[scale=0.35,trim=1cm 1cm 0 0,clip=true]
{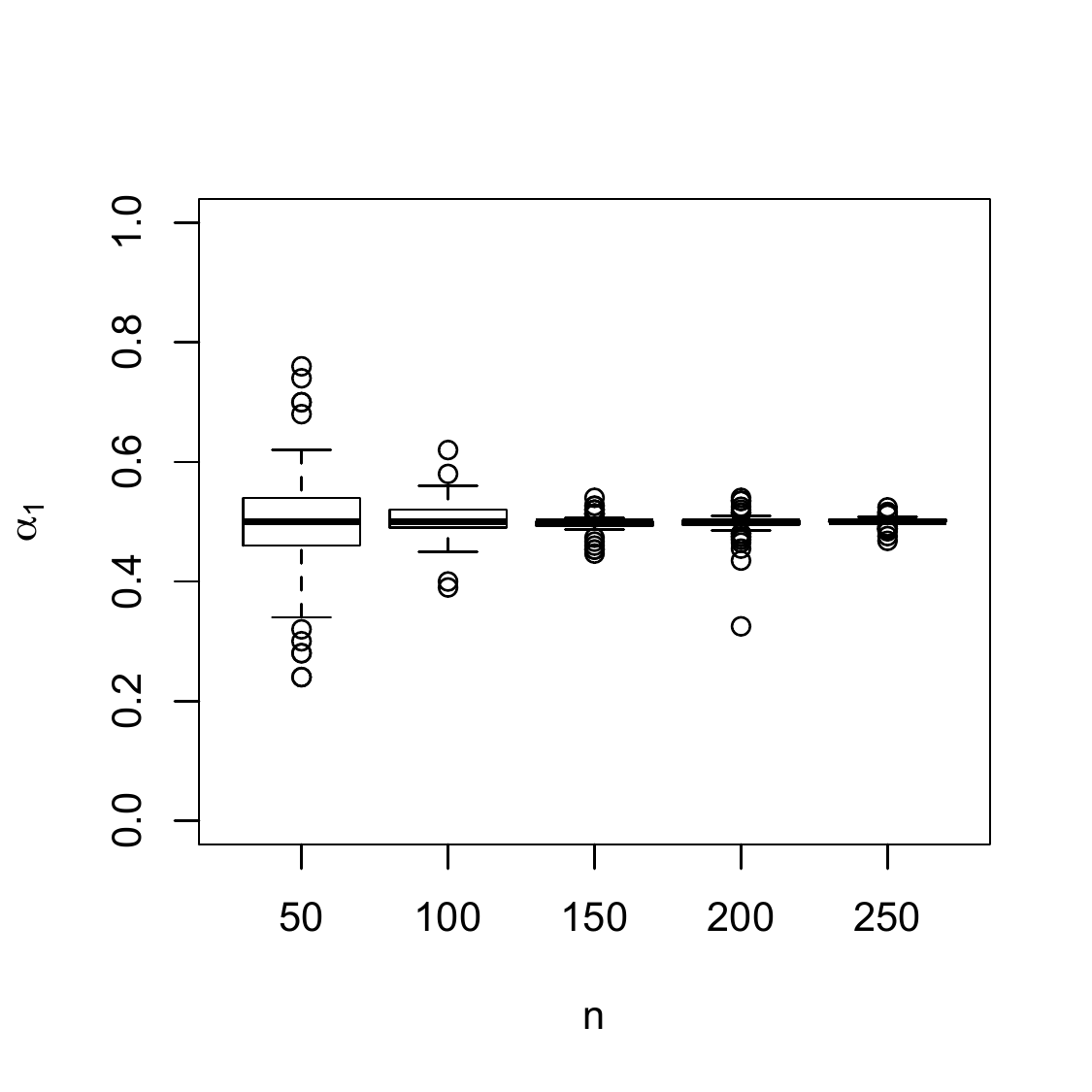}
\includegraphics[scale=0.35,trim=1cm 1cm 0 0,clip=true]
{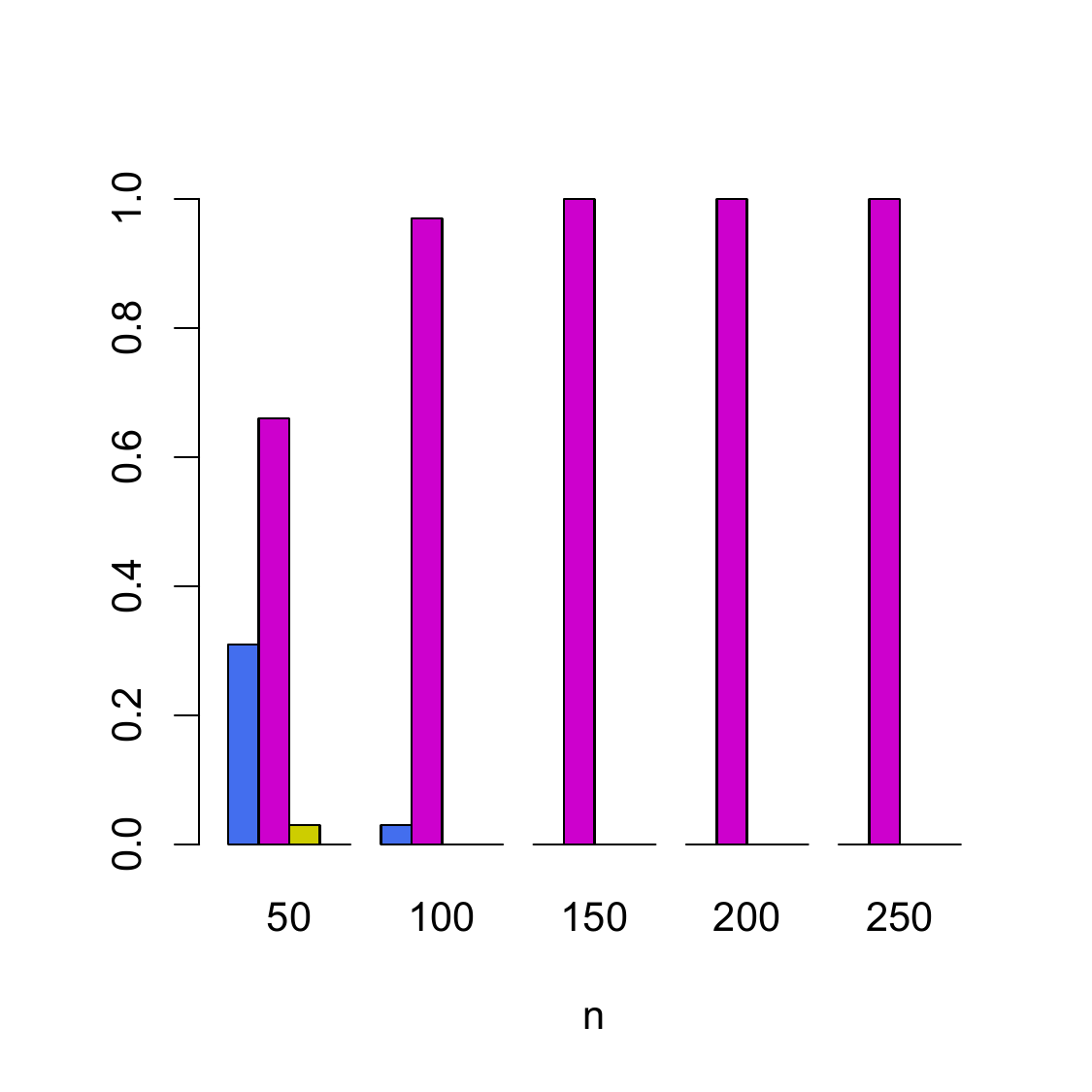}
\includegraphics[scale=0.35,trim=1cm 1cm 0 0,clip=true]
{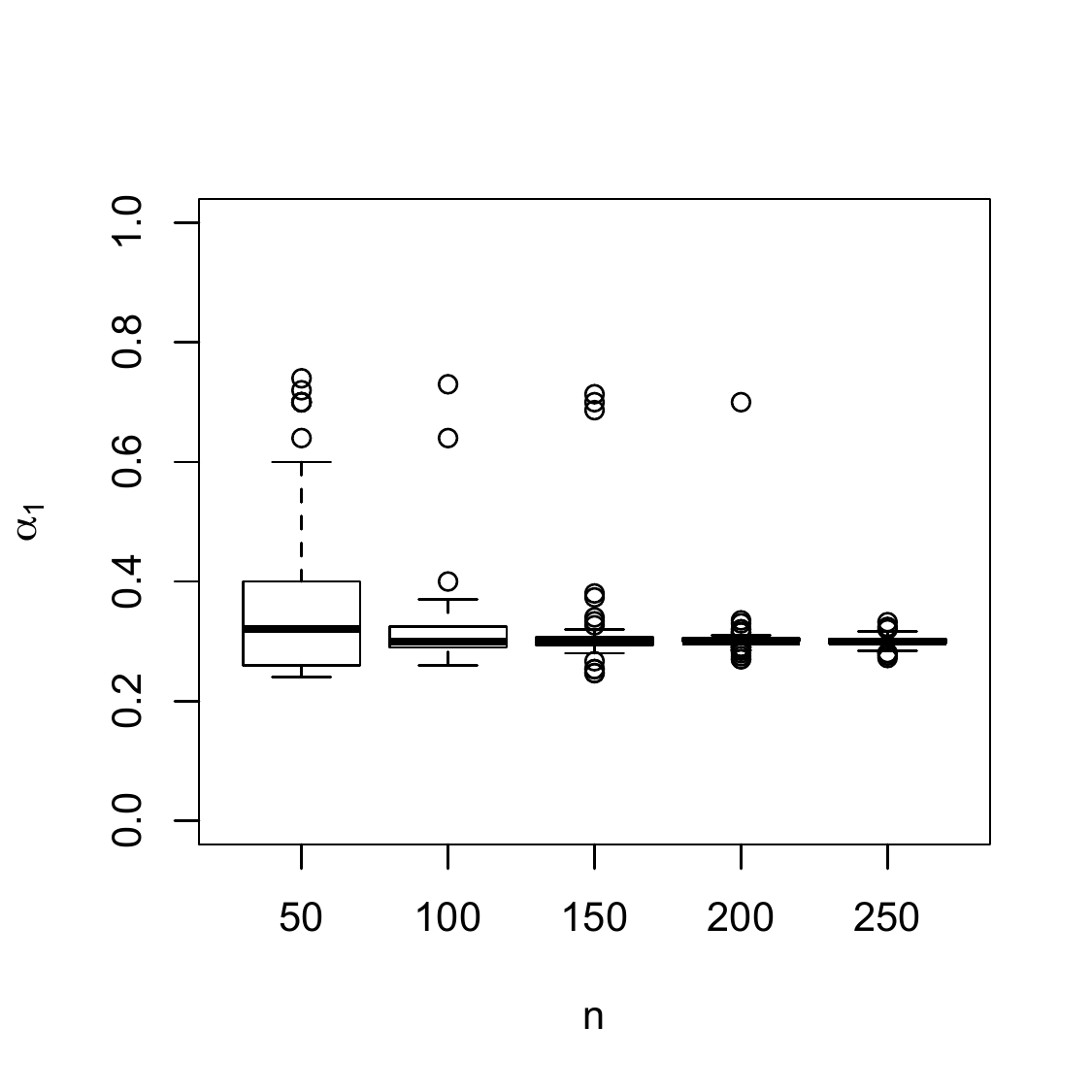}
\includegraphics[scale=0.35,trim=1cm 1cm 0 0,clip=true]
{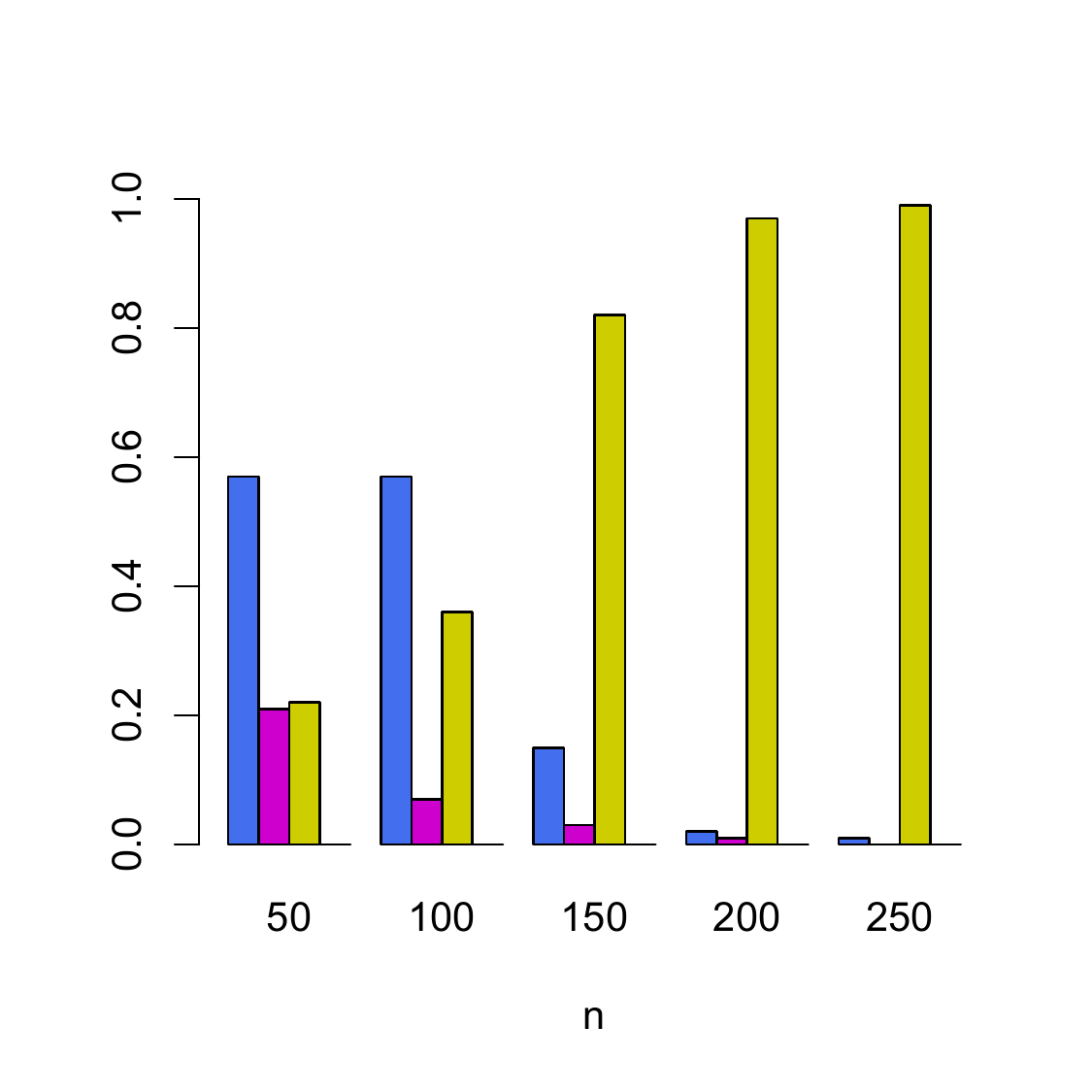}
\\
\includegraphics[scale=0.35,trim=1cm 1cm 0 0,clip=true]
{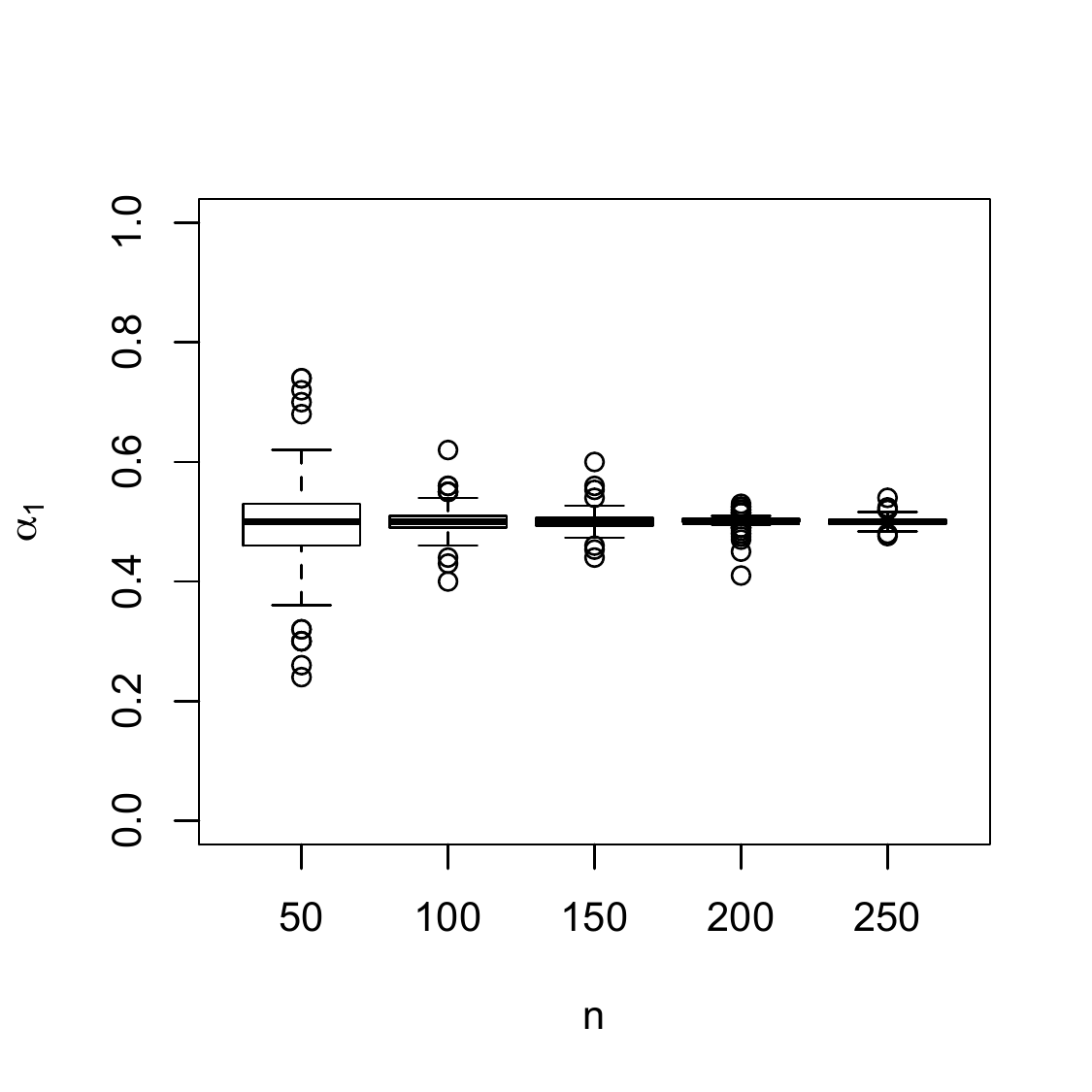}
\includegraphics[scale=0.35,trim=1cm 1cm 0 0,clip=true]
{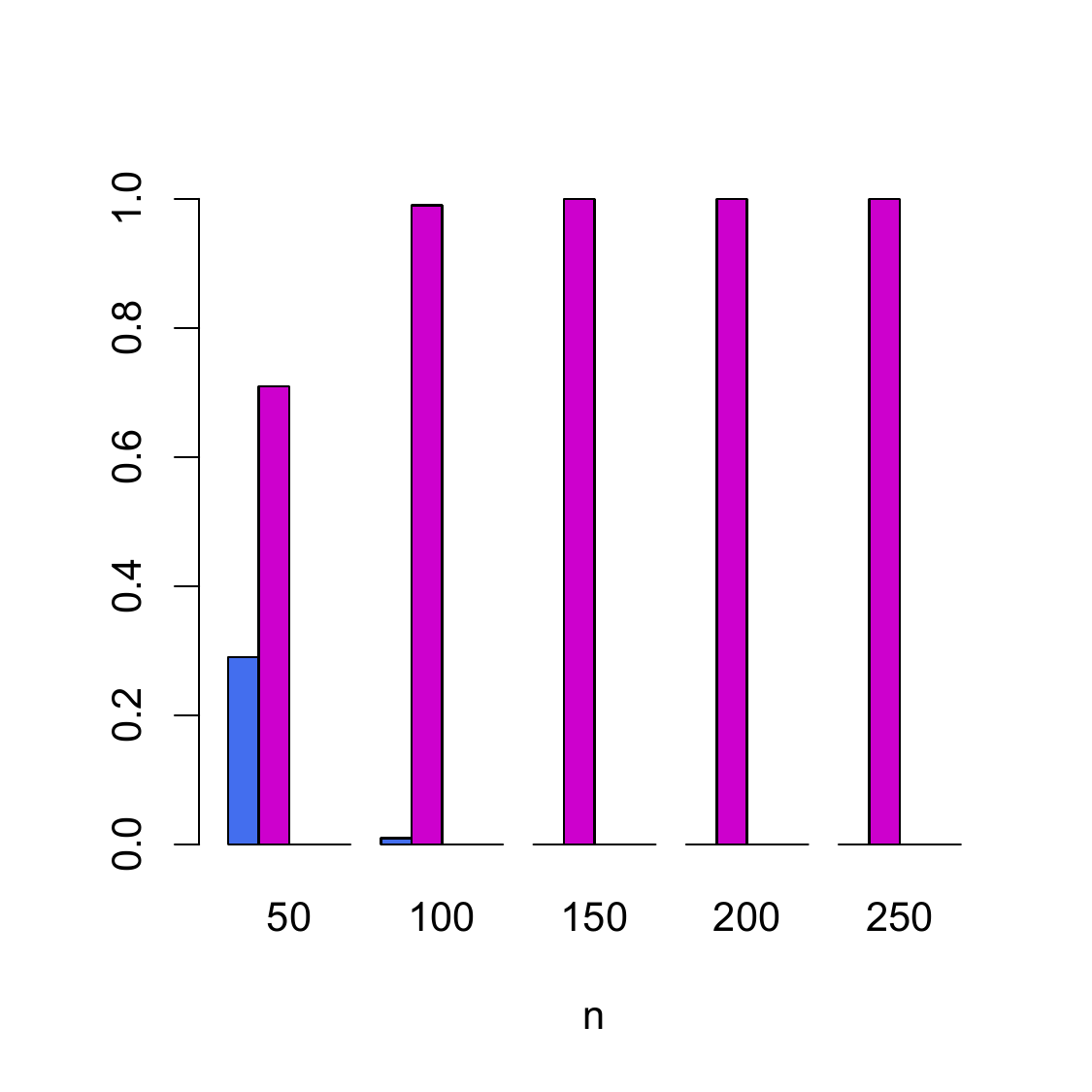}
\includegraphics[scale=0.35,trim=1cm 1cm 0 0,clip=true]
{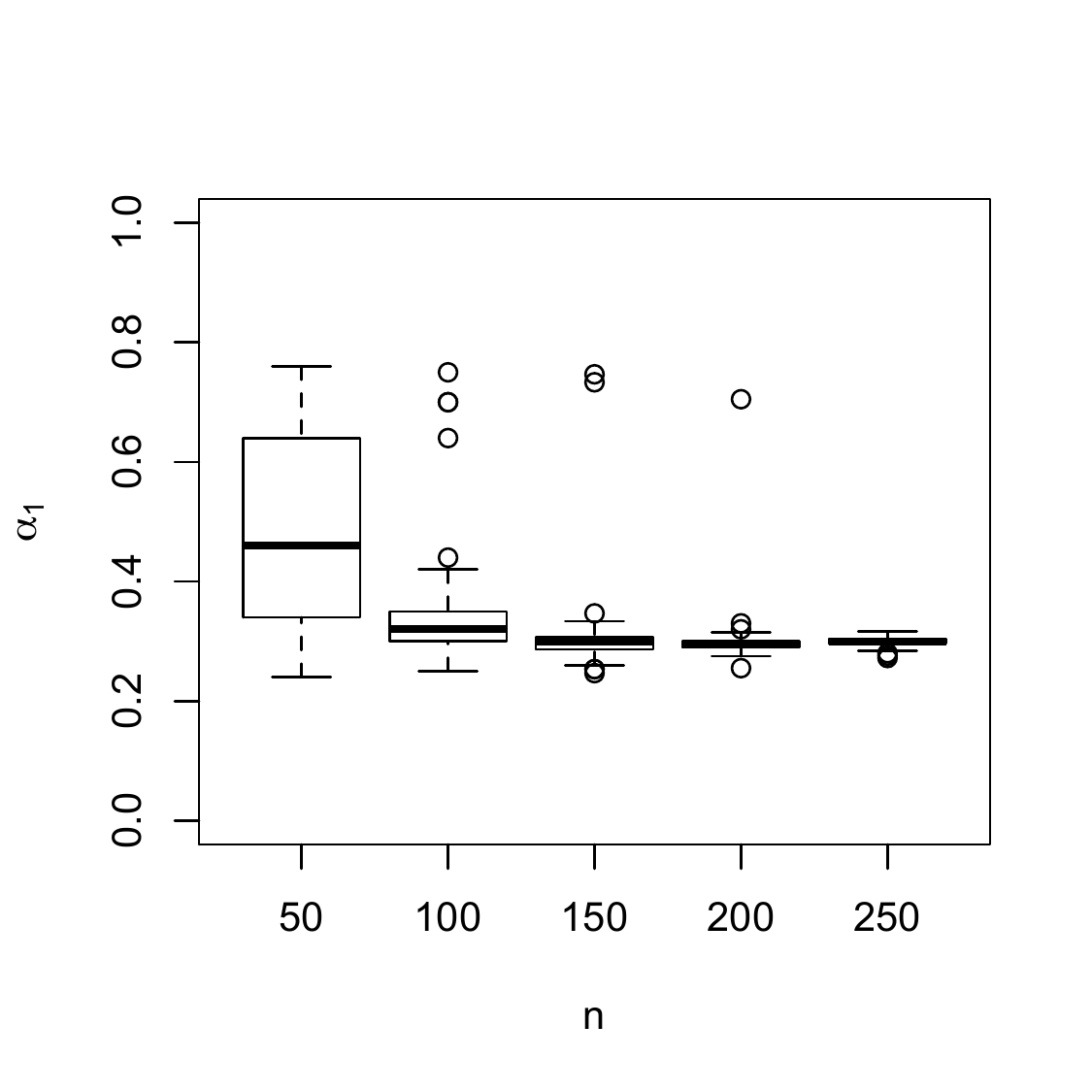}
\includegraphics[scale=0.35,trim=1cm 1cm 0 0,clip=true]
{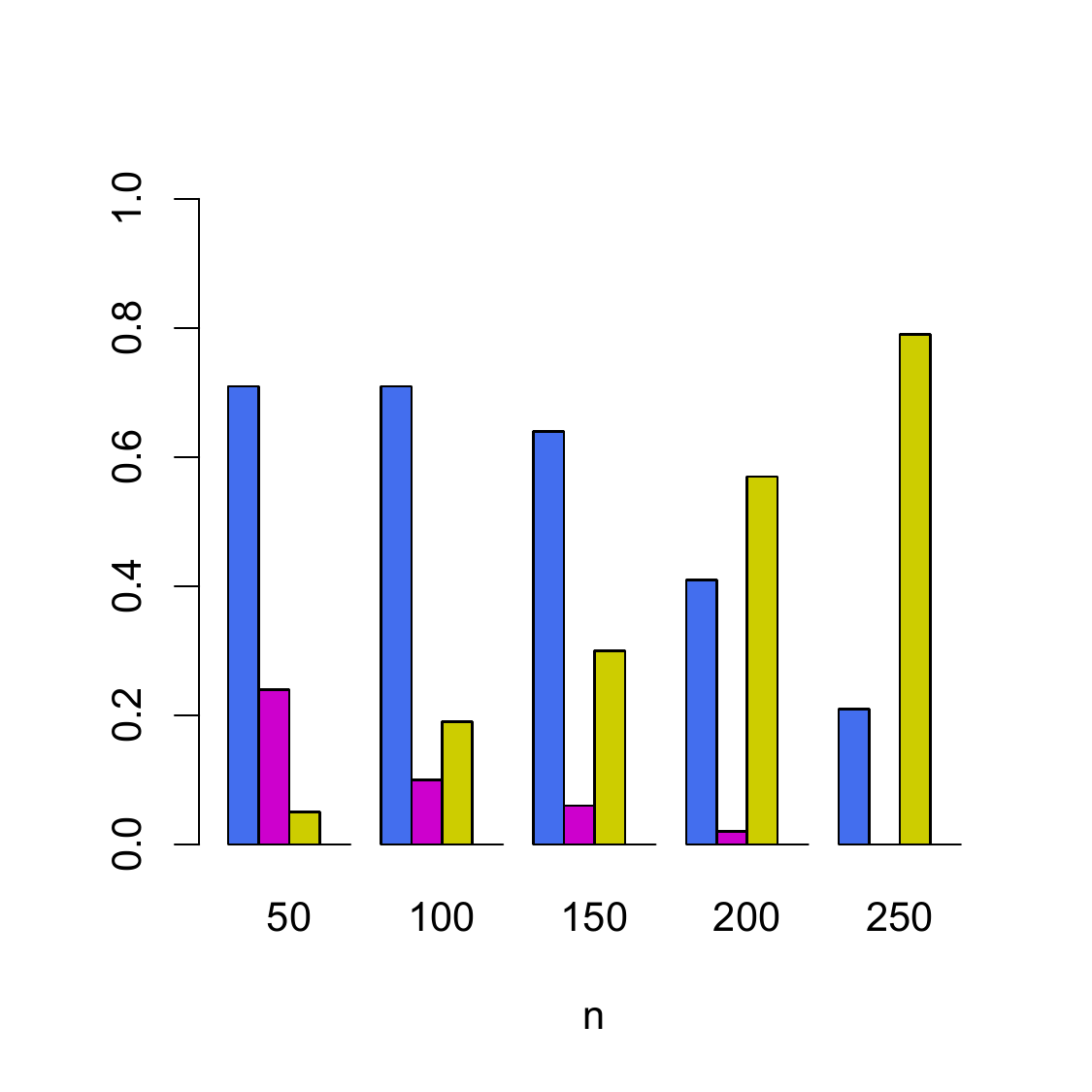}
\caption{As in Figure~\ref{fig:sim1}, but with model (3) for
  covariance matrix $\Sigma_{ij}=1-0.8\cdot \1_{\{i\neq j\}}$ for all
  $i,j$. Top: global estimator 
  \eqref{hatalpha} using DPA; Bottom: BS-algorithm. Left two panels: two
  segments model; Right two panels: three segments model. The
  barplots correspond to the relative
  frequencies that the algorithm gave a
  estimated single segment model (blue), a two segments model (magenta), a
  three segments model (yellow) or a four or more segments model (green).}
\label{fig:sim3}
\end{figure}
As can be seen from Figures~\ref{fig:sim1}--\ref{fig:sim3}, the
performances of the exact dynamic programming algorithm 
(DPA) and the binary segmentation algorithm (BS) are similar for
  larger sample size $n$. For small sample size $n$, the DPA method is
  superior to the BS algorithm
in the 
three segments model and they both perform well in the two segments model.  
But the computational times of the algorithms are very different, as
illustrated in Figure~\ref{fig:time}, where we show the mean time on 100 runs of
each algorithm for each sample size. As expected, the BS algorithm scales
much better with respect to sample size $n$.  
\begin{figure}[!htb]
\includegraphics[scale=0.6]{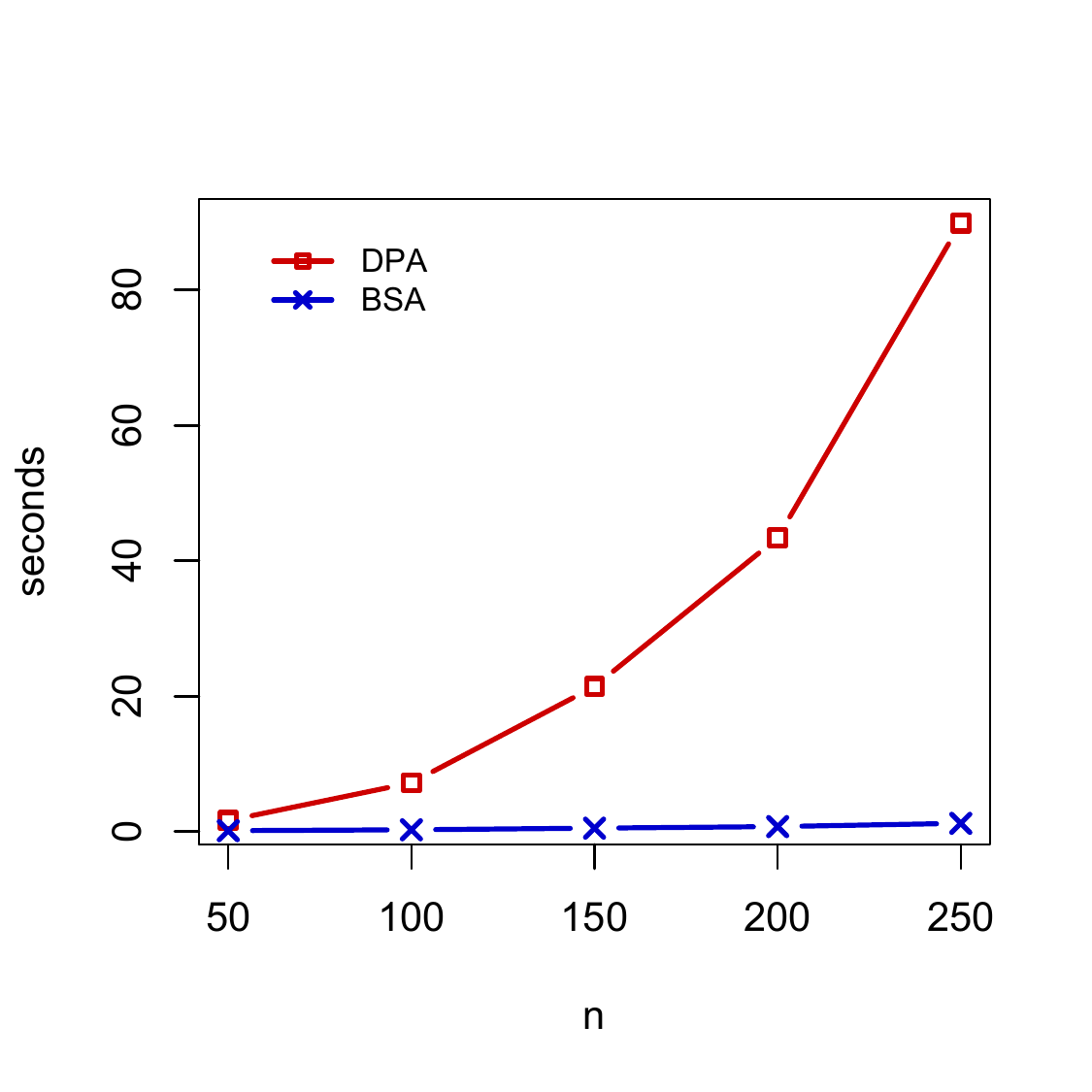}
\includegraphics[scale=0.6]{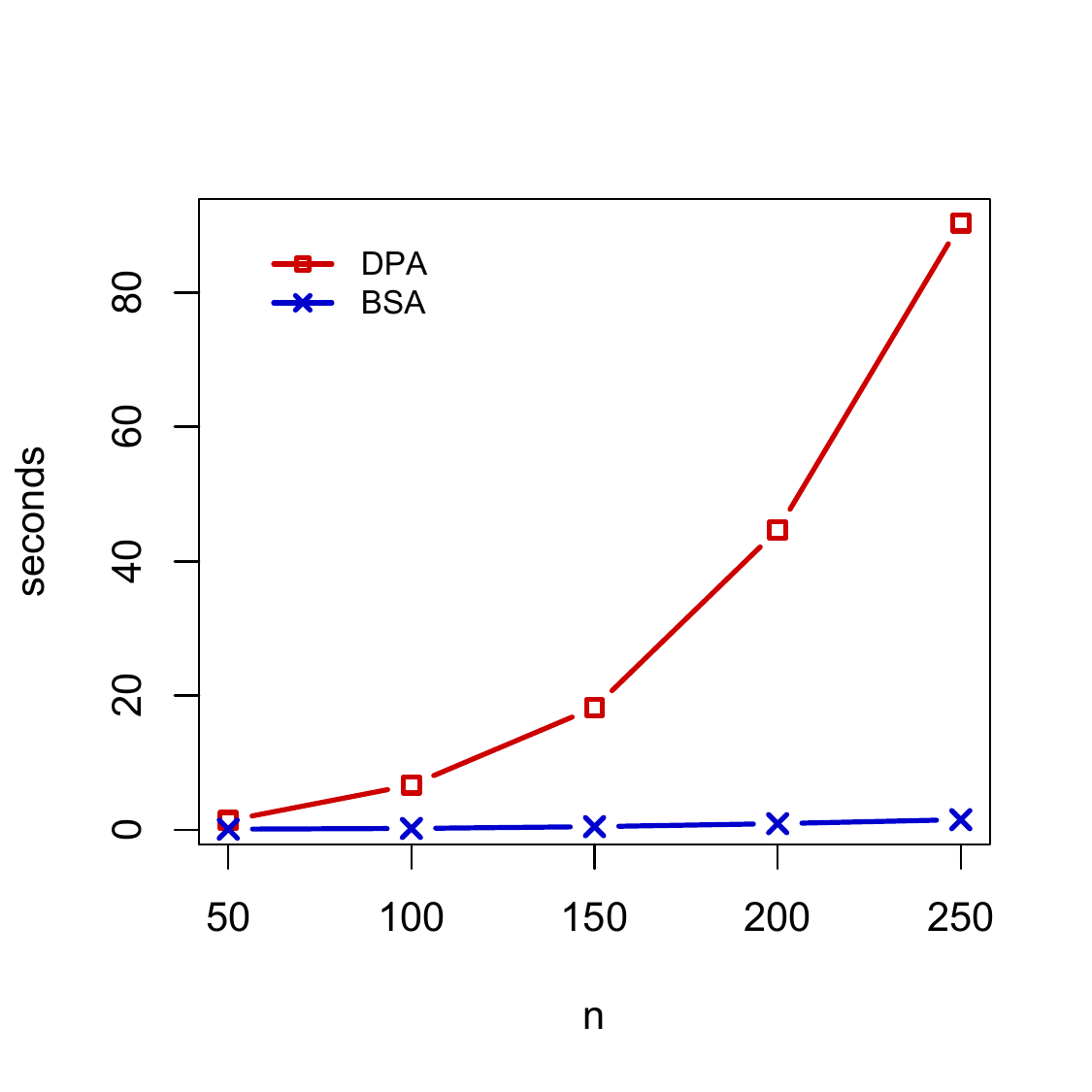}
\caption{Average computation time of the Dynamic Programming algorithm (DPA),
  giving the exact global minimum of \eqref{hatalpha}  and the  Binary
  Segmentation algorithm (BSA), providing an approximation to the global
  minimum as a function of sample size. Left panel: one change-point model 
  with  $\alpha^0=(0,0.5,1)$; right panel: two change-point model 
  with  $\alpha^0=(0,0.3,0.7,1)$.}
\label{fig:time}
\end{figure}

\section{Application to real data}\label{real-data}


We consider the ``communities and crime data'' (by M. Redmond) from the UCI
Machine Learning Repository 
\url{http://archive.ics.uci.edu/ml/datasets/Communities+and+Crime+Unnormalized#}.  
It comprises information from different communities in the U.S. and combines
socio-economic data, from the 1990 US Census and the 1990 US Law
Enforcement Management and Administrative Statistics Survey, and crime
data from the 1995 US FBI Uniform Crime Report. 

Besides specific information to identify the community (name, state, etc.)
the dataset comprises 125 predictive variables (population, mean people per
household, etc.) and 18 crime indices (number of murders per 100K
population, number of violent  crimes per 100K population, etc.).  
After removing all communities with missing values, we obtained a dataset
with $n=319$ communities and  $p=125$ covariates. We
selected as response of interest the 
(scaled) number of murders per 100K population in 1995. We assigned to each
community a number identifying its region in the following way: 1-South,
2-West, 3-Midwest, 4-Northeast (these regions are defined by the United
States Census Bureau) and then ordered the sample by regions (with the
original order from the dataset within every region). 

As a cross-validation procedure, we selected a
sub-sample of 160 communities  with indices $\{2i-1\colon i=1,\dotsc,
160\}$ and a test sample comprising the communities with indices
$\{2i\colon i=1,\dotsc, 159\}$. For a fixed $\delta=0.1$, $\lambda\in [0.001,2]$
and  
$k\in \{1,\dotsc, 10\}$ we computed the estimated $\alpha$ vector with
$\ell(\alpha)=k$ given by the exact dynamic programming algorithm over the
training dataset (i.e., we used the equivalent tuning parameter $k$ instead of
$\gamma$) and we then computed the residual sum of squares over the
test dataset.
The results are summarized in Figure~\ref{crime2norm}: the DPA (on top)
attains the minimum at $\lambda=0.051$ and $k=2$; the BSA attains the minimum at
$\lambda=0.073$ and $k=4$. We see that a one segment model is
  clearly out-performed with $k \ge 2$, with both algorithms DPA and
  BSA. We also see that the residual sum of squares curves for $k = 2$ or
  $k=3$ are essentially the same for both DPA and BSA. Thus $k  =2$ or
  $k=3$ almost leads to a minimum for the
  BSA, implying that $k \in \{2,3\}$ seems plausible for both
  methods. This finding makes sense: if we assume that the data is
    homogeneous within each region, there would be at most 4 segments.

\begin{figure}[!htb]
\center
\includegraphics[scale=0.72]{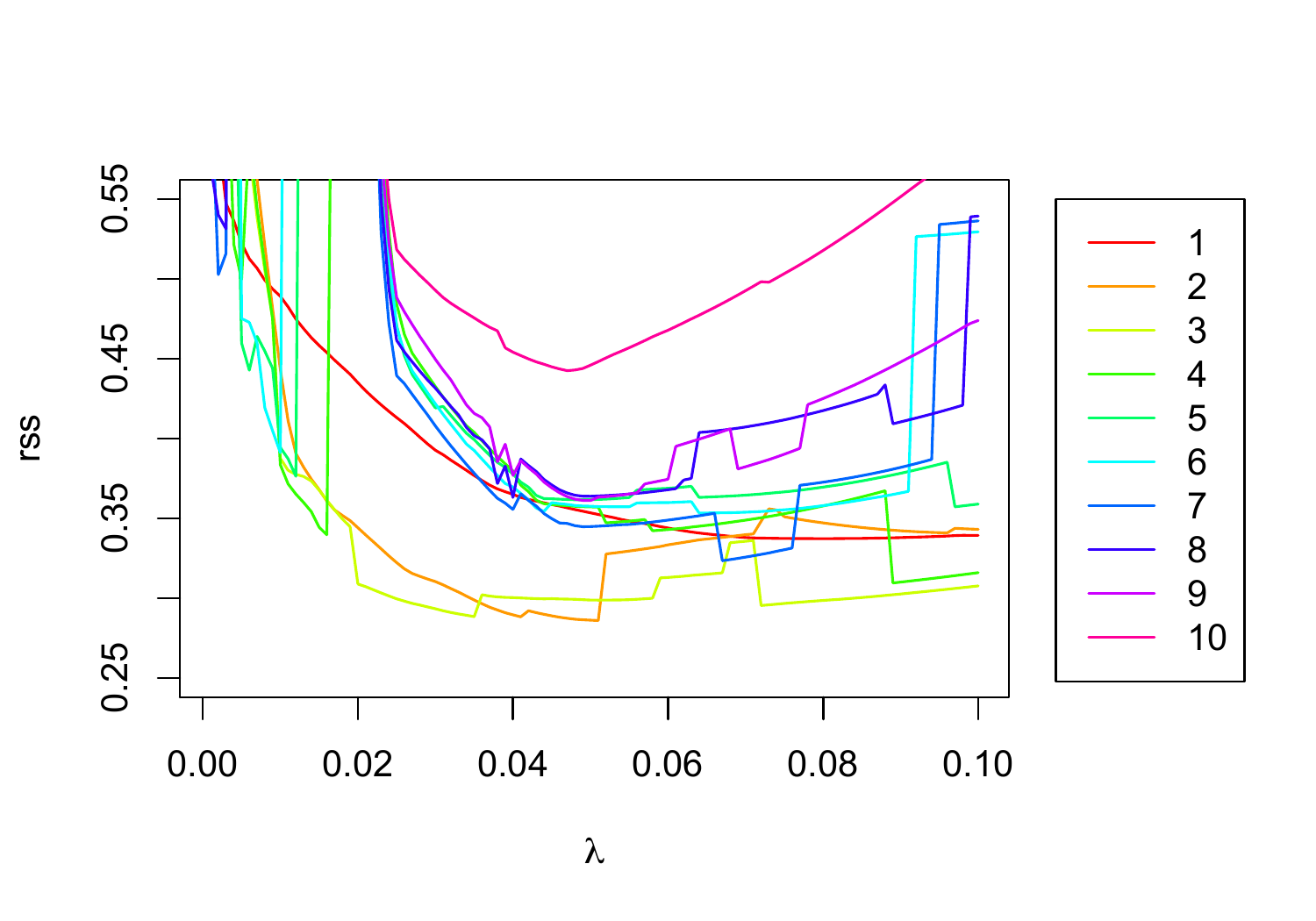}
\includegraphics[scale=0.72]{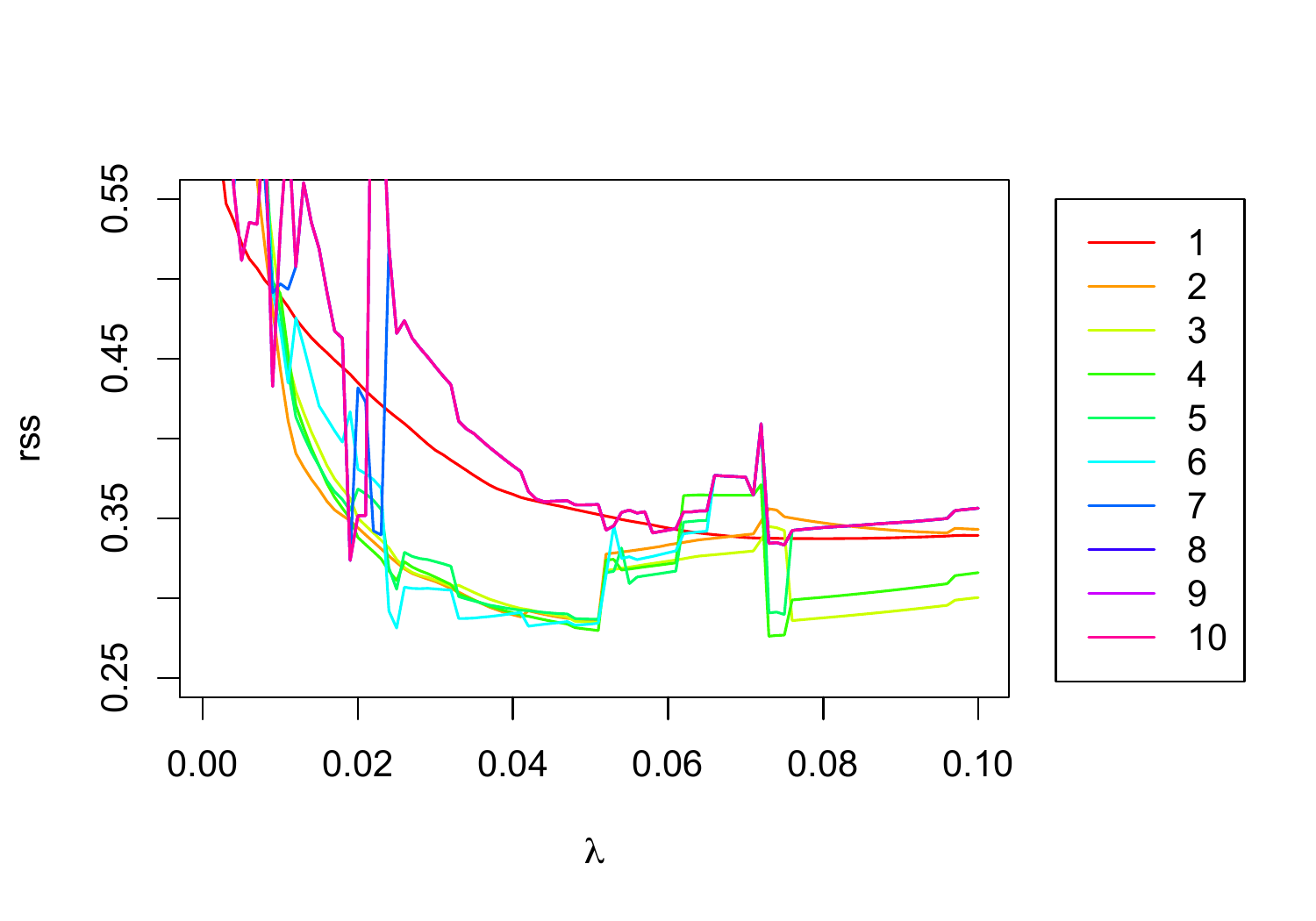}
\caption{Residual sum of squares computed on the test sample of
  the ``communities and crime'' dataset for different values of $\lambda\in
  [0.001,2]$ (only the range $[0.001,0.1]$ is shown) and
  number of segments between $k \in \{1,\ldots ,10\}$ (different
  lines). Top: Dynamic programming algorithm (DPA); bottom: Binary search
  algorithm (BSA). For DPA (top), he minimum is 
  attained at $\lambda=0.051$ for a model with 2 segments, and for BSA
  (bottom) at $\lambda=0.073$ for a
  model with 4 segments. For $k  \in \{2,3\}$, both
  algorithms DPA and BSA lead to essentially the same curves of the
  residual sum of squares as a function of $\lambda$.} 
\label{crime2norm} 
\end{figure}

\section{Conclusions}

Large-scale data is often exposed to heterogeneity: we consider here the
problem of detecting structural changes in the regression 
parameter of a high-dimensional linear model. We propose a regularized
residual sum of squares estimator, mainly using $\ell_1$-norm
regularization. The estimator can be either computed by dynamic programming
or, as mainly advocated in this work, it can be greedily approximated by a
computationally efficient scheme using recursive binary segmentation (BS
algorithm). Despite that the BS algorithm will not compute the regularized
residual sum of squares, we prove here the same theoretical properties for both methods: namely,
the consistency for the true number of segments (which is allowed to grow
asymptotically) and an oracle inequality implying a
fast convergence rate for prediction and parameter estimation. Thus, the
computationally much more efficient BS algorithm has the same theoretical
guarantees as the estimator based on a global optimum of the regularized
residual sum of squares. We illustrate the methods on simulated
as well as on a real dataset.   

\appendix

\section{Lasso estimator on a sub-interval}\label{l1}

In this section we present non-asymptotic oracle inequalities for the
estimators $\hat\beta_{(u,v]}$ in \eqref{lassobeta} that will be essential to
derive Theorem~\ref{modsel}. 

Given $k\in \N$ we denote by $I^m$ the set of intervals 
\[
I^m = \{(u,v]\subset(0,1]\colon (v-u)n\geq m \text{  and } \,un,vn\in\N\}\,.
\]
We can view the set $I^m$ as the collection of all possible sub-intervals
of the set $\{1,\dotsc, n\}$ with at least $m$ observations.

Given an interval $(u,v]\in I^1$ we  define the 
oracle $\beta^*_{(u,v]}$ by 
\begin{align}\label{betastar}
\beta^*_{(u,v]} \;&=\; \underset{\beta}{\arg\min}\;\; \|\bY_{(u,v]}- \bX_{(u,v]}\beta\|_{L^2(P)}^2\notag\\
&=\; \underset{\beta}{\arg\min}\;\;\;\E \|\bY_{(u,v]}-\bX_{(u,v]}\beta\|_2^2\,.
\end{align}
As $\beta^*_{(u,v]}$ is the minimizer of the above expression we have that the vector $\bX_{(u,v]}\beta^*_{(u,v]}$ represents the best approximation to $\bY_{(u,v]}$ in the linear subspace generated by the columns of $\bX_{(u,v]}$, with the inner product inherited from
the  $L^2(P)$ space.

For any $(u,v]\in I^1$,  define 
\begin{equation}\label{epsstar}
\epsilon_{(u,v]}^*=\bY_{(u,v]}-\bX_{(u,v]}\beta^*_{(u,v]}\,,
\end{equation}
and   let the set $\T_0$ be given by
\begin{equation}\label{t0}
\T_0 = \Bigl\{ \; \underset{(u,v]\in I^1}{\max}\;\underset{1\leq j\leq p}{\max}\; 2\bigl|  \be^{*,T}_{(u,v]}\bX_{(u,v]}^{(j)}\bigr|/{n} \,\leq\, 
\lambda_0 \;\Bigr\} \,.
\end{equation}

Now define the set  $\T_1$ by 
\begin{equation}\label{t1}
\T_1 = \Bigl\{ \underset{(u,v]\in I^1}{\max} \,\| \hat\Sigma_{(u,v]}-(v-u)\Sigma\|_\infty \leq 
\lambda_1  \Bigr\} \,,
\end{equation}
where 
\[
\hat\Sigma_{(u,v]} = \bX_{(u,v]}^T\bX_{(u,v]}/{n}\,.
\]

The following theorem shows oracle inequalities for the estimator \eqref{lassobeta}
on the sub-interval $(u,v]\subset(0,1]$. 

\begin{theorem}\label{mixture}
If Assumption 3 holds then on  the set $\T_0\cap\T_1$, with  $2\lambda_0 \leq \lambda\sqrt{\delta}$ and $s_*\lambda_1\leq \frac{\phi_*^2}{32}$ we have that  
\[
\|\bX_{(u,v]}(\hat\beta_{(u,v]}-\beta^*_{(u,v]}) \|_2^2/n + \lambda\sqrt{\max(v-u,\delta)}\|\hat\beta_{(u,v]} -\beta^*_{(u,v]}\|_1 \;\leq\; \frac{8\lambda^2\max(v-u,\delta)s_*}{(v-u)\phi_*^2}\,.
\] 
for all $(u,v]\in I^1$.
\end{theorem}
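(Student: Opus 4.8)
\emph{Plan.} The statement is deterministic once we are on $\T_0\cap\T_1$, so no probability is needed here; the whole argument is the compatibility-based Lasso oracle inequality of \cite{vandeGeer:07a,peter-sara-book}, carried out on a generic interval and with careful bookkeeping of the length $v-u$ against the truncation level $\delta$. Fix $(u,v]\in I^1$, write $N=(v-u)n$ and $\hat\Delta=\hat\beta_{(u,v]}-\beta^*_{(u,v]}$, and set $\mu=\lambda/\sqrt{\max(v-u,\delta)}$ for the penalty level of \eqref{lassobeta}. The first and conceptually most important point is that the oracle has support inside $S_*$. Indeed, since $\E(X_i)=0$ and $\epsilon_i$ is independent of $X_i$ with mean zero, expanding \eqref{betastar} gives
\[
\E\|\bY_{(u,v]}-\bX_{(u,v]}\beta\|_2^2=\sum_{i\colon i/n\in(u,v]}(\beta^{(i)}-\beta)^T\Sigma(\beta^{(i)}-\beta)+\textstyle\sum_{i}\E(\epsilon_i^2),
\]
and as $\Sigma$ is positive definite the minimizer is the average $\beta^*_{(u,v]}=N^{-1}\sum_{i\colon i/n\in(u,v]}\beta^{(i)}$. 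This is a convex combination of $\beta^0(1),\dots,\beta^0(k_0)$, hence $S(\beta^*_{(u,v]})\subseteq\cup_j S(\beta^0(j))=S_*$, which is exactly what makes Assumption~3 applicable to $\hat\Delta$.

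Next I would write the basic inequality. Optimality of $\hat\beta_{(u,v]}$ together with $\bY_{(u,v]}=\bX_{(u,v]}\beta^*_{(u,v]}+\be^*_{(u,v]}$ yields, after expanding the squares and dividing by $N$,
\[
\tfrac1N\|\bX_{(u,v]}\hat\Delta\|_2^2+\mu\|\hat\beta_{(u,v]}\|_1\;\le\;\tfrac2N\bigl|\be^{*,T}_{(u,v]}\bX_{(u,v]}\hat\Delta\bigr|+\mu\|\beta^*_{(u,v]}\|_1.
\]
On $\T_0$, Hölder and \eqref{t0} bound the empirical-process term by $\max_j\tfrac2N|\be^{*,T}_{(u,v]}\bX^{(j)}_{(u,v]}|\,\|\hat\Delta\|_1\le\tfrac{\lambda_0}{v-u}\|\hat\Delta\|_1$. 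Splitting $\|\hat\beta_{(u,v]}\|_1$ and $\|\hat\Delta\|_1$ across $S_*$ and $S_*^c$ (using $\beta^*_{S_*^c}=0$ and the triangle inequality) gives $\tfrac1N\|\bX\hat\Delta\|_2^2+(\mu-\tfrac{\lambda_0}{v-u})\|\hat\Delta_{S_*^c}\|_1\le(\mu+\tfrac{\lambda_0}{v-u})\|\hat\Delta_{S_*}\|_1$. When $\tfrac{\lambda_0}{v-u}\le\tfrac\mu2$ — which, for $v-u\ge\delta$, is precisely the assumption $2\lambda_0\le\lambda\sqrt\delta$ — this produces the cone condition $\|\hat\Delta_{S_*^c}\|_1\le3\|\hat\Delta_{S_*}\|_1$ and, after adding $\tfrac\mu2\|\hat\Delta_{S_*}\|_1$, the one-sided bound $\tfrac1N\|\bX\hat\Delta\|_2^2+\tfrac\mu2\|\hat\Delta\|_1\le2\mu\|\hat\Delta_{S_*}\|_1$.

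The third step transfers compatibility from $\Sigma$ to the interval Gram matrix. Since $\tfrac1n\|\bX_{(u,v]}\hat\Delta\|_2^2=\hat\Delta^T\hat\Sigma_{(u,v]}\hat\Delta$, adding and subtracting $(v-u)\hat\Delta^T\Sigma\hat\Delta$ and using $\T_1$ gives $|\hat\Delta^T(\hat\Sigma_{(u,v]}-(v-u)\Sigma)\hat\Delta|\le\lambda_1\|\hat\Delta\|_1^2$. Combined with $\|\hat\Delta\|_1\le4\|\hat\Delta_{S_*}\|_1$ from the cone and with \eqref{compatibility}, the condition $s_*\lambda_1\le\phi_*^2/32$ lets me absorb the perturbation and read off a compatibility bound for the interval Gram matrix of the form $\|\hat\Delta_{S_*}\|_1^2\lesssim s_*\,\hat\Delta^T\hat\Sigma_{(u,v]}\hat\Delta/\phi_*^2$. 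Feeding this into the one-sided bound, decoupling the cross term via $2ab\le a^2+b^2$, and rescaling by $(v-u)$ to match the $1/n$ normalization of the statement yields the claimed inequality, with the constant $8/\phi_*^2$ and the factor $\max(v-u,\delta)/(v-u)$ emerging from $\mu^2=\lambda^2/\max(v-u,\delta)$ together with $\tfrac1n\|\bX\hat\Delta\|_2^2=(v-u)\tfrac1N\|\bX\hat\Delta\|_2^2$.

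The step I expect to be the genuine obstacle is making everything uniform over \emph{all} $(u,v]\in I^1$, in particular for short intervals with $v-u<\delta$. There the normalized Gram matrix $\hat\Sigma_{(u,v]}/(v-u)$ deviates from $\Sigma$ only up to $\lambda_1/(v-u)$, so the naive transfer of compatibility degrades as $v-u$ shrinks, and the threshold on $s_*\lambda_1$ effectively tightens; correspondingly, the inequality $\tfrac{\lambda_0}{v-u}\le\tfrac\mu2$ used to obtain the cone no longer follows from $2\lambda_0\le\lambda\sqrt\delta$ alone. The reconciling observation is that the right-hand side is deliberately allowed to blow up like $\delta/(v-u)$ on short intervals, so the two regimes $v-u\ge\delta$ and $v-u<\delta$ should be treated separately: the first by the argument above, the second by a cruder bound that exploits this slack. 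Verifying that $2\lambda_0\le\lambda\sqrt\delta$ and $s_*\lambda_1\le\phi_*^2/32$ indeed suffice in both regimes, so that the single stated inequality holds for every $(u,v]\in I^1$, is the delicate part; the remainder is routine Lasso bookkeeping.
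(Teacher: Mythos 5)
Your proposal follows essentially the same route as the paper's proof: your first step is the paper's Lemma~\ref{oracle} (the oracle $\beta^*_{(u,v]}$ is a convex combination of the $\beta^0(j)$'s, hence supported in $S_*$), your second step is the paper's Lemma~\ref{lem6.3} (basic inequality on $\T_0$ giving the cone condition under $2\lambda_0\le\lambda\sqrt\delta$), your third step is the paper's Lemma~\ref{lem-comp} (compatibility transferred from $(v-u)\Sigma$ to $\hat\Sigma_{(u,v]}$ on $\T_1$; the paper cites \cite[Corollary 6.8]{peter-sara-book} where you redo the perturbation argument by hand), and the final decoupling is the paper's $4ab\le a^2+4b^2$ step. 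For $v-u\ge\delta$ your bookkeeping is correct and recovers the paper's bound $8\lambda^2 s_*/\phi_*^2$, with one small caveat: you should carry the quadratic term with a factor $2$ (as the paper does in Lemma~\ref{lem6.3}) so that the decoupling does not consume $\|\bX_{(u,v]}(\hat\beta_{(u,v]}-\beta^*_{(u,v]})\|_2^2/n$ entirely; as literally written, your one-sided bound loses that term from the left-hand side of the conclusion.

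The case $v-u<\delta$, which you leave open, is indeed a gap relative to the statement, which claims all of $I^1$. But the paper's own proof does not close it either, and your diagnosis is in fact sharper than the paper's treatment. Concretely: (i) in the proof of Lemma~\ref{lem6.3} the paper writes the basic inequality with weight $\tilde\lambda=\lambda\sqrt{\max(v-u,\delta)}$ multiplying both $\|\hat\beta_{(u,v]}\|_1$ and $\|\beta^*_{(u,v]}\|_1$, whereas \eqref{lassobeta} at the $1/n$ normalization gives the weight $\lambda(v-u)/\sqrt{\max(v-u,\delta)}$; the two coincide only when $v-u\ge\delta$, and the stronger inequality does not follow from the correct one for shorter intervals. (ii) In Lemma~\ref{lem-comp}, the cited corollary requires the perturbation $\lambda_1$ to be small relative to the compatibility constant of the reference matrix, which for $(v-u)\Sigma$ is $(v-u)\phi_*^2$; the correct threshold is therefore $s_*\lambda_1\le(v-u)\phi_*^2/32$, carrying a factor $(v-u)$ that the paper drops, so the degradation you attribute to short intervals actually affects every $v-u<1$. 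Since every later application of Theorem~\ref{mixture} (the remark following it, Corollary~\ref{cormixture}, Lemma~\ref{lem4}, and the proofs of Theorems~\ref{modsel} and \ref{modsel2}) is on intervals of length at least $\delta$, your argument covers the regime that is actually used downstream, and the ``delicate part'' you flag points to a real imprecision in the paper rather than to a missing idea on your side.
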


\medskip
\begin{remark}
Observe that the bound on Theorem~\ref{mixture} is uniform on the set
$I^{\delta n }$ with 
\[
\max_{(u,v]\in I^{\delta n}}
\left(\|\bX_{(u,v]}(\hat\beta_{(u,v]}-\beta^*_{(u,v]}) \|_2^2/n +
  \lambda\sqrt{v-u}\|\hat\beta_{(u,v]} -\beta^*_{(u,v]}\|_1 \right) \;\leq\; \frac{8\lambda^2s_*}{\phi_*^2}\,.
\]
\end{remark}

\medskip
\begin{corollary}\label{cormixture}
Suppose Assumptions 1-3 hold. Given $t>0$ and $\delta>0$, suppose 
 the regularization parameter $\lambda$ satisfies 
\[
\lambda \;\geq\; 40 t \sigma K_X\sqrt{ \frac{\log(np)}{\delta n}}\,.
\] 
Then if     
\begin{align*}
s_* & <  \frac{\lambda_1^{-1}\phi_*^2}{32}\,,\quad  \text{ with }\quad\lambda_1= 10tK_X^2\sqrt{ \frac{\log(np)}{n}}
\end{align*}
we have, with probability at least $1-  2/t^2$,  that\\[-10pt]
\[
\max_{(u,v]\in I^{\delta n}} \left(\|\bX_{(u,v]}(\hat\beta_{(u,v]}-\beta^*_{(u,v]}) \|_2^2/n + \lambda\sqrt{v-u}\|\hat\beta_{(u,v]} -\beta^*_{(u,v]}\|_1 \right)
\;\leq\; \frac{8\lambda^2s_*}{\phi_*^2}\,.
\] 
\end{corollary}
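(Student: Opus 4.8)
The plan is to read Corollary~\ref{cormixture} as the probabilistic wrapper around the deterministic Theorem~\ref{mixture} together with its Remark. Indeed, the Remark already delivers the exact conclusion of the Corollary \emph{on the event} $\T_0\cap\T_1$, provided the two deterministic hypotheses $2\lambda_0\leq\lambda\sqrt\delta$ and $s_*\lambda_1\leq\phi_*^2/32$ hold. The first is immediate from the assumed lower bound $\lambda\geq 40t\sigma K_X\sqrt{\log(np)/(\delta n)}$, which rearranges to $\lambda\sqrt\delta\geq 40t\sigma K_X\sqrt{\log(np)/n}$ and hence bounds $2\lambda_0$ once the constant defining $\lambda_0$ in \eqref{t0} is matched (a factor-of-two bookkeeping); the second is exactly the hypothesis $s_*<\lambda_1^{-1}\phi_*^2/32$. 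Thus the whole content of the Corollary reduces to the probabilistic claim $\P(\T_0\cap\T_1)\geq 1-2/t^2$, which I would obtain from $\P(\T_0^c)\leq 1/t^2$ and $\P(\T_1^c)\leq 1/t^2$ via a union bound.

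First I would bound $\P(\T_1^c)$, i.e.\ control $\max_{(u,v]\in I^1}\|\hat\Sigma_{(u,v]}-(v-u)\Sigma\|_\infty$, an entrywise maximum over all $O(n^2)$ sub-intervals and all $p^2$ coordinate pairs. The key device is that, for fixed $(j,k)$, the centred sum $\sum_{i\in(u,v]}\bigl(X_i^{(j)}X_i^{(k)}-\E[X_i^{(j)}X_i^{(k)}]\bigr)$ is a \emph{difference of partial sums}, so controlling it uniformly over $(u,v]$ reduces to a single maximum of partial sums over the $n$ endpoints (using $|S_{vn}-S_{un}|\le 2\max_m|S_m|$). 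Since $|X_i^{(j)}X_i^{(k)}|\leq K_X^2$ by Assumption~1, these partial sums form a bounded martingale, so a Kolmogorov/Doob-type maximal inequality followed by Chebyshev controls each coordinate pair; a union bound over the $p^2$ pairs, with the constant $10$ in $\lambda_1$ calibrated to absorb the logarithmic factor, then yields $\P(\T_1^c)\leq 1/t^2$.

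The bound on $\P(\T_0^c)$ runs along the same lines but is the more delicate step. The object is $\max_{(u,v]}\max_j 2\,|\be^{*,T}_{(u,v]}\bX^{(j)}_{(u,v]}|/n$, and two features make it harder than $\T_1$. First, the oracle residual $\be^*_{(u,v]}$ depends on the interval through $\beta^*_{(u,v]}$; however, the defining $L^2(P)$-orthogonality of the projection \eqref{betastar} gives $\E[\bX^{(j)\,T}_{(u,v]}\be^*_{(u,v]}]=0$, so each inner product is centred, and writing $\be^*_{(u,v]}=\be_{(u,v]}+\bX_{(u,v]}(\beta^{(\cdot)}-\beta^*_{(u,v]})$ separates a genuinely random partial-sum part in the mean-zero variables $\epsilon_iX_i^{(j)}$ from a covariance-fluctuation part that is again bounded and can be handled exactly as in $\T_1$. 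Second, Assumption~2 supplies only a second moment of $\epsilon$, so no exponential concentration is available; this is precisely why the target probability is polynomial ($1/t^2$) rather than exponentially small. I would therefore invoke the partial-sum maximal inequality to reduce uniformity over intervals to the $n$ endpoints, bound $\E[(\sum_i\epsilon_iX_i^{(j)})^2]\leq n\sigma^2K_X^2$ from independence and Assumptions~1--2, apply Chebyshev, and union-bound over $j=1,\dots,p$, with the constant $40$ in $\lambda_0$ chosen to close the estimate at $1/t^2$.

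I expect the main obstacle to be exactly this uniform-over-intervals control under only second-moment assumptions: a naive union bound over the $O(n^2)$ sub-intervals is hopelessly lossy, and the argument hinges entirely on recognizing the partial-sum (martingale) structure so that a maximal inequality collapses the interval index to its $n$ endpoints. The genuinely subtle point is that a crude per-coordinate Chebyshev leaves a factor $p$ in front, so one must argue carefully — via a sufficiently sharp maximal/moment inequality and the boundedness of $X$ — that the union cost over the $p$ (resp.\ $p^2$) directions enters only logarithmically, as encoded in the $\sqrt{\log(np)}$ scaling of $\lambda_0$ and $\lambda_1$. Once that reduction and the constant calibration are in place, tracking the interval-dependence of the oracle residual and assembling $\P(\T_0\cap\T_1)\geq 1-2/t^2$ is routine, and the conclusion follows by the Remark after Theorem~\ref{mixture}.
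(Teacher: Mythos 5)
Your overall architecture is exactly the paper's: Corollary~\ref{cormixture} is read as Theorem~\ref{mixture} (deterministic on $\T_0\cap\T_1$, with $2\lambda_0\leq\lambda\sqrt\delta$ and $s_*\lambda_1\leq\phi_*^2/32$ checked from the hypotheses) plus the probabilistic claim $\P(\T_0\cap\T_1)\geq 1-2/t^2$, obtained from $\P(\T_0^c)\leq 1/t^2$ and $\P(\T_1^c)\leq 1/t^2$ via a union bound; the paper does precisely this by invoking Lemmas~\ref{lemma-t0} and \ref{lemma-t1}. The gap is in how you propose to prove those two probability bounds. Your plan --- collapse the interval index to the $n$ endpoints by a Doob/Kolmogorov maximal inequality, apply Chebyshev per coordinate, then union-bound over the $p$ (resp.\ $p^2$) coordinates --- provably cannot reach the stated scaling $\lambda_0,\lambda_1\asymp t\sqrt{\log(np)/n}$: with only second moments available (no exponential tails, as you yourself note), the per-coordinate failure probability after Chebyshev is of order $1/(t^2\log(np))$, so the union over coordinates leaves a factor $p$ (resp.\ $p^2$) in front, which $\log(np)$ cannot absorb; closing the bound that way would force $\lambda_0\gtrsim t\sqrt{p/n}$. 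You correctly flag this as ``the genuinely subtle point'' to be handled by ``a sufficiently sharp maximal/moment inequality,'' but you never name a tool that does it, and the tools you do name (Doob plus Chebyshev plus union bound) are exactly the ones that fail.

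The missing ingredient is a Nemirovski-type moment inequality for the $\ell_\infty$-norm of a sum of independent mean-zero random vectors, which is what the paper uses, namely \cite[Corollary~2.3]{dumbgen2010}. In Lemmas~\ref{lemma-t0} and \ref{lemma-t1} one builds, for each sample index $i$, a single vector $\bZ_i$ (resp.\ $\bW_i$) of dimension $d=pn(n-1)$ (resp.\ $d=p^2n(n-1)$) whose coordinates are indexed \emph{jointly} by the regression coordinate(s) and the interval $(u,v]$ containing $i/n$; the inequality $\E\|\sum_i\bZ_i\|_\infty^2\leq(2e\log d-e)\sum_i\E\|\bZ_i\|_\infty^2$ puts the maximum \emph{inside} the expectation, so a \emph{single} application of Markov/Chebyshev to that maximum yields the bound, with the union over all intervals and all coordinates entering only through $\log d\leq 2\log(np)+O(1)$. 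This also corrects your side remark: the union over the $O(n^2)$ intervals is not the bottleneck and needs no martingale reduction --- it is absorbed into $\log d$ for free --- whereas the real obstruction, the coordinate union under merely polynomial tails, is resolved only by this $\ell_\infty$-moment device. (Your treatment of the interval-dependence of $\be^*_{(u,v]}$ via the $L^2(P)$-orthogonality of the projection is fine and matches the centering used in Lemma~\ref{lemma-t0}, and the constant bookkeeping for $2\lambda_0\leq\lambda\sqrt\delta$ is also fine.)
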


\section{Proofs}\label{sec.proofs} 

In this section we present the proofs of the theoretical results in this paper. In the first subsection we prove the oracle inequalities for the Lasso estimator on a subinterval, stated in Theorem~\ref{mixture} and Corollary~\ref{cormixture}. In the second subsection we prove the consistency of the change point estimators, stated in Theorems~\ref{modsel} and \ref{modsel2}.

\subsection{Oracle inequalities for the Lasso estimator}\label{subl1}

We first prove a result  about the compatibility condition.

\begin{lemma}\label{lem-comp}
Suppose Assumption 3 holds. Then on $\T_1$, if $\lambda_1$  satisfies $s_*\lambda_1 \leq \frac{\phi_*^2}{32}$, with $s_*$ the cardinality of $S_*$,  we have that for all $(u,v]\in I^1$ and all  $\beta\in \R^p$ that satisfy $\|\beta_{S_*^c}\|_1\leq 3\|\beta_{S_*}\|_1$ it holds that
\begin{equation*}
\|\beta_{S_*}\|_1^2 \;\leq\;  \frac{2(\beta^T\hat\Sigma_{(u,v]} \beta ) s_*}{(v-u)\phi_*^2}
\,.
\end{equation*}
\end{lemma}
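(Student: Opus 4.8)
The plan is to transfer the population compatibility condition of Assumption~3 from $\Sigma$ to the rescaled empirical Gram matrix $\hat\Sigma_{(u,v]}/(v-u)$, uniformly over all admissible sub-intervals, exploiting that on the event $\T_1$ the matrix $\hat\Sigma_{(u,v]}$ is entrywise close to its mean $(v-u)\Sigma$.

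First I would fix an arbitrary $(u,v]\in I^1$ and an arbitrary $\beta\in\R^p$ obeying the cone constraint $\|\beta_{S_*^c}\|_1\le 3\|\beta_{S_*}\|_1$, and apply Assumption~3 to obtain $\|\beta_{S_*}\|_1^2\le (\beta^T\Sigma\beta)\,s_*/\phi_*^2$. The whole lemma then reduces to replacing the population quadratic form $\beta^T\Sigma\beta$ by the empirical one $\beta^T\hat\Sigma_{(u,v]}\beta/(v-u)$ at the price of a controllable remainder.

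The second step provides that replacement. Writing $\beta^T\hat\Sigma_{(u,v]}\beta=(v-u)\,\beta^T\Sigma\beta+\beta^T\bigl(\hat\Sigma_{(u,v]}-(v-u)\Sigma\bigr)\beta$, I would bound the remainder on $\T_1$ by the elementary estimate $|\beta^T A\beta|\le\|A\|_\infty\,\|\beta\|_1^2$ applied to $A=\hat\Sigma_{(u,v]}-(v-u)\Sigma$, which on $\T_1$ gives a bound of $\lambda_1\|\beta\|_1^2$. To return to $\|\beta_{S_*}\|_1$, the cone constraint yields $\|\beta\|_1=\|\beta_{S_*}\|_1+\|\beta_{S_*^c}\|_1\le 4\|\beta_{S_*}\|_1$, hence $\|\beta\|_1^2\le 16\,\|\beta_{S_*}\|_1^2$. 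Combining this with the compatibility bound of the first step leads to an inequality of the form $\|\beta_{S_*}\|_1^2\le \frac{s_*}{(v-u)\phi_*^2}\,\beta^T\hat\Sigma_{(u,v]}\beta+\frac{16\,s_*\lambda_1}{(v-u)\phi_*^2}\,\|\beta_{S_*}\|_1^2$, after which the smallness condition on $s_*\lambda_1$ lets me absorb the last term into the left-hand side, producing the constant $2$ in the stated conclusion.

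I expect the delicate point to be precisely this final absorption carried out uniformly over $(u,v]$: the remainder carries a factor $1/(v-u)$, so the transferred compatibility constant deteriorates as the interval length shrinks. This is why a lower bound on the segment length is implicitly needed (the role played by $\delta$ in the applications of the lemma), and keeping track of the interplay between $s_*\lambda_1$, $\phi_*^2$ and $v-u$ is where the care lies; the remainder of the argument is routine bookkeeping with the cone inequality and the entrywise bound $|\beta^T A\beta|\le\|A\|_\infty\|\beta\|_1^2$.
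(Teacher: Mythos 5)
Your proposal follows essentially the same route as the paper's proof. The paper observes that $(v-u)\Sigma$ inherits the compatibility condition with constant $\sqrt{v-u}\,\phi_*$ and then invokes \cite[Corollary~6.8]{peter-sara-book} to transfer it to $\hat\Sigma_{(u,v]}$ on $\T_1$; the proof of that cited corollary is exactly the computation you write out (the entrywise bound $|\beta^T A\beta|\le\|A\|_\infty\|\beta\|_1^2$, the cone bound $\|\beta\|_1\le 4\|\beta_{S_*}\|_1$, and the absorption). So, modulo citing versus reproving the perturbation step, the two arguments coincide.

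The ``delicate point'' you flag, however, is genuine, and the paper's proof does not resolve it either. Your absorption requires $16 s_*\lambda_1/((v-u)\phi_*^2)\le 1/2$, i.e.\ $s_*\lambda_1\le (v-u)\phi_*^2/32$, whereas the hypothesis only provides $s_*\lambda_1\le \phi_*^2/32$; since $v-u\le 1$, these agree only for the full interval. The paper faces the identical requirement: applying \cite[Corollary~6.8]{peter-sara-book} with $\Sigma_0=(v-u)\Sigma$, whose squared compatibility constant is $(v-u)\phi_*^2$, and with perturbation level $\lambda_1$ (the bound on $\T_1$ is not scaled by $v-u$) demands $32\lambda_1 s_*\le (v-u)\phi_*^2$, not the stated condition. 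In fact the conclusion can fail outright on very short intervals: for a single-observation interval one has $\hat\Sigma_{(u,v]}=X_iX_i^T/n$, and if $s_*\ge 2$ there is a nonzero $\beta$ supported on $S_*$ with $X_i^T\beta=0$; this $\beta$ lies in the cone, makes the right-hand side zero, and this happens on every realization, in particular on $\T_1$ (whose constraint on such intervals, $\|X_iX_i^T-\Sigma\|_\infty\le n\lambda_1$, is vacuous for large $n$). So the lemma is really valid uniformly only over intervals with $v-u\ge 32 s_*\lambda_1/\phi_*^2$ --- in particular over $I^{\delta n}$ under the condition $s_*\lambda_1\le\delta\phi_*^2/32$, which is the regime in which the remark after Theorem~\ref{mixture} and Corollary~\ref{cormixture} operate. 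With that restriction (or the interval-dependent hypothesis) made explicit, your argument is a complete proof, and on this point it is more careful than the paper's.
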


\begin{proof}
First note that  by Assumption~3, for any $(u,v]\in I^1$ we have 
\begin{equation*}
\|\beta_{S_*}\|_1^2 \;\leq\;  \frac{(\beta^T (v-u)\Sigma\beta ) s_*}{(v-u)\phi_*^2}
\end{equation*}
for all $\beta\in \R^p$ that satisfy $\|\beta_{S_*^c}\|_1\leq 3\|\beta_{S_*}\|_1$. Therefore the matrix $(v-u)\Sigma$ satisfies the compatibility condition for the set $S_*$ with constant $\sqrt{(v-u)}\phi_*$. 
Now, by \cite[Corollary 6.8]{peter-sara-book} we have that if 
$s_*\lambda_1\leq \frac{\phi_*^2}{32}$, the compatibility condition   also holds for the set $S_*$ and the matrix $\hat\Sigma_{(u,v]}$ instead of $(v-u)\Sigma$, with $\phi_{\hat\Sigma_{(u,v]}}^2\geq (v-u)\phi_*^2/2$. That means that for all $\beta\in \R^p$ that satisfy $\|\beta_{S_*^c}\|_1\leq 3\|\beta_{S_*}\|_1$ it holds that
\begin{equation*}
\|\beta_{S_*}\|_1^2 \;\leq\; \frac{(\beta^T\hat\Sigma_{(u,v]} \beta ) s_*}{\phi_{\hat\Sigma_{(u,v]}}^2}\;\leq\;  \frac{2(\beta^T\hat\Sigma_{(u,v]} \beta ) s_*}{(v-u)\phi_*^2}
\,.\qedhere
\end{equation*}
\end{proof}

We  now prove a basic lemma  that can be derived  straightforward from \cite[Lemma~6.3]{peter-sara-book}. 

\begin{lemma}\label{lem6.3}
 On $\T_0$ with $2\lambda_0\leq \lambda\sqrt\delta$ we have that 
 \begin{align*}
2\|\bX_{(u,v]}(\hat\beta_{(u,v]} -  \beta^*_{(u,v]})&\,\|_2^2/n + \lambda\sqrt{\max(v-u,\delta)}\|\hat\beta_{(u,v],S_*^c}\|_1\\
&\leq\;
3\lambda\sqrt{\max(v-u,\delta)} \|\hat\beta_{(u,v],S_*}-\beta^*_{(u,v],S_*}\|_1
\end{align*}
for all $(u,v]\in I^1$.
\end{lemma}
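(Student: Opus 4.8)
The plan is to run the standard Lasso ``basic inequality'' argument of \cite[Lemma~6.3]{peter-sara-book} on the subinterval $(u,v]$, carefully tracking the two different normalizations that appear here. Fix $(u,v]\in I^1$ and abbreviate $\hat\beta=\hat\beta_{(u,v]}$, $\beta^*=\beta^*_{(u,v]}$, $\be^*=\be^*_{(u,v]}$, $\Delta=\hat\beta-\beta^*$ and $w=\sqrt{\max(v-u,\delta)}$. Since $\hat\beta$ minimizes the objective in \eqref{lassobeta}, comparing its value at $\hat\beta$ with its value at $\beta^*$ gives
\[
\frac{\|\bY_{(u,v]}-\bX_{(u,v]}\hat\beta\|_2^2}{(v-u)n}+\frac{\lambda}{w}\|\hat\beta\|_1\;\leq\;\frac{\|\bY_{(u,v]}-\bX_{(u,v]}\beta^*\|_2^2}{(v-u)n}+\frac{\lambda}{w}\|\beta^*\|_1\,.
\]
Substituting $\bY_{(u,v]}=\bX_{(u,v]}\beta^*+\be^*$ from \eqref{epsstar} and expanding the squares, the terms $\|\be^*\|_2^2$ cancel and I obtain
\[
\frac{\|\bX_{(u,v]}\Delta\|_2^2}{(v-u)n}\;\leq\;\frac{2\,\be^{*,T}\bX_{(u,v]}\Delta}{(v-u)n}+\frac{\lambda}{w}\bigl(\|\beta^*\|_1-\|\hat\beta\|_1\bigr)\,.
\]

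Next I would control the cross term and rescale to the $n$-normalization used in the statement. Multiplying the last display by $(v-u)$ and using that on the event $\T_0$ (definition \eqref{t0}) one has $2\,\be^{*,T}\bX_{(u,v]}\Delta/n\leq\lambda_0\|\Delta\|_1$, I arrive at
\[
\frac{\|\bX_{(u,v]}\Delta\|_2^2}{n}\;\leq\;\lambda_0\|\Delta\|_1+\frac{(v-u)\lambda}{w}\bigl(\|\beta^*\|_1-\|\hat\beta\|_1\bigr)\,.
\]
Here two structural facts enter. First, the oracle $\beta^*$ defined in \eqref{betastar} is the population least-squares projection over $(u,v]$, which equals the length-weighted average of the true segment coefficients $\beta^0(j)$; being a convex combination of vectors supported on $S_*$, it satisfies $\beta^*_{S_*^c}=0$. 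This yields the split $\|\Delta\|_1=\|\Delta_{S_*}\|_1+\|\hat\beta_{S_*^c}\|_1$ and, by the reverse triangle inequality, $\|\beta^*\|_1-\|\hat\beta\|_1\leq\|\Delta_{S_*}\|_1-\|\hat\beta_{S_*^c}\|_1$. Second, for the intervals to which the bound is applied one has $(v-u)\lambda/w=\lambda\sqrt{\max(v-u,\delta)}=\lambda w$, and the hypothesis $2\lambda_0\leq\lambda\sqrt\delta\leq\lambda w$ makes the noise level at most half the effective penalty.

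Substituting these into the previous display, multiplying through by $2$, and moving the $S_*^c$ contribution to the left-hand side, the coefficient in front of $\|\hat\beta_{S_*^c}\|_1$ becomes $2(\lambda w-\lambda_0)\geq\lambda w$ while the coefficient in front of $\|\Delta_{S_*}\|_1$ becomes $2(\lambda w+\lambda_0)\leq 3\lambda w$, which is exactly the asserted inequality. The step I expect to be the main obstacle is the bookkeeping between the two normalizations: the prediction error in the statement is divided by the global $n$, whereas the Lasso in \eqref{lassobeta} is normalized by the local length $(v-u)n$, so the effective penalty is $(v-u)\lambda/w$. One must verify that $2\lambda_0\leq\lambda\sqrt\delta$ is precisely what forces $\lambda_0\leq\tfrac12\lambda w$, so that the constants $2$ and $3$ emerge; keeping track of $\max(v-u,\delta)$ versus $v-u$ in this comparison (the two coincide on $I^{\delta n}$, the regime in which the bound is subsequently used, cf. the remark after Theorem~\ref{mixture}) is where the care lies.
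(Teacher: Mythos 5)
Your argument is correct where it applies, and it is essentially the paper's own proof: the Lasso basic inequality on $(u,v]$ obtained by comparing objective values at $\hat\beta_{(u,v]}$ and $\beta^*_{(u,v]}$, the bound $2\,\bigl|\be^{*,T}_{(u,v]}\bX_{(u,v]}(\hat\beta_{(u,v]}-\beta^*_{(u,v]})\bigr|/n\le\lambda_0\|\hat\beta_{(u,v]}-\beta^*_{(u,v]}\|_1$ on $\T_0$ via \eqref{t0}, the support split made possible by $\beta^*_{(u,v],S_*^c}=0$ (which you rightly justify through the convex-combination representation of the oracle, Lemma~\ref{oracle}), and the final arithmetic $2(\lambda w-\lambda_0)\ge\lambda w$ and $2(\lambda w+\lambda_0)\le 3\lambda w$ forced by $2\lambda_0\le\lambda\sqrt\delta\le\lambda w$, with $w=\sqrt{\max(v-u,\delta)}$; this yields exactly the constants $2$ and $3$ of the statement.

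The one substantive point is the normalization caveat you flag at the end, and there you are in fact more careful than the paper. As you observe, the basic inequality inherited from \eqref{lassobeta} carries, in the $\|\cdot\|_2^2/n$ normalization, the effective penalty $(v-u)\lambda/w$, which equals $\lambda w$ only when $v-u\ge\delta$; your proof therefore covers $(u,v]\in I^{\delta n}$, whereas the lemma is stated for all $(u,v]\in I^1$. The paper's proof does not escape this issue, it only hides it: its displayed Basic Inequality puts $\tilde\lambda=\lambda\sqrt{\max(v-u,\delta)}$ as the penalty on both sides, which follows from \eqref{lassobeta} only when $v-u\ge\delta$. For $v-u<\delta$ the true effective penalty is strictly smaller than $\tilde\lambda$, the sign of $\|\beta^*_{(u,v]}\|_1-\|\hat\beta_{(u,v]}\|_1$ is unknown, and the cone step collapses (the coefficient in front of $\|\hat\beta_{(u,v],S_*^c}\|_1$ need not remain of order $\lambda w$ and can even become negative), so neither argument establishes the claim on all of $I^1$. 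Both establish it on $I^{\delta n}$, and, as you correctly note, that is the only regime in which the bound is invoked downstream: the Remark after Theorem~\ref{mixture}, Corollary~\ref{cormixture}, Lemma~\ref{lem4}, and the applications inside the proofs of Theorems~\ref{modsel} and \ref{modsel2} all apply the oracle bounds to Lasso estimators on intervals of length at least $\delta$ (the short interval $J_1$ appearing in case (b2) enters only through terms that cancel). Your proposal is therefore as complete as the paper's own proof and more transparent about its scope.
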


\begin{proof}
Fix a interval $(u,v]\in I^1$ and denote by $\tilde\lambda= \lambda\sqrt{\max(v-u,\delta)}$. The Basic Inequality  in \cite[Lemma~6.1]{peter-sara-book}, derived directly from the definition \eqref{lassobeta} 
gives
\begin{align*}
\|\bX_{(u,v]}(\hat\beta_{(u,v]} -& \beta^*_{(u,v]})\|_2^2/n + \tilde\lambda
\|\hat\beta_{(u,v]}\|_1\\
&\leq\;
 2 (\be^*_{(u,v]})^T \bX_{(u,v]}(\hat\beta_{(u,v]}-\beta^*_{(u,v]}) /{n} +  \tilde\lambda\|\beta^*_{(u,v]}\|_1\,.
\end{align*}
Now, on $\T_0$ and using $\tilde\lambda\geq 2\lambda_0$ we have  
\begin{align}\label{eqpr1}
2\|\bX_{(u,v]}(\hat\beta_{(u,v]} -& \beta^*_{(u,v]})\|_2^2/n + 2 \tilde\lambda\|\hat\beta_{(u,v]}\|_1\notag\\
&\leq\;\tilde\lambda \|\hat\beta_{(u,v]}-\beta^*_{(u,v]} \|_1+ 2\tilde\lambda\|\beta^*_{(u,v]}\|_1\,.
\end{align}
By using the triangle inequality we obtain
\begin{align*}
\|\hat\beta_{(u,v]}\|_1 &= \|\hat\beta_{(u,v],S_*}\|_1 + \|\hat\beta_{(u,v],S_*^c}\|_1\\
&\geq  \|\beta^*_{(u,v],S_*}\|_1  -  \|\hat\beta_{(u,v],S_*}-\beta^*_{(u,v],S_*}\|_1 +  \|\hat\beta_{(u,v],S_*^c}\|_1\,.
\end{align*}
On the other hand we also have that 
\[
  \|\hat\beta_{(u,v]}-\beta^*_{(u,v]}\|_1 =   \|\hat\beta_{(u,v],S_*}-\beta^*_{(u,v],S_*}\|_1
  +   \|\hat\beta_{(u,v],S_*^c}\|_1\,.
\]
By plugin-in these last expressions in \eqref{eqpr1} we finish the proof of Lemma~\ref{lem6.3}.
\end{proof}

We now prove the main result in Appendix~\ref{l1}, given by Theorem~\ref{mixture} and Corollary~\ref{cormixture}.

\begin{proof}[Proof of Theorem~\ref{mixture}]
The proof follows the same lines of reasoning as Theorem~6.1 in \cite{peter-sara-book}.
As before, fix a interval $(u,v]\in I^1$ and denote by $\tilde\lambda= \lambda\sqrt{\max(v-u,\delta)}$. Then on $\T_0\cap\T_1$, if $2\lambda_0\leq \lambda\sqrt{\delta}\leq \tilde\lambda$ we have, by Lemma~\ref{lem6.3}  that 
\begin{align*}
2\| \bX_{(u,v]}(\hat\beta_{(u,v]}-&\,\beta^*_{(u,v]})\|_2^2/n + \tilde\lambda\|\hat\beta_{(u,v]} -\beta^*_{(u,v]}\|_1 \\
&=\;2\| \bX_{(u,v]}(\hat\beta_{(u,v]}-\beta^*_{(u,v]})\|_2^2/n + \tilde\lambda\|\hat\beta_{(u,v],S_*^c}\|_1\\
&\qquad + \tilde\lambda\|\hat\beta_{(u,v],S_*} -\beta^*_{(u,v],S_*}\|_1\\
&\leq  \;4 \tilde\lambda\|\hat\beta_{_{(u,v]},S_*} -\beta^*_{_{(u,v]},S_*}\|_1\,.
\end{align*}
Now, if $s_*\lambda_1\leq \frac{\phi_*^2}{32}$ then  by Lemma~\ref{lem-comp} and the inequality $4ab\leq a^2+ 4b^2$
we can bound above the right hand side of the last expression by
\begin{align*}
4 \tilde\lambda \sqrt{2s_*}\|&\bX_{(u,v]}(\hat\beta_{(u,v]}-\beta^*_{(u,v]})\|_2/\sqrt{(v-u)n}\,\phi_*
\\
&\leq\; \|\bX_{(u,v]}(\hat\beta_{(u,v]}-\beta^*_{(u,v]})\|_2^2/n + 4\, \frac{\tilde\lambda^2 2s_*}{(v-u)\phi_*^2 }
\end{align*}
and this concludes the proof. 
\end{proof}

\begin{proof}[Proof of Corollary~\ref{cormixture}]
The result follows  by combining the result in Theorem~\ref{mixture}  with Lemmas~\ref{lemma-t0} and \ref{lemma-t1}.
\end{proof}

\subsection{Proofs of Theorems \ref{modsel} and \ref{modsel2}}\label{proofmain}

In this subsection we present the proof of Theorems~\ref{modsel} and ~\ref{modsel2}
and all the auxiliary results. 

We need some extra notation. Given the values $u\leq \eta \leq v$ and vectors $\beta,\beta^{(1)}$ and $\beta^{(2)}\in\R^p$ we can write 
\begin{align}\label{eqmain}
\|\bY_{(u,v]}-\bX_{(u,v]}\beta\|_2^2 &\;=\; \|\bY_{(u,\eta]}-\bX_{(u,\eta]}\beta^{(1)}\|_2^2  +  \|\bY_{(\eta,v]}-\bX_{(\eta,v]}\beta^{(2)}\|_2^2 \notag\\ 
&\quad + \|\bD_{(u,\eta]}(\beta,\beta^{(1)})\|_2^2  -  2\,\hat\be_{(u,\eta]}^T(\beta^{(1)}) \bD_{(u,\eta]}(\beta,\beta^{(1)}) \\
&\quad+ \| \bD_{(\eta,v]}(\beta,\beta^{(2)})\|_2^2  -  2\,\hat\be_{(\eta,v]}^T(\beta^{(2)}) \bD_{(\eta,v]}(\beta,\beta^{(2)})\,,\notag
\end{align}
where $\bD_{(u,\eta]}(\beta,\beta^{(1)}) = X_{(u,\eta]}(\beta- \beta^{(1)})$,
$\bD_{(\eta,v]}(\beta,\beta^{(2)}) = X_{(\eta,v]}(\beta- \beta^{(2)})$,
$\hat\epsilon_{(u,\eta]}(\beta^{(1)}) = \bY_{(u,\eta]}-\bX_{(u,\eta]}\beta^{(1)}$
and $\hat\epsilon_{(\eta,v]}(\beta^{(2)}) = \bY_{(\eta,v]}-\bX_{(\eta,v]}(\beta^{(2)})$.

\medskip 
We can now prove the following result. 

\begin{lemma}\label{lem1}
Suppose $k_0>1$ and that Assumptions 1-4 hold. Then on $\T_0\cap \T_1$,  if   $u<\alpha^0_j<v$ for some $j=1,\dotsc,k_0-1$ and $s_*\lambda_1 \leq \frac{\phi_*^2}{32}$ 
we have
\begin{align*}
\frac{\|\bD_{(u,\alpha_j^0]}(\beta^*_{(u,v]},\beta^*_{(u,\alpha^0_j]})\|_2^2}n+ \frac{\|\bD_{(\alpha_j^0,v]}(\beta^*_{(u,v]},\beta^*_{(\alpha^0_j,v]})\|_2^2}n
\geq  \frac{\min(\alpha_j^0-u,v-\alpha^0_j)  m_*^2 \phi_*^2 s_*}{8}\,.
\end{align*}
\end{lemma}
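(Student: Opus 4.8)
The plan is to exploit that each oracle $\beta^*_{(\cdot,\cdot]}$ is a \emph{deterministic} length‑weighted average of the true segment coefficients, to reduce the two prediction‑norm terms to a single quantity governed by $\Delta := \beta^*_{(u,\alpha^0_j]} - \beta^*_{(\alpha^0_j,v]}$, then to bound $\Delta$ from below via the compatibility and identifiability conditions, and finally to collect constants.

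First I would identify the oracles. Since $\E(X_i)=0$ and $\mathrm{Cov}(X_i)=\Sigma$ (Assumption 1), the minimizer in \eqref{betastar} is the best linear predictor, which for piecewise‑constant coefficients equals the average of the $\beta^0(\cdot)$ over $(u,v]$ weighted by the proportion of the interval lying in each true segment. Writing $a=\alpha^0_j-u$, $b=v-\alpha^0_j$, $\beta_L=\beta^*_{(u,\alpha^0_j]}$, $\beta_R=\beta^*_{(\alpha^0_j,v]}$ and $\beta_G=\beta^*_{(u,v]}$, splitting this weighted average at $\alpha^0_j$ gives $\beta_G=(a\beta_L+b\beta_R)/(a+b)$, and all three oracles are supported on $S_*$. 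Consequently $\beta_G-\beta_L=\tfrac{b}{a+b}(\beta_R-\beta_L)$ and $\beta_G-\beta_R=\tfrac{a}{a+b}(\beta_L-\beta_R)$, so both terms on the left‑hand side are scalar multiples of $\bX_{(\cdot,\cdot]}\Delta$, and the left‑hand side equals $\bigl(b^2\|\bX_{(u,\alpha^0_j]}\Delta\|_2^2+a^2\|\bX_{(\alpha^0_j,v]}\Delta\|_2^2\bigr)\big/\bigl((a+b)^2 n\bigr)$.

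Next I would pass to a lower bound through the empirical compatibility condition. Because $\Delta$ is supported on $S_*$ it trivially satisfies the cone condition $\|\Delta_{S_*^c}\|_1\le 3\|\Delta_{S_*}\|_1$, so on $\T_1$ (implied by $\T_0\cap\T_1$), using $s_*\lambda_1\le\phi_*^2/32$, Lemma~\ref{lem-comp} yields $\|\bX_{(u,\alpha^0_j]}\Delta\|_2^2/n\ge a\phi_*^2\|\Delta\|_1^2/(2s_*)$ and the analogous bound with $b$ on $(\alpha^0_j,v]$. Substituting and simplifying $ab^2+a^2b=ab(a+b)$ leaves a lower bound of $\tfrac{\phi_*^2\|\Delta\|_1^2}{2s_*}\cdot\tfrac{ab}{a+b}$, and since $a+b\le 2\max(a,b)$ we have $\tfrac{ab}{a+b}\ge\tfrac12\min(a,b)$.

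The main obstacle is the remaining step: bounding $\|\Delta\|_1=\|\beta_L-\beta_R\|_1$ below by $m_*s_*$ through Assumption 4. This is immediate when $(u,v]$ isolates the single change point $\alpha^0_j$, for then $\beta_L=\beta^0(j)$ and $\beta_R=\beta^0(j+1)$, and the identifiability bound with $i=j$, $k=j+1$ gives exactly $\|\beta^0(j)-\beta^0(j+1)\|_1\ge m_*s_*$. The delicate point is that the identifiability constant is stated only for the $\gamma$‑weighted (change‑point‑aligned) oracle combinations, whereas for an interval straddling several change points the weighted averages $\beta_L,\beta_R$ need not be of that canonical form (and can even coincide), so this is precisely where the argument must use that $(u,v]$ contains no further change point than $\alpha^0_j$. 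Granting $\|\Delta\|_1\ge m_*s_*$, the previous display becomes at least $\tfrac{m_*^2\phi_*^2 s_*\min(a,b)}{4}$, which dominates the claimed $\tfrac{m_*^2\phi_*^2 s_*\min(a,b)}{8}$, the spare factor of two absorbing the looseness in the compatibility and $\tfrac{ab}{a+b}$ estimates.
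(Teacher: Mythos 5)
Your first two steps are exactly those of the paper's own proof. The identity $(v-u)\beta^*_{(u,v]}=(\alpha^0_j-u)\,\beta^*_{(u,\alpha^0_j]}+(v-\alpha^0_j)\,\beta^*_{(\alpha^0_j,v]}$ (from Lemma~\ref{oracle}) reduces both prediction-norm terms to multiples of $\bX\Delta$ with $\Delta=\beta^*_{(u,\alpha^0_j]}-\beta^*_{(\alpha^0_j,v]}$; the empirical compatibility bound of Lemma~\ref{lem-comp} applies because $\Delta$ is supported on $S_*$ (so the cone condition is trivial); and the elementary estimate $ab/(a+b)\ge\min(a,b)/2$ with $a=\alpha^0_j-u$, $b=v-\alpha^0_j$ yields the constant $4$, stronger than the claimed $8$ --- which is also what the paper's computation gives. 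Up to that point your write-up is correct and coincides with the paper's argument.

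The genuine gap is the step you ``grant'': the bound $\|\Delta\|_1\ge m_*s_*$ for \emph{arbitrary} $u<\alpha^0_j<v$. You establish it only when $(u,v]$ contains no change point other than $\alpha^0_j$, and you assert that the lemma must implicitly be restricted to that case. It is not: the hypothesis allows $(u,v]$ to straddle several true change points, and that generality is what is used downstream --- in cases (b1)/(b2) of the proof of Theorem~\ref{modsel} the lemma is applied to $J=J_1\cup J_2$ where $J_1,J_2$ are segments of $\hat\alpha\cup\{\alpha^0_{j_1}\}$, which may well contain further true change points. The paper closes precisely this step with a separate result, Lemma~\ref{convex}, asserting $\inf\|\beta^*_{(u,\alpha^0_j]}-\beta^*_{(\alpha^0_j,v]}\|_1\ge m_*s_*$ over all admissible endpoints via a coordinate-wise ``vertex'' argument; relative to the paper, that lemma is what your proof is missing. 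That said, your parenthetical worry that the two averages ``can even coincide'' is not merely a worry but a fact, and it shows the difficulty is intrinsic: take $k_0=3$, $\alpha^0=(0,0.2,0.4,1)$, $\beta^0(1)=0$ and $\beta^0(3)=-\beta^0(2)\neq 0$. Then Assumption~4 holds with $m_*s_*=\tfrac12\|\beta^0(2)\|_1>0$, yet for $u=0.1$, $v=0.6$ one gets $\beta^*_{(u,\alpha^0_1]}=\beta^*_{(\alpha^0_1,v]}=0$, so the left-hand side of the lemma vanishes while the right-hand side is positive. Hence Assumption~4, which controls only change-point-aligned combinations, cannot by itself imply the general bound: the minimizer of the absolute-value objective in the proof of Lemma~\ref{convex} can sit at an interior zero-crossing rather than at a vertex (vertex-attainment holds for the signed, piecewise-monotone objective, not for its absolute value). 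So the step you left open is the genuinely delicate point --- it is a gap in your proof, and, as the example above indicates, it is also a gap in the paper's own argument that would need a strengthened identifiability assumption to repair.
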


\begin{proof}
Let $j=1,\dotsc, k_0-1$ and $u<\alpha^0_j<v$ such that $(u,\alpha^0_j],(\alpha^0_j,v]\in I^1$. Denote by  $\eta=\alpha_j^0$.  
By definition 
 we have   
\begin{align}\label{eq1}
\|\bD_{(u,\eta]}(\beta^*_{(u,v]},\beta^0(j-1))\|_2^2\;  = &\; \|\bX_{(u,\eta]}(\beta^*_{(u,v]}-\beta^0(j-1))\|_2^2
\end{align}
and a similar expression for $\|\bD_{(\eta,v]}(\beta^*_{(u,v]},\beta^0(j))\|_2^2$. 
By Assumptions~3 and Lemma~\ref{lem-comp} we have 
\begin{align*}
&\frac{\|\bX_{(u,\eta]}(\beta^*_{(u,v]}-\beta^*_{(u,\eta]}\|^2_2}{n}+ \frac{\|\bX_{(\eta,v]}(\beta^*_{(u,v]}-\beta^*_{(\eta,v]}\|^2_2}{n}\\
& \qquad\;\geq\; \frac{(\eta-u)\|\beta^*_{(u,v]}-\beta^*_{(u,\eta]}\|_1^2\phi_*^2}{2s_*} +
 \frac{(v-u)\|\beta^*_{(u,v]}-\beta^*_{(\eta,v]}\|_1^2\phi_*^2}{2s_*}
\end{align*}
Now observe that 
\[
(v-u)\beta^*_{(u,v]} = (\eta-u)\beta^*_{(u,\eta]} + (v-\eta)\beta^*_{(\eta,v]}
\]
then 
\[
\beta^*_{(u,v]}-\beta^*_{(u,\eta]} = \Bigl(\frac{v-\eta}{v-u} \Bigr)\Bigl(\beta^*_{(\eta,v]} -  \beta^*_{(u,\eta]} \Bigr)
\]
and by Assumption~4 and Lemma~\ref{convex} we have 
\[
\|\beta^*_{(u,v]}-\beta^*_{(u,\eta]}\|_1 = \frac{v-\eta}{v-u}\|\beta^*_{(\eta,v]} -  \beta^*_{(u,\eta]}\|_1\geq  \frac{(v-\eta)m_* s_*}{(v-u)} \,.
\]
Similarly we obtain
\[
\|\beta^*_{(u,v]}-\beta^*_{(\eta,v]}\|_1 = \frac{\eta-u}{v-u}\|\beta^*_{(\eta,v]} -  \beta^*_{(u,\eta]}\|_1\geq  \frac{(\eta-u)m_* s_*}{(v-u)} \,.
\]
Then 
\begin{align*}
 \frac{(\eta-u)\|\beta^*_{(u,v]}-\beta^*_{(u,\eta]})\|_1^2\phi_*^2}{2s_*} +
 \frac{(v-u)\|\beta^*_{(u,v]}-\beta^*_{(\eta,v]}\|_1^2\phi_*^2}{2s_*} \;\geq\; \frac{(\eta-u)(v-\eta)\phi_*^2}{2(v-u)s_*}
\end{align*}
and as $\max(\eta-u,v-\eta)/(v-u)\geq 1/2$ this concludes the proof.
\end{proof}

\begin{lemma}\label{lem2}
For any interval $(u,\eta]\in I^1$ and any $\beta\in\R^p$ we have 
 \[
2\, |\hat\be_{(u,\eta]}^T(\hat\beta_{(u,\eta]}) \bD_{(u,\eta]}(\beta,\hat\beta_{(u,\eta]})| /n \;\leq\; 
\lambda \sqrt{\eta-u}  \, \|\beta-\hat\beta_{(u,\eta]}\|_1\,.
 \]
 Additionally, on $\T_0$, if $2\lambda_0 \leq \lambda \sqrt\delta$ we have 
  \[
2\, |\hat\be_{(u,\eta]}^T(\beta^*_{(u,\eta]}) \bD_{(u,\eta]}(\beta,\beta^*_{(u,\eta]})| /n \;\leq\; 
\frac{\lambda\sqrt\delta}{2} \, \|\beta-\beta^*_{(u,\eta]}\|_1\,.
 \]
\end{lemma}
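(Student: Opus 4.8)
The plan is to prove the two inequalities separately, since they have different origins. For the first inequality, the key observation is that $\hat\beta_{(u,\eta]}$ is defined in \eqref{lassobeta} as the minimizer of a penalized least-squares objective on $(u,\eta]$, so it satisfies the \textbf{KKT (first-order optimality) conditions}. I would write down the subgradient condition: there exists $\kappa\in\R^p$ with $\|\kappa\|_\infty\le 1$ such that
\begin{equation*}
-2\bX_{(u,\eta]}^T(\bY_{(u,\eta]}-\bX_{(u,\eta]}\hat\beta_{(u,\eta]})/(\eta-u)n + \frac{\lambda}{\sqrt{\max(\eta-u,\delta)}}\,\kappa = 0\,.
\end{equation*}
Recognizing that $\hat\be_{(u,\eta]}(\hat\beta_{(u,\eta]}) = \bY_{(u,\eta]}-\bX_{(u,\eta]}\hat\beta_{(u,\eta]}$ is exactly the residual and that $\bD_{(u,\eta]}(\beta,\hat\beta_{(u,\eta]}) = \bX_{(u,\eta]}(\beta-\hat\beta_{(u,\eta]})$, the inner product $\hat\be_{(u,\eta]}^T(\hat\beta_{(u,\eta]})\bX_{(u,\eta]}(\beta-\hat\beta_{(u,\eta]})/n$ can be rewritten in terms of $\kappa$. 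Then I would apply H\"older's inequality $|\kappa^T(\beta-\hat\beta_{(u,\eta]})|\le\|\kappa\|_\infty\|\beta-\hat\beta_{(u,\eta]}\|_1\le\|\beta-\hat\beta_{(u,\eta]}\|_1$ and bound $\sqrt{\eta-u}/\sqrt{\max(\eta-u,\delta)}\le 1$ to obtain the stated bound $\lambda\sqrt{\eta-u}\,\|\beta-\hat\beta_{(u,\eta]}\|_1$.

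For the second inequality the mechanism is entirely different: here $\beta^*_{(u,\eta]}$ is the population oracle from \eqref{betastar}, not a penalized estimator, so there is no KKT condition to exploit. Instead the term $2\hat\be_{(u,\eta]}^T(\beta^*_{(u,\eta]})\bX_{(u,\eta]}(\beta-\beta^*_{(u,\eta]})/n$ is a genuine empirical-process quantity, and I would control it directly on the event $\T_0$. The strategy is to expand the inner product coordinatewise, write it as $\sum_{r=1}^p (\beta-\beta^*_{(u,\eta]})_r \cdot 2\be^{*,T}_{(u,\eta]}\bX^{(r)}_{(u,\eta]}/n$, and then bound each empirical correlation using the definition \eqref{t0} of $\T_0$, which guarantees $\max_{(u,v]\in I^1}\max_r 2|\be^{*,T}_{(u,v]}\bX^{(r)}_{(u,v]}|/n\le\lambda_0$ uniformly. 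An application of H\"older then yields the bound $\lambda_0\|\beta-\beta^*_{(u,\eta]}\|_1$, and invoking the assumption $2\lambda_0\le\lambda\sqrt\delta$, i.e. $\lambda_0\le\frac{\lambda\sqrt\delta}{2}$, delivers the claimed inequality.

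The routine parts are the two H\"older applications; the part requiring a little care is making sure the empirical process term in the second inequality matches the exact form controlled by $\T_0$. The subtlety is that $\T_0$ is stated for $\be^*_{(u,v]}$, the residual of the \emph{population} oracle, and $\hat\be_{(u,\eta]}(\beta^*_{(u,\eta]})$ must be identified with precisely this $\be^*_{(u,\eta]}$ via the definition \eqref{epsstar}, with the interval $(u,\eta]$ lying in $I^1$ so that the uniform bound applies. I expect the main (though still mild) obstacle to be bookkeeping: confirming that every quantity appearing in the two displays coincides with the objects for which KKT optimality (first inequality) and the event $\T_0$ (second inequality) have already been established earlier in the excerpt, so that no new probabilistic estimate is needed beyond what $\T_0$ already provides.
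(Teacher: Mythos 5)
Your proposal is correct and matches the paper's own proof: the first inequality follows from the Lasso KKT/subgradient characterization (the paper invokes it via Lemma~2.1 of the van de Geer--B\"uhlmann book, giving $\|2\bX_{(u,\eta]}^T(\bY_{(u,\eta]}-\bX_{(u,\eta]}\hat\beta_{(u,\eta]})/n\|_\infty \le \lambda(\eta-u)/\sqrt{\max(\eta-u,\delta)} \le \lambda\sqrt{\eta-u}$) combined with H\"older, and the second follows by identifying $\hat\be_{(u,\eta]}(\beta^*_{(u,\eta]})$ with $\be^*_{(u,\eta]}$ from \eqref{epsstar}, applying the uniform bound defining $\T_0$, H\"older, and the hypothesis $2\lambda_0\le\lambda\sqrt\delta$. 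No gaps.
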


\begin{proof}
Note that 
\begin{align*}
\hat\be_{(u,\eta]}^T(\hat\beta_{(u,\eta]}) \bD_{(u,\eta]}(\beta,\hat\beta_{(u,\eta]}) &\;=\;
 (\bY_{(u,\eta]} - \bX_{(u,\eta]} \hat\beta_{(u,\eta]})^T \bX_{(u,\eta]} (\beta-\hat\beta_{(u,\eta]})\\
&\;=\; (\bX_{(u,\eta]}^T (\bY_{(u,\eta]} - \bX_{(u,\eta]} \hat\beta_{(u,\eta]})  )^T(\beta-\hat\beta_{(u,\eta]})\,.
\end{align*}
By \cite[Lemma~2.1]{peter-sara-book} we have that as $\hat\beta_{(u,\eta]}$ is the solution of \eqref{lassobeta} then 
\[
\|2 (\bX_{(u,\eta]}^T (\bY_{(u,\eta]} - \bX_{(u,\eta]} \hat\beta_{(u,\eta]})/n \|_\infty 
\;\leq\; \frac{\lambda \,(\eta-u)}{\sqrt{\max(\eta-u,\delta)}}\;\leq\; \lambda \sqrt{\eta-u} \,.
\]
Therefore
\begin{align*}
2\,|\hat\be_{(u,\eta]}^T \bD_{(u,\eta]}(\beta)|/n &\;\leq\;
\|2 (\bX_{(u,\eta]}^T (\bY_{(u,\eta]} - \bX_{(u,\eta]} \hat\beta_{(u,\eta]})/n \|_\infty \|\beta-\hat\beta_{(u,\eta]}\|_1\\
&\;\leq\; \lambda\sqrt{\eta-u} \|\beta-\hat\beta_{(u,\eta]}\|_1\,.\qedhere
\end{align*}
The bound for $2\, |\hat\be_{(\eta,v]}^T(\beta^*_{(u,\eta]}) \bD_{(\eta,v]}(\beta,\beta^*_{(u,\eta]})| /n$ is obtained analogously, by noting that on $\T_0$, if $2\lambda_0\leq \lambda\sqrt\delta$ we have 
\[
\|2 (\bX_{(u,\eta]}^T (\bY_{(u,\eta]} - \bX_{(u,\eta]}\beta^*_{(u,\eta]})/n \|_\infty \;=\; \max_{j=1,\dotsc, p} 2|\be^{*,T}_{(u,\eta]}\bX^{(j)}_{(u,\eta]}|\;\leq\;\lambda_0 \;\leq\; \frac{\lambda\sqrt\delta}{2}\,.\qedhere
\]
\end{proof}

\begin{lemma}\label{lem4}
Suppose Assumptions 1-4  hold and let 
\[
(u,v] \subset (\alpha_{j-1}^0-\lambda\sqrt\delta/d_*, \alpha^0_j+\lambda\sqrt\delta/d_*]\cap(0,1]
\]
for some $j=1,\dotsc,k_0$, with $(u,v]\in I^{\delta n}$ and $\lambda\sqrt\delta < r(\alpha^0)d_*$. 
Then on $\T_0\cap \T_1$, if $s_*\lambda_1 \leq \frac{\phi_*^2}{32}$ and  $2\lambda_0\leq \lambda\sqrt{\delta}$
we have
\begin{equation*}
\|\bX_{(u,v]}(\beta^0(j) - \hat\beta_{(u,v]})\|_2^2/n 
+ \lambda \sqrt{v-u} \|\beta^0(j) - \hat\beta_{(u,v]}\|_1\;\leq\; c(r) \lambda^2s_*\,,
\end{equation*}
where 
\[
c(r) = \Bigl( \frac{rK_X M_*}{d_*} + \frac{\sqrt{8}}{\phi_*}\Bigr)^2\,,\qquad r = \1\{u<\alpha^0_{j-1}\}+\1\{\alpha^0_{j}<v\}\,.
\] 
\end{lemma}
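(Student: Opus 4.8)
The plan is to route the comparison of $\hat\beta_{(u,v]}$ with $\beta^0(j)$ through the interval oracle $\beta^*_{(u,v]}$ of \eqref{betastar}, splitting
\[
\beta^0(j) - \hat\beta_{(u,v]} = \bigl(\beta^0(j) - \beta^*_{(u,v]}\bigr) + \bigl(\beta^*_{(u,v]} - \hat\beta_{(u,v]}\bigr)
\]
and handling a deterministic bias term plus a stochastic estimation term. The second piece is exactly what Theorem~\ref{mixture} governs: since $(u,v]\in I^{\delta n}$ forces $\max(v-u,\delta)=v-u$, the hypotheses $2\lambda_0\le\lambda\sqrt\delta$ and $s_*\lambda_1\le\phi_*^2/32$ give, on $\T_0\cap\T_1$,
\[
\|\bX_{(u,v]}(\hat\beta_{(u,v]}-\beta^*_{(u,v]})\|_2^2/n + \lambda\sqrt{v-u}\,\|\hat\beta_{(u,v]}-\beta^*_{(u,v]}\|_1 \;\le\; \frac{8\lambda^2 s_*}{\phi_*^2}\,,
\]
which will supply the $\sqrt 8/\phi_*$ contribution to $c(r)$. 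The first piece must supply the $rK_XM_*/d_*$ contribution.

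For the bias I would first pin down $\beta^*_{(u,v]}$ explicitly. Minimizing $\E\|\bY_{(u,v]}-\bX_{(u,v]}\beta\|_2^2=\sum_{i\,:\,i/n\in(u,v]}(\beta^{(i)}-\beta)^T\Sigma(\beta^{(i)}-\beta)+\mathrm{const}$ over $\beta$, with $\Sigma$ positive definite by Assumption~3, shows that $\beta^*_{(u,v]}$ is the sample-size weighted average of the true segment coefficients occurring on $(u,v]$. The hypothesis $\lambda\sqrt\delta/d_*<r(\alpha^0)$ guarantees the enlarged interval overruns $\alpha^0_{j-1}$ and $\alpha^0_j$ by at most one segment per side, so only $\beta^0(j-1),\beta^0(j),\beta^0(j+1)$ appear and
\[
\beta^0(j)-\beta^*_{(u,v]} = w_L\bigl(\beta^0(j)-\beta^0(j-1)\bigr) + w_R\bigl(\beta^0(j)-\beta^0(j+1)\bigr)\,,
\]
with $w_L,w_R$ the fractions of $(u,v]$ in the two overflow pieces. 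Here $r=\1\{u<\alpha^0_{j-1}\}+\1\{\alpha^0_j<v\}$ counts the active overflows; when $r=0$ the bias vanishes and the claim collapses to Theorem~\ref{mixture}. Each overflow has length at most $\lambda\sqrt\delta/d_*$ by hypothesis, while $v-u\ge\delta$, so $w_L,w_R\le(\lambda\sqrt\delta/d_*)/(v-u)$. I would then bound the $\ell_1$-bias by $\|\beta^0(j)-\beta^*_{(u,v]}\|_1\le(w_L+w_R)s_*M_*$, using $\|\beta^0(j)-\beta^0(j\pm1)\|_\infty\le M_*$ and support contained in $S_*$, and the prediction-bias by $\|\bX_{(u,v]}(\beta^0(j)-\beta^*_{(u,v]})\|_2^2/n\le(v-u)K_X^2\|\beta^0(j)-\beta^*_{(u,v]}\|_1^2$, from $|X_i^T v|\le K_X\|v\|_1$ (or equivalently via $\hat\Sigma_{(u,v]}\approx(v-u)\Sigma$ on $\T_1$).

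To recombine, I would apply $\|A+B\|_2\le\|A\|_2+\|B\|_2$ to the prediction error and the triangle inequality to the $\ell_1$ term, so that the quantity to be bounded is at most $\bigl(\sqrt{P^*+L^*}+\sqrt{\hat P+\hat L}\bigr)^2$, where $\hat P+\hat L$ is the oracle bound $8\lambda^2 s_*/\phi_*^2$ and $P^*,L^*$ are the prediction- and $\ell_1$-bias; the cross term then produces the mixed summand $2(rK_XM_*/d_*)(\sqrt8/\phi_*)\lambda^2 s_*$ and completes the square into $c(r)\lambda^2 s_*$. The hard part will be the bookkeeping inside the bias term: one must push the overflow length $\lambda\sqrt\delta/d_*$ and the definition $d_*=m_*^2\phi_*^2/(32M_*)$ through the $K_X$- and $M_*$-based estimates so that $\sqrt{P^*+L^*}$ matches $(rK_XM_*/d_*)\lambda\sqrt{s_*}$, with the inequality $v-u\ge\delta$ used precisely to cancel the stray $\sqrt\delta$ factors. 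Controlling the quadratic form $v^T\Sigma v$ for the sparse direction $v=\beta^0(j)-\beta^*_{(u,v]}$ only through $\ell_\infty/\ell_1$ bounds (all that $\|X_i\|_\infty\le K_X$ affords) is the delicate point, and is where the constant $c(r)=(rK_XM_*/d_*+\sqrt8/\phi_*)^2$ must be made to emerge exactly.
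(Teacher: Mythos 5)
Your route is the same as the paper's: decompose $\beta^0(j)-\hat\beta_{(u,v]}$ through the interval oracle $\beta^*_{(u,v]}$, control the estimation part by Theorem~\ref{mixture} (using that $\max(v-u,\delta)=v-u$ on $I^{\delta n}$), identify $\beta^*_{(u,v]}$ as the length-weighted average of the segment coefficients meeting $(u,v]$ (the paper records this as Lemma~\ref{oracle}), bound the two overflow weights by $(\lambda\sqrt{\delta}/d_*)/(v-u)$ and use $v-u\ge\delta$ to trade $\sqrt{\delta}/(v-u)$ for $1/\sqrt{v-u}$, and finish by expanding the square so that a cross term appears. All of this matches the paper's proof step for step, including the observation that $r=0$ reduces the lemma to Theorem~\ref{mixture}.

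The step you leave open, however, is a genuine sticking point, and your version of the bias estimate does not deliver the constant as stated. You bound the prediction bias by $(v-u)K_X^2\|\beta^0(j)-\beta^*_{(u,v]}\|_1^2$ and then use $\|\beta^0(j)-\beta^*_{(u,v]}\|_1\le s_*\|\beta^0(j)-\beta^*_{(u,v]}\|_\infty\le rM_*\lambda s_*/(d_*\sqrt{v-u})$; this yields $r^2K_X^2M_*^2\lambda^2 s_*^2/d_*^2$, i.e.\ a factor $s_*^2$ where $c(r)\lambda^2 s_*$ carries a single $s_*$ (and similarly $s_*^{3/2}$ rather than $s_*$ in the cross term). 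The paper obtains the single $s_*$ by writing $\|\bX_{(u,v]}w\|_2^2/n\le(v-u)\,s_*K_X^2\|w\|_\infty^2$ for the bias direction $w$, i.e.\ it pays only one factor $s_*$ when passing from the $\ell_\infty$ norm to the quadratic form. Be aware that this step is not implied by Assumption~1: for $w$ supported on $S_*$ the sharp elementary bound is $(X_i^Tw)^2\le K_X^2\|w\|_1^2\le s_*^2K_X^2\|w\|_\infty^2$, which is exactly what you wrote; moreover, the paper's final summation silently omits the term $\lambda\sqrt{v-u}\,\|\beta^0(j)-\beta^*_{(u,v]}\|_1$ (your $L^*$, worth up to $rM_*\lambda^2 s_*/d_*$). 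So your plan, carried out honestly, proves the lemma with $c(r)$ replaced by roughly $\bigl(rK_XM_*\sqrt{s_*}/d_*+\sqrt{8}/\phi_*\bigr)^2+rM_*/d_*$; the constant printed in the lemma is reached in the paper only via the two shortcuts just described. The weaker constant still supports the downstream argument in Theorem~\ref{modsel} after inflating $c_*$ in its conditions (3)--(4) accordingly, but you should not expect $c(r)$ to ``emerge exactly'' from the estimates that Assumptions~1--4 actually afford.
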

Note that Lemma \ref{lem4} is taking the bias $\|\beta^*_{(u,v]} -
  \beta^0(j)\|_1$ into account, as pointed out in the proof.
\begin{proof}
Observe that
\begin{align*}
\|&\bX_{(u,v]}(\beta^0(j) - \hat\beta_{(u,v]})\|_2^2/n 
+ \lambda \sqrt{v-u} \|\beta^0(j) - \hat\beta_{(u,v]}\|_1   \\ 
&\leq \|\bX_{(u,v]}(\beta^0(j) - \beta^*_{(u,v]})\|_2^2/n + 2 \|\bX_{(u,v]}(\beta^0(j)-\beta^*_{(u,v]})\|_2\|\bX_{(u,v]}(\beta^*_{(u,v]}-\hat\beta_{(u,v]})\|_2/n\\
& \, +  \|\bX_{(u,v]}(\beta^*_{(u,v]} - \hat\beta_{(u,v]})\|_2^2/n + \lambda \sqrt{v-u}\|\beta^*_{(u,v]} - \hat\beta_{(u,v]}\|_1 \\
&\, +   \lambda \sqrt{v-u}\|\beta^0(j) - \beta^*_{(u,v]}\|_1   \,.
\end{align*}
By Theorem~\ref{mixture} we obtain that 
\[
 \|\bX_{(u,v]}(\beta^*_{(u,v]} - \hat\beta_{(u,v]})\|_2^2/n  +  \lambda\sqrt{v-u} \|\beta^*_{(u,v]} - \hat\beta_{(u,v]}\|_1  \;\leq\; \frac{8 \lambda^2 s_*}{\phi_*^2}\,.
\]
On the other hand, if $\lambda\sqrt\delta < r(\alpha^0) d_*$ we have 
\begin{align*}\label{boundvec}
\|\beta^0(j) - \beta^*_{(u,v]}\|_\infty\;&\leq\; \frac{\max(\alpha_{j-1}^0-u,0)}{(v-u)} \|\beta^0(j)-\beta^0(j-1)\|_\infty\\
& \quad+  \frac{\max(v - \alpha_{j-1}^0,0)}{(v-u)} \|\beta^0(j+1)-\beta^0(j)\|_\infty\\
& \;\leq\; \frac{r M_*\lambda}{d_*\sqrt{v-u}}\,.
\end{align*}
Note that this inequality shows in particular that when $u$ is at distance at most $d$ of $\alpha^0_{j-1}$ and $v$ is at distance at most $d$ of $\alpha^0_j$ then the ``bias''  between $\beta^0(j)$ and $\beta^*_{(u,v]}$, measured by $\|\beta^0(j)-\beta^*_{(u,v]}\|_1$,  is of order $d M_*s_*$. 
Then, by using this bound  we also obtain that 
\begin{align*}
\|\bX_{(u,v]}&(\beta^0(j) - \beta^*_{(u,v]})\|_2^2/n + 2 \|\bX_{(u,v]}(\beta^0(j)-\beta^*_{(u,v]})\|_2\|\bX_{(u,v]}(\beta^*_{(u,v]}-\hat\beta_{(u,v]})\|_2/n\\
\leq &\;\;(v-u)s_* K_X^2\|\beta^0(j) - \beta^*_{(u,v]})\|_\infty^2\\
&\quad + 2 \sqrt{(v-u)s_*}K_X \|\beta^0(j) - \beta^*_{(u,v]}\|_\infty\|\bX_{(u,v]}(\beta^*_{(u,v]}-\hat\beta_{(u,v]})\|_2/\sqrt{n}\\
\leq &\;\;  \frac{r^2K_X^2 M_*^2\lambda^2 s_*}{d^2_*}+  \frac{ 2\sqrt{8} s_* rK_X M_* \lambda^2}{ d_*\phi_*}\,.
\end{align*}
By summing all the above bounds we obtain 
\begin{equation*}
\|\bX_{(u,v]}(\beta^0(j) - \hat\beta_{(u,v]})\|_2^2/n 
+ \lambda\sqrt{v-u} \|\beta^0(j) - \hat\beta_{(u,v]}\|_1
 \;\leq\;  c(r)\lambda^2s_*\,,
\end{equation*}
where  
\[
c(r) = \Bigl(\frac{rK_X M_*}{d_*} + \frac{\sqrt{8}}{\phi_*}\Bigr)^2 \,.\qedhere 
\]
\end{proof}

Now we can prove the main results in Section~\ref{mainsec} and \ref{algsec}.

\begin{proof}[Proof of Theorem~\ref{modsel}]
We begin by proving that points 1-3 hold on $\T_0\cap \T_1$ if  the conditions of the theorem are satisfied.   Then the  probability lower bound follows by combining this fact  with Lemmas~\ref{lemma-t0} and \ref{lemma-t1}. \\
To simplify notation, given a vector $\alpha$ as in \eqref{alph}, with 
$r(\alpha)\geq \delta$,  lets  denote by
$G(\alpha)$ the 
value of the function in \eqref{hatalpha} corresponding to the vector $\alpha$; that is
\[
G(\alpha) \;=\;  \sum_{j=1}^{\ell(\alpha)} L(I_j(\alpha),\hat\beta(j)) + \ell(\alpha)\gamma\,,
\]
where $\hat\beta(j)$ is given by \eqref{hatbeta} and $L=L_n$. 
By the identity in \eqref{lassobeta2} we have that  
\begin{equation}\label{G2}
G(\alpha) = \sum_{j=1}^{\ell{\alpha}} L(I_j(\alpha),\hat\beta_{I_j(\alpha)}) + \ell(\alpha) \gamma\,,
\end{equation}
where $\hat\beta_{I_j(\alpha)}$ is  the Lasso estimator for the interval $I_j(\alpha)$ given by \eqref{lassobeta}.  In the sequel we will also need the function $G(\alpha)$  defined on vectors $\alpha$ such that $r(\alpha)<\delta$; in these cases we consider the ``extended'' version \eqref{G2}, because $\hat\beta_{I_j(\alpha)}$ is defined in \eqref{lassobeta} even if $r_j(\alpha)<\delta$. 

For any $j=1,\dotsc,k_0$ denote by $\B(\alpha^0_j,\lambda\sqrt\delta /d_*)$ the ball of center $\alpha^0_j$ and radius $\lambda\sqrt\delta/d_*$. First we will show that on $\T_0\cap\T_1$, if the conditions of the theorem are satisfied we must have $\ell(\hat\alpha)=k_0$ and $\|\hat\alpha - \alpha^0\|_1\leq \lambda\sqrt\delta/d_*$, by showing that 
\begin{equation}\label{condalph}
\hat\alpha\cap \B(\alpha^0_j,\lambda\sqrt\delta/d_*) = \hat\alpha_j
\end{equation}
for all $j=1,\dotsc,k_0$. 
 To show this, we will prove that if $\hat\alpha$ does not satisfy \eqref{condalph} then there exists another ordered vector $\alpha=(\alpha_0,\dotsc,\alpha_k)$ such that $\alpha_0=0$, $\alpha_k=1$, $r(\alpha)\geq\delta$ and satisfying
\begin{equation}\label{ineq}
G(\alpha) \;<\; G(\hat\alpha)
\end{equation}
which contradicts the fact that $\hat\alpha$ minimizes \eqref{hatalpha}.
So, suppose that \eqref{condalph} does not hold, 
we distinguish two possible cases:

\begin{enumerate}
\item[(a)]  There exists some $i$, $1\leq i \leq \hat k -1$,  such that 
$\{\hat\alpha_{i-1},\hat\alpha_i,\hat\alpha_{i+1}\}\subset(\alpha^0_{j-1}-\lambda\sqrt\delta/d_*,\alpha^0_{j}+\lambda\sqrt\delta/d_*)\cap(0,1]$ for some  $j$, $1\leq j \leq k_0$.
\item[(b)]  $\hat\alpha\cap \B(\alpha^0_j,\lambda\sqrt\delta/d_*) = \emptyset$ for some $j=1,\dotsc,k_0-1$.
\end{enumerate}
In the case (a) define 
\[
\alpha= (\hat\alpha_0,\dotsc,\hat\alpha_{i-1},\hat\alpha_{i+1},\dotsc,\hat\alpha_{\ell(\hat\alpha)})
\]
so that $\ell(\alpha)=\ell(\hat\alpha)-1$. 
Denote by $J_1$ and $J_2$ the intervals
\[
J_1 = (\hat\alpha_{i-1},\hat\alpha_{i}]\,,\quad J_2 = (\hat\alpha_{i},\hat\alpha_{i+1}]
\]
and let $J$ denote their union $J=(\alpha_{i-1},\alpha_{i+1}]$.
Then
we obtain
\begin{align*}
G(\alpha) - G(\hat\alpha)\; 
&=\; L(J,\hat\beta_{J})   - L(J_1,\hat\beta_{J_1}) -L(J_2,\hat\beta_{J_2}) - \gamma\,.
\end{align*}
By the definition \eqref{hatbeta} we have that 
\begin{align*}
L(J,\hat\beta_{J}) \;\leq \;  
 &\;L(J,\beta^0(j)) + \lambda\sqrt{|J|} \|\beta^0(j)-\hat\beta_J\|_1
\end{align*}
and by the equality \eqref{eqmain} with $\beta=\beta^0(j)$,  
$\beta^{(1)} =\hat\beta_{J_1}$,  $\beta^{(2)} =\hat\beta_{J_2}$ and $\eta=\hat\alpha_i$ we have that 
\begin{align*}
L(J,\beta^0(j)) \;=\; &L(J_1,\hat\beta_{J_1}) +  L(J_2,\hat\beta_{J_2})+  \frac{\|\bD_{J_1}(\beta^0(j),\hat\beta_{J_1})\|_2^2}n -2 \frac{\hat\be_{J_1}^T(\hat\beta_{J_1})\bD_{J_1}(\beta^0(j),\hat\beta_{J_1})}n\\  &+ 
 \frac{\|\bD_{J_2}(\beta^0(j),\hat\beta_{J_2})\|_2^2}n -2\frac{\hat\be_{J_2}^T(\hat\beta_{J_2})\bD_{J_2}(\beta^0(j),\hat\beta_{J_2})}n\,.
\end{align*}
Then by Lemmas~\ref{lem2} and \ref{lem4} we have that
\begin{align}\label{bound-a}
L( J,\beta^0(j)) - &\,L(J_1,\hat\beta_{J_1}) - L(J_2,\hat\beta_{J_2}) \notag\\
 \leq&\;\, \|\bX_{J_1}(\beta^0(j) - \hat\beta_{J_1})\|_2^2/n +\lambda\sqrt{|J_1|} \|\beta^0(j) - \hat\beta_{J_1}\|_1\notag\\
& +\;  \|\bX_{J_2}(\beta^0(j) - \hat\beta_{J_2})\|_2^2/n +\lambda\sqrt{|J_2|} \|\beta^0(j) - \hat\beta_{J_2}\|_1\notag\\
\leq& \;\, 2c_*\lambda^2s_*\,.
\end{align}
Also by Lemma~\ref{lem4} we have that 
\[
\lambda\sqrt{|J|} \|\beta^0(j)-\hat\beta_J\|_1 \;\leq\; c(2)\lambda^2s_* \;\leq\; 4c_*\lambda^2s_* 
\]
therefore
\begin{align*}\label{bound1}
G(\alpha) - G(\hat\alpha) \;&\leq\;  6 c_* \lambda^2s_* -  \gamma
\end{align*}
and  if 
\[
\gamma >  6 c_* \lambda^2 s_*
\]
we obtain $G(\alpha)<G(\hat\alpha)$ which is a contradiction. \\
In  case (b), let $j_1$ be such that $\hat\alpha\cap \B(\alpha^0_{j_1},\lambda\sqrt\delta/d_*) = \emptyset$. We have $1\leq j_1 \leq k_0-1$. We now distinguish two possible sub-cases: 
(b1) $\hat\alpha\cap \B(\alpha^0_{j_1},\delta) = \emptyset$;  and (b2) $\hat\alpha\cap \B(\alpha^0_{j_1},\delta) \neq \emptyset$. In case (b1) we define $\alpha=\hat\alpha\cup\{\alpha^0_{j_1}\}$ and then $\alpha$ is a valid candidate vector for the minimization \eqref{hatalpha} because $r(\alpha)\geq\delta$. Denote by $J_1$ and $J_2$ the intervals in $\alpha$ that contain (as an extreme) the point $\alpha^0_{j_1}$; that is 
\[
J_1 = (\alpha_{r_i-1},\alpha^0_{j_1}]\,,\quad J_2 = (\alpha^0_{j_1},\alpha_{r_i+1}]\,,\qquad \alpha_{r_i} = \alpha^0_{j_1}\,,
\]
and let $J$ denote their union $J=(\alpha_{r_i-1},\alpha_{r_i+1}]$.  We have that
\begin{align}\label{Gdif}
G(\hat\alpha)&\, - G(\alpha)
\,=\; L( J,\hat\beta_{J}) - L(J_1,\hat\beta_{J_1})-L(J_2,\hat\beta_{J_2}) - \gamma\,.
\end{align}
By the equality \eqref{eqmain} (with $\eta=v$, $\beta=\hat\beta_J$ and $\beta^{(1)}=\beta^*_J$), Lemma~\ref{lem2} 
and Theorem~\ref{mixture}  we obtain 
\[
|L(J,\hat\beta_{J}) -  L(J,\beta^*_{J}) | \;\leq\;  \frac{8\lambda^2 s_*}{\phi_*^2}
\]
and the same applies to the intervals $J_1$ and $J_2$. 
Then, one more time by the equality \eqref{eqmain}  (with $\beta=\beta^*_J$, $\beta^{(1)}=\beta^*_{J_1}$ and $\beta^{(2)}=\beta^*_{J_2}$),  Lemmas~\ref{lem1} and \ref{lem2} and Theorem~\ref{mixture} 
we have that
\begin{align*}
L( J,\hat\beta_{J}) - L(J_1,\hat\beta_{J_1}) - L(J_2,\hat\beta_{J_2})\;&\geq \;  L( J,\beta^*_{J})  -  L(J_1,\beta^*_{J_1}) - L(J_2,\beta^*_{J_2}) - \frac{24\lambda^2s_*}{\phi_*^2} \\
&\geq\; \frac{\delta m_*^2 \phi_*^2 s_*}{8} - \lambda\sqrt\delta M_*s_* - \frac{24\lambda^2 s_*}{\phi_*^2} 
\end{align*}
and therefore, as $\lambda < \sqrt\delta M_*\phi_*^2/24$ we obtain  
\[
G(\hat\alpha) - G(\alpha) \;\geq\;  \frac{ \delta m_*^2 \phi_*^2 s_*}{8} - 2\lambda\sqrt\delta M_* s_*- \gamma\,.
\]
In this way, if
\[
\gamma + 2\lambda\sqrt\delta M_* s_* \;<\; \frac{m_*^2\phi_*^2 s_*}{8}\, \delta  \;=\; 4 d_*M_*s_*\delta
\]
then \eqref{ineq} is satisfied, contradicting the fact that $\hat\alpha$ is the minimizer of \eqref{hatalpha}. \\
For case (b2), a more elaborated argument is necessary, because if we add some of the points $\alpha^0_{j_i}$ to $\hat\alpha$ we obtain vectors with intervals of length smaller than $\delta$. Then we need to add some points and to remove others in order to obtain a good candidate vector. 
Define the vector $\alpha^{(1)} =  \hat\alpha\cup\{\alpha^0_{j_1}\}$. 
As before denote by $J_1$ and $J_2$ the intervals in $\alpha^{(1)}$ that contain (as an extreme) the point $\alpha^0_{j_1}$; that is 
\[
J_1 = (\alpha^{(1)}_{r_i-1},\alpha^0_{j_1}]\,,\quad J_2 = (\alpha^0_{j_1},\alpha^{(1)}_{r_i+1}]\,,\qquad \alpha^{(1)}_{r_i} = \alpha^0_{j_1}\,,
\]
and let $J$ denote their union $J=(\alpha^{(1)}_{r_i-1},\alpha^{(1)}_{r_i+1}]$.  By using the extended definition of $G$ in \eqref{G2} we have that
\begin{align}\label{Gdif}
G(\hat\alpha) - G(\alpha^{(1)})
\,=\; L( J,\hat\beta_{J}) - L(J_1,\hat\beta_{J_1})-L(J_2,\hat\beta_{J_2}) - \gamma\,.
\end{align}
If $|J_1|<\delta$,  by the condition $\delta+\lambda\sqrt\delta/d_*<r(\alpha^0)$ we must have $r_i\geq 2$ and there must exist an interval
$K_1=(\alpha^{(1)}_{r_i-2},\alpha^{(1)}_{r_i-1}]$ in $\alpha^{(1)}$ (adjacent to $J^i_1$ to the left), see Figure~\ref{intervals}.
Similarly for the interval  $J^i_2$, if $|J^i_2|<\delta$ then we must have  $r_i\leq \ell(\alpha^{(1)})-2$ and there must exist an interval
$K^i_2=(\alpha^{(1)}_{r_i+1},\alpha^{(1)}_{r_i+2}]$ in $\alpha^{(1)}$ (adjacent to $J^i_2$ to the  right). To take only one case from now on we assume $|J_1|<\delta$  and $|J_2|\geq\delta$, the other possibilities can be handled in a similar way. 
\begin{figure}[h!]
\includegraphics{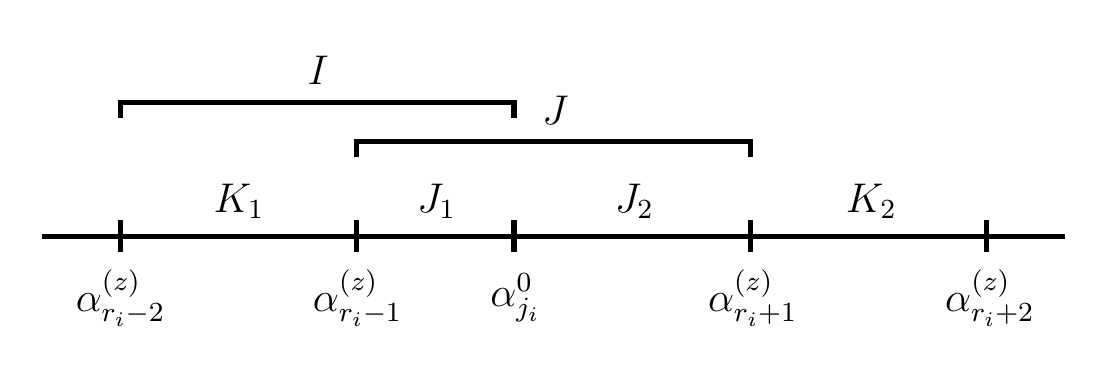}
\caption{Graphical representation of the intervals $J_1$, $J_2$, $K_1$, $K_2$, $I$ and $J$.}\label{intervals}
\end{figure}

Now, we will construct a vector $\alpha$ obtained from $\alpha^{(1)}$ by removing the component  $\alpha^{(1)}_{r_i-1}$, that is $\alpha=\alpha^{(1)}\setminus\{\alpha^{(1)}_{r_i-1}\}$.
In this case, by the  definition of the intervals $J_1$, $J_2$, $K_1$, $K_2$ and taking $I= K_1\cup J_1$ (see Figure~\ref{intervals}) we have that 
 \begin{align*}
G(\alpha^{(1)}) - G(\alpha) =\; L(K_1,\hat\beta_{K_1}) + L(J_1,\hat\beta_{J_1})  + \gamma - L(I,\hat\beta_{I})\,.
\end{align*}
therefore 
 \begin{align*}
G(\hat \alpha) - G(\alpha) & = G(\hat \alpha) - G(\alpha^{(1)}) +  
G(\alpha^{(1)}) - G(\alpha) \\
& =\; L( J,\hat\beta_{J}) - L(J_2,\hat\beta_{J_2})  +  L(K_1,\hat\beta_{K_1})  - L(I,\hat\beta_{I})\,.
\end{align*}
By using the same arguments and in case (b1), with the observation that $|I|\geq \delta$ implies
\begin{align*}
\sqrt{|I|}\|\beta^*_{K_1}- \hat\beta_{I}\|_1 &\leq \sqrt{|I|}\|\beta^*_{I}- \hat\beta_{I}\|_1 + \sqrt{|I|}\|\beta^*_{K_1}- \beta^*_{I}\|_1\\
&\leq  \sqrt{|I|}\|\beta^*_{I}- \hat\beta_{I}\|_1 + \sqrt{|I|}\frac{\delta}{|I|} M_* s_*\\
&\leq \sqrt{|I|}\|\beta^*_{I}- \hat\beta_{I}\|_1 + \sqrt\delta M_* s_*
\end{align*}
we have that
 \begin{align*}
L(J,\hat\beta_{J}) - & \,L(J_2,\hat\beta_{J_2}) + L(K_1,\hat\beta_{K_1}) - L(I,\hat\beta_{I})\\
&\geq \; L(J,\beta^*_{J})  - L(J_2,\beta^*_{J_2}) +   L(K_1,\beta^*_{K_1})  - \frac{24\lambda^2 s_*}{\phi_*^2}  \\
&\qquad  - L(I,\beta^*_{K_1}) - \lambda\sqrt{|I|}\|\beta^*_{K_1}- \hat\beta_{I}\|_1\\
&= \; L(J,\beta^*_{J})  - L(J_2,\beta^*_{J_2}) - L(J_1,\beta^*_{J_1})  - \frac{24\lambda^2 s_*}{\phi_*^2} 
\\
&\qquad -  \lambda\sqrt{|I|}\|\beta^*_{I}- \hat\beta_{I}\|_1 - \lambda\sqrt\delta M_*s_*  \\
&\geq \; \frac{\lambda\sqrt\delta m_*^2\phi_*^2 }{8 d_*}- 2\lambda\sqrt\delta M_*  - \frac{32\lambda^2}{\phi_*^2}\\
&= \; 2\lambda\sqrt\delta M_* - \frac{32\lambda^2}{\phi_*^2}  \,. 
\end{align*}
By the condition
\[
\lambda \;<\; \frac{\sqrt\delta M_* \phi_*^2}{24}
\]
we obtain 
\begin{equation*}
G(\hat\alpha) - G(\alpha) \;>\;  0\,,
\end{equation*}
contradicting the fact that $\hat\alpha$ minimizes \eqref{hatalpha}.
The last point in the theorem can be derived directly from Lemma~\ref{lem4} and $\|\hat\alpha-\alpha^0\|_1\leq \lambda\sqrt\delta/d_*$. 
\end{proof}

\begin{proof}[Proof of Theorem~\ref{modsel2}]
First we will show that under the conditions of Theorem~\ref{modsel}, on $\T_0\cap\T_1$ we have that $h(0,1)=0$ if $k_0=1$ or  $h(0,1)$ is at most at distance $\lambda\sqrt\delta/d_*$ of some of the values in $(\alpha^0_1,\dotsc,\alpha^0_{k_0-1})$ if $k_0>1$. This fact can be derived straightforward from the proof of Theorem~\ref{modsel}, as 
 the objective functions coincide for 1 or 2 segments; that is 
 \[
G((0,u,1]) = H(0,u) + H(u,1) \text{ for all }u\in [0,1]\,,
 \] 
where $G$ is given by  \eqref{G2} and $H$ is defined in \eqref{defH}. 

So first suppose $k_0=1$ and $\alpha^0=(0,1)$. Then  by the same arguments used in the proof of case (a) in Theorem~\ref{modsel} we have that for $\alpha(u)=(0,u,1)$ we must have
\[
G(\alpha^0) <  \min_{u\in(\delta,1-\delta)} G(\alpha(u)) 
\]
and therefore $h(0,1)=0$. 
Now suppose $k_0>1$ and that 
$h(0,1) \notin  \cup_{j=1}^{k_0-1}\B(\alpha^0_j,\lambda\sqrt\delta/d_*)$, 
define
\begin{align*}
\alpha^{(0)} & =  (0,h(0,1),1] \\
\alpha^{(1)} & = \alpha^{(0)} \cup \{\alpha^0_j\}\\
\alpha^{(2)} & = \alpha^{(1)}\setminus\{h(0,1)\}\,.
\end{align*}
If $h(0,1)=0$ (meaning that $\ell(\hat\alpha^{bs})=1$) we can apply the arguments of case (b1) in Theorem~\ref{modsel}, obtaining that $G(\alpha^{(0)})   -  G(\alpha^{(1)})>0$. On the other hand, if 
$h(0,1)\in [\delta,1-\delta]$ we can apply the same argument of case (b2) in Theorem~\ref{modsel}, obtaining  
\begin{align*}
G(\alpha^{(0)})   -  G(\alpha^{(2)})  \;=\; G(\alpha^{(0)})   -  G(\alpha^{(1)})  + G(\alpha^{(1)})   -  G(\alpha^{(2)})  \;>\; 0 \,.
\end{align*}
In both cases we contradict the fact that $h(0,1)$ minimizes \eqref{h}. So, if $k_0>1$ we must have $h(0,1) \in  \cup_{j=1}^{k_0-1}\B(\alpha^0_j,\lambda\sqrt\delta/d_*)$, with $j=1,\dotsc, k_0-1$. Now we can replicate the same argument above on each one of the sub-intervals $(0,h(0,1)]$ and  $(h(0,1),1]$ 
provided that $\delta  \leq r(\alpha^0) -  2 \lambda\sqrt\delta/d_*$.
\end{proof}

\section{Auxiliary results}

Given an interval $(u,v]\subset[0,1]$ denote by
\[
\gamma_{(u,v]}(\alpha^0_j)= |(u,v]\cap(\alpha^0_{j-1},\alpha^0_j]|/(v-u)\,,\qquad j=1,\dotsc, k_0\,,
\]
where $|(u,v]\cap(\alpha^0_{j-1},\alpha^0_j]|$ equals the length 
of the interval  $(u,v]\cap(\alpha^0_{j-1},\alpha^0_j]$. 

\begin{lemma}\label{oracle}
If $\Sigma$ is positive definite, 
for any interval $(u,v]\in I^n_0$ we have that
\begin{equation*}
\beta^*_{(u,v]}\;=\;  \sum_{j=1}^{k_0}  \gamma_{(u,v]}(\alpha^0_j)\,\beta^0(j) \,.
\end{equation*}
\end{lemma}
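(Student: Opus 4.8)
The plan is to work directly with the population objective in \eqref{betastar} and to reduce its minimization to a weighted least-squares problem in the inner product induced by $\Sigma$. First I would write the expectation as a sum over the individual observations in the interval,
\[
\E\|\bY_{(u,v]}-\bX_{(u,v]}\beta\|_2^2 \;=\; \sum_{i\,:\,un<i\leq vn}\E\bigl[(Y_i - X_i^T\beta)^2\bigr]\,,
\]
and substitute the model \eqref{model}, so that $Y_i - X_i^T\beta = X_i^T(\beta^{(i)}-\beta)+\epsilon_i$. Using that $\epsilon_i$ is centered and independent of $X_i$ (Assumptions~1 and 2), each term splits as $\E[(X_i^T(\beta^{(i)}-\beta))^2] + \E[\epsilon_i^2]$, the second summand being a constant that does not depend on $\beta$ and may therefore be dropped.

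Next I would use $\E(X_i)=0$ and $\E[X_iX_i^T]=\Sigma$ to rewrite the first summand as the quadratic form $(\beta^{(i)}-\beta)^T\Sigma(\beta^{(i)}-\beta)$. Grouping indices by the segment to which they belong and writing $n_j = n\,|(u,v]\cap(\alpha^0_{j-1},\alpha^0_j]|$ for the number of observations of the interval falling in the $j$-th true segment (so that $\beta^{(i)}=\beta^0(j)$ for exactly those indices), the objective becomes, up to the additive constant,
\[
\sum_{j=1}^{k_0} n_j\,(\beta^0(j)-\beta)^T\Sigma(\beta^0(j)-\beta)\,.
\]
Since the segments partition $(0,1]$ we have $\sum_{j=1}^{k_0} n_j = (v-u)n$, so this is a convex quadratic in $\beta$ whose Hessian equals $2(v-u)n\,\Sigma$.

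Finally I would minimize by setting the gradient to zero. Differentiating gives the first-order condition $\Sigma\sum_{j=1}^{k_0} n_j(\beta^0(j)-\beta)=0$, and because $\Sigma$ is positive definite---hence invertible and the quadratic strictly convex, so the minimizer is unique---this is equivalent to $\sum_{j=1}^{k_0} n_j(\beta^0(j)-\beta)=0$. Solving yields $\beta^*_{(u,v]} = \bigl(\sum_j n_j\bigr)^{-1}\sum_j n_j\beta^0(j) = \sum_{j=1}^{k_0} \frac{n_j}{(v-u)n}\,\beta^0(j)$, and identifying $n_j/((v-u)n)$ with $\gamma_{(u,v]}(\alpha^0_j)$ gives the claim. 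The argument involves no real obstacle; the only point requiring care is the bookkeeping that converts the interval lengths $|(u,v]\cap(\alpha^0_{j-1},\alpha^0_j]|$ into the integer counts $n_j$ (legitimate because $un,vn,\alpha^0_jn\in\N$), and positive definiteness of $\Sigma$ is precisely what guarantees both uniqueness of the oracle and the passage from the normal equations to the explicit convex combination of the $\beta^0(j)$.
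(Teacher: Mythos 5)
Your proposal is correct and follows essentially the same route as the paper: decompose the population objective observation by observation, use independence and centering of the noise to drop $\E[\epsilon_i^2]$, reduce to the weighted quadratic $\sum_j n_j(\beta^0(j)-\beta)^T\Sigma(\beta^0(j)-\beta)$, and identify the minimizer as the convex combination of the $\beta^0(j)$. The only cosmetic difference is that the paper finishes by completing the square, writing the objective as $(\beta-\tilde\beta)^T\Sigma(\beta-\tilde\beta)$ up to constants, whereas you solve the normal equations; both use positive definiteness of $\Sigma$ in the same way to conclude uniqueness.
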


\begin{proof}
Observe that for $i \in (\alpha^0_{j-1}n,\alpha^0_j n]\cap(un,vn]$ we have 
\begin{align*}
\E |Y_i- X_i^T\beta|^2 &\;=\; \E |X_i^T(\beta^0(j)-\beta) +\epsilon_i |^2 \\
&\;=\;  \E |X_i^T(\beta^0(j)-\beta)|^2  + \E(\epsilon_i^2)\,.
\end{align*}
Therefore
\begin{align*}
\beta^*_{(u,v]} \;&=\; \underset{\beta}{\arg\min} \;\; n^{-1}\,\sum_{i=un+1}^{vn}  \E |X_i^T(\beta^{(i)}-\beta)|^2  \\
&=\; \underset{\beta}{\arg\min} \; \sum_{j=1}^{k_0} |(u,v]\cap(\alpha^0_{j-1},\alpha^0_j]|\,  (\beta^0(j)-\beta)^T \,\Sigma\,(\beta^0(j)-\beta) \\
&=\; \underset{\beta}{\arg\min} \; \beta^T \,\Sigma\, \beta - 2\beta^T\, \Sigma\,\Bigl(  \,\sum_{j=1}^{k_0}  \gamma_{(u,v]}(\alpha_j^0) \beta^0(j) \,\Bigr)\\
&=\; \underset{\beta}{\arg\min} \; (\beta-\tilde\beta)^T\, \Sigma \,(\beta -\tilde\beta)\,,
\end{align*}
where 
\[
\tilde\beta \;=\; \sum_{j=1}^{k_0}  \gamma_{(u,v]}(\alpha_j^0) \beta^0(j)\,.
\]
If $\Sigma$ is positive definite we have that the minimizer  is $\beta^*_{(u,v]} = \tilde\beta$ and this concludes the proof of Proposition~\ref{oracle}. 
\end{proof}

We now prove a basic result about the constant $m_*$ defined by Assumption~4.

\begin{lemma}\label{convex}
If Assumption~4 holds then 
\[
\inf_{j=1,\dotsc, k_0-1} \,\inf_{(u,\alpha^0_j],(\alpha^0_j,v]  \in I^n_0} \|\beta^*_{(u,\alpha^0_j]}- \beta^*_{(\alpha^0_j,v]}\|_1 \;\geq \; m_*s_*\,.
\]
\end{lemma}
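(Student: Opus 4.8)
The plan is to read off the oracles explicitly from Lemma~\ref{oracle} and then push everything back onto Assumption~4. First I would fix $j\in\{1,\dots,k_0-1\}$ and admissible intervals $(u,\alpha^0_j]$ and $(\alpha^0_j,v]$, and let $i,k$ be the indices with $u\in[\alpha^0_{i-1},\alpha^0_i)$ and $v\in(\alpha^0_{k-1},\alpha^0_k]$, so that $1\le i\le j<k\le k_0$. By Lemma~\ref{oracle} the weights $\gamma_{(u,\alpha^0_j]}(\alpha^0_r)$ vanish for $r>j$, so $\beta^*_{(u,\alpha^0_j]}$ is a convex combination of $\beta^0(i),\dots,\beta^0(j)$; symmetrically $\beta^*_{(\alpha^0_j,v]}$ is a convex combination of $\beta^0(j+1),\dots,\beta^0(k)$. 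The two oracles are therefore averages over the two blocks of segments lying strictly on either side of the change point $\alpha^0_j$.

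As a warm-up I would check the special case where $(u,\alpha^0_j]$ and $(\alpha^0_j,v]$ do not cross any further change point, i.e.\ $i=j$ and $k=j+1$: then $\beta^*_{(u,\alpha^0_j]}=\beta^0(j)$ and $\beta^*_{(\alpha^0_j,v]}=\beta^0(j+1)$, and the claim is exactly Assumption~4 for the triple $(j,j,j+1)$ (cf.\ the $k_0=2$ form in Remark~\ref{identbetas}). The content of the lemma is thus to extend this to the general block case, where the boundary segments $i$ and $k$ receive only partial weight because $u>\alpha^0_{i-1}$ and $v<\alpha^0_k$.

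To handle the general case I would exploit the additivity identity $(v'-u')\beta^*_{(u',v']}=(\eta-u')\beta^*_{(u',\eta]}+(v'-\eta)\beta^*_{(\eta,v']}$, which follows at once from Lemma~\ref{oracle}. This shows that, as $u$ decreases through a single segment, $\beta^*_{(u,\alpha^0_j]}$ moves along the straight segment joining the two consecutive block-averages $\sum_{r=i}^j\gamma(i,r,j)\beta^0(r)$ and $\sum_{r=i+1}^j\gamma(i+1,r,j)\beta^0(r)$, and likewise for $\beta^*_{(\alpha^0_j,v]}$ on the right. Hence both oracles range over polylines whose vertices are precisely the block-averages controlled by Assumption~4, which gives
\[
\Bigl\|\sum_{r=i}^j\gamma(i,r,j)\beta^0(r)-\sum_{r=j+1}^k\gamma(j+1,r,k)\beta^0(r)\Bigr\|_1\ge m_*s_*
\]
at every pair of vertices.

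The hard part will be passing from the vertices to all admissible $u,v$, i.e.\ bounding $\|\beta^*_{(u,\alpha^0_j]}-\beta^*_{(\alpha^0_j,v]}\|_1$ from below uniformly along the two polylines. Convexity of $\|\cdot\|_1$ is of no direct help here, since it controls the distance at the endpoints of each segment but not at interior points; the reduction must instead use the one-sided support structure (the left oracle involves only $\beta^0(r)$ with $r\le j$, the right only $r>j$) together with the full family of inequalities in Assumption~4 over all triples $1\le i\le j<k\le k_0$. I expect this uniform lower bound along the polylines to be the crux of the argument, and the step where the precise form of the identifiability condition is essential.
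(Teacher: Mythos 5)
You have set up the problem exactly as the paper does --- Lemma~\ref{oracle} identifies both oracles as length-weighted block averages, they trace polylines whose vertices are the block averages $\beta^*_{(\alpha^0_{i-1},\alpha^0_j]}$ and $\beta^*_{(\alpha^0_j,\alpha^0_k]}$, the warm-up case $i=j$, $k=j+1$ is immediate, and Assumption~4 is precisely a lower bound at pairs of vertices --- but you stop exactly where the proof has to happen. The passage from the vertices to arbitrary admissible $u,v$, which you yourself call the crux, is never argued, so the proposal is a reduction plan rather than a proof; that is a genuine gap. For comparison, the paper closes this step coordinate-by-coordinate: writing the $\ell_1$-norm as a sum over coordinates, it observes that for fixed $y_i$ each coordinate $v\mapsto(\beta^*_{(\alpha^0_j,v]})_i$ is a ratio of affine functions of $v$, hence monotone on each interval between consecutive change points, and it concludes that the minimizer of $|y_i-(\beta^*_{(\alpha^0_j,v]})_i|$ may be taken among the change points $\alpha^0_k$; repeating the argument in $u$ reduces the infimum to pairs of block averages, where Assumption~4 applies verbatim.

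You should know, however, that your hesitation at this step is better justified than the paper's confidence. A minimum, unlike a maximum, of a convex function along a polyline need not occur at a vertex; the absolute value of a monotone function attains its minimum at an interior zero crossing whenever one exists; and different coordinates may prefer different vertices. This is fatal, not cosmetic: take $p=1$, $k_0=3$, $\alpha^0=(0,\tfrac12,\tfrac34,1)$, $\beta^0(1)=10$, $\beta^0(2)=0$, $\beta^0(3)=\tfrac{10}{3}$. The four block-average differences appearing in Assumption~4 have absolute values $10$, $\tfrac{25}{3}$, $\tfrac{10}{3}$, $\tfrac{10}{3}$, so the assumption holds with $s_*=1$ and $m_*=\tfrac{10}{3}$; yet for $u=3/8$ (a grid point once $8$ divides $n$) one gets $\beta^*_{(3/8,3/4]}=\tfrac13\cdot10+\tfrac23\cdot0=\tfrac{10}{3}=\beta^*_{(3/4,v]}$, so the infimum in the lemma is $0$. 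Hence the step you left open cannot be completed from Assumption~4 as stated --- by your route or by the paper's --- and the paper's own vertex-reduction claim fails precisely at such interior crossings. An honest completion requires a strengthened identifiability condition, e.g.\ a separation bound over all convex combinations of the $\beta^0(r)$ from the two sides of each change point, which is in effect what the lemma itself asserts.
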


\begin{proof}
As the $\ell_1$-norm is a sum over the different coordinates, we will minimize over each coordinate separately. 
So, fix $j=1,\dotsc, k_0$  and $i=1,\dotsc, p$;  we will show that for any $y_i\in\R$ (fixed), the minimizer of 
\[
| y_i  - (\beta^*_{(\alpha^0_j,v]})_i |
\]
over the set $\{v\colon \alpha^0_j\leq v\leq 1\}$ is one of the $(\beta^*_{(\alpha^0_j,\alpha^0_k]})_i$, with $k=j+1,\dots, k_0$.
But this is equivalent to the following optimization problem
\begin{align*}
\text{Minimize:}& \quad \sum_{k=j}^{k_0-1} \frac{ \max(v-\alpha^0_k,0)}{(v-\alpha^0_j)} 
 (y_i - \beta_i^0(k+1)) \\
\text{Subject to:}& \quad \alpha^0_j \leq v \leq \alpha^0_{k_0} = 1\,,
\end{align*}
where the objective function is continuous and linear on each of the intervals 
$(\alpha^0_{k},\alpha^0_{k+1}]$ for $k=j,\dotsc, k_0-1$. 
Therefore the solution  must be attained at one of the ``vertices''
$\{\alpha^0_k\colon k=j+1,\dotsc,k_0\}$. The same result can be obtained
fixing $v$ and minimizing over $u < \alpha^0_j$, then the statement of the
lemma follows. 
\end{proof}
 
\begin{lemma}\label{lemma-t0}
Suppose Assumptions 1 and 2 hold. Then for all $t>0$ and
\[
\lambda_0 =  14 t \sigma K_X \sqrt{\frac{\log(n^2p)}{n}}
\]
we have 
\[
\P(\T_0) \geq 1- 1/t^2\,.
\]
\end{lemma}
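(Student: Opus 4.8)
The plan is to bound, uniformly over all $\le n^2p$ pairs formed by an interval $(u,v]\in I^1$ and a coordinate $1\le j\le p$, the normalized correlation $2\,|\be^{*,T}_{(u,v]}\bX^{(j)}_{(u,v]}|/n$. First I would record that, for each fixed pair, this is a centered sum of independent terms. Writing $\epsilon^*_i$ for the entries of $\be^*_{(u,v]}$ and using $Y_i=X_i^T\beta^{(i)}+\epsilon_i$ together with the explicit oracle of Lemma~\ref{oracle}, one has $\be^{*,T}_{(u,v]}\bX^{(j)}_{(u,v]}=\sum_{i=un+1}^{vn}\epsilon^*_iX_i^{(j)}$ with $\epsilon^*_i=\epsilon_i+X_i^T(\beta^{(i)}-\beta^*_{(u,v]})$; the summands are independent across $i$, and the normal equations defining the $L^2(P)$-projection $\beta^*_{(u,v]}$ give $\E[\sum_i\epsilon^*_iX_i^{(j)}]=0$. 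The leading contribution is the pure-noise part $\frac1n\sum_i\epsilon_iX_i^{(j)}$, whose standard deviation is of order $\sigma K_X/\sqrt n$, while the model-bias part $\frac1n\sum_iX_i^T(\beta^{(i)}-\beta^*_{(u,v]})X_i^{(j)}$ has bounded, aggregate-centered increments (by Lemma~\ref{oracle} and $\|X_i\|_\infty\le K_X$).

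The main obstacle is that Assumption~2 provides only $\E\epsilon_i^2\le\sigma^2$ and no exponential moment, so $\epsilon_iX_i^{(j)}$ is heavy-tailed and a naive union bound over the $p$ coordinates --- let alone the $\asymp n^2$ intervals --- cannot be paid for by a single logarithmic factor. I would resolve this by truncation at a level $\tau\asymp\sigma\sqrt{n/\log(n^2p)}$, writing $\epsilon_i=\epsilon_i\1\{|\epsilon_i|\le\tau\}+\epsilon_i\1\{|\epsilon_i|>\tau\}$. On the truncated part the increments are bounded by $\tau K_X$ with variance proxy $\sigma^2K_X^2$, so each truncated statistic is sub-Gaussian with parameter of order $\sigma K_X/\sqrt n$; the bias increments, being bounded and centered, fall under the same regime. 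A standard maximal bound for the maximum of $\le n^2p$ sub-Gaussian variables then shows that its second moment is $\lesssim\sigma^2K_X^2\log(n^2p)/n$, which is $\lambda_0^2/t^2$ up to a constant. Applying Markov's inequality to this second moment of the maximum yields the probability bound with $\lambda_0$ linear in $t$, the exact $t^{-2}$ form being the signature of a second-moment (rather than exponential) control; the constant $14$ is fixed so that this part contributes at most $\tfrac12 t^{-2}$.

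It remains to handle the truncated-away remainder $\frac1n\sum_i\epsilon_i\1\{|\epsilon_i|>\tau\}X_i^{(j)}$, and here I would avoid a per-pair bound altogether: its maximum over all pairs is dominated, uniformly, by $\frac{K_X}{n}\sum_{i=1}^n|\epsilon_i|\1\{|\epsilon_i|>\tau\}$. The second moment of this single random variable is controlled through $\E[\epsilon_i^2\1\{|\epsilon_i|>\tau\}]\le\sigma^2$ and $\E[|\epsilon_i|\1\{|\epsilon_i|>\tau\}]\le\sigma^2/\tau$, and with the above choice of $\tau$ it is again of order $\sigma^2K_X^2\log(n^2p)/n$, so one more application of Markov contributes a further $\tfrac12 t^{-2}$. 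Summing the two contributions gives $\P(\T_0)\ge1-t^{-2}$. The crux throughout is precisely the tension I would have to manage: the weak second-moment assumption forces the truncation and the second-moment probability control (hence the clean $t^{-2}$), while the oracle $\beta^*_{(u,v]}$ changing with the interval prevents the sums from telescoping and leaves $\asymp n^2$ genuinely distinct intervals to be covered, which is why $\log(n^2p)$ rather than $\log(np)$ appears in $\lambda_0$.
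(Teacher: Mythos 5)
Your proposal is correct in substance but takes a genuinely different technical route from the paper. Both proofs share the same skeleton: identify, via the normal equations for the $L^2(P)$-projection $\beta^*_{(u,v]}$, the quantity in $\T_0$ with the maximum of $\asymp n^2p$ \emph{centered} sums of independent increments, bound the \emph{second moment} of that maximum by $\lesssim \sigma^2K_X^2\log(n^2p)/n$, and convert this into the clean $t^{-2}$ probability via Markov. Where you differ is in how that second-moment maximal bound is obtained. The paper stacks all (interval, coordinate) statistics into coordinates of $d=pn(n-1)$-dimensional independent centered random vectors $\bZ_i$ and invokes Nemirovski's inequality \cite[Corollary~2.3]{dumbgen2010}, $\E\|n^{-1/2}\bS_n\|_\infty^2 \le n^{-1}(2e\log d-e)\sum_i\|\bZ_i\|_\infty^2$, which is a pure second-moment inequality: it needs no boundedness or sub-Gaussianity of the increments, so no truncation is required, and the heavy-tail ``obstacle'' you organize your proof around simply never arises. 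You instead re-derive an equivalent bound by hand: truncation of $\epsilon_i$ at $\tau\asymp\sigma\sqrt{n/\log(n^2p)}$, a Bernstein-type maximal inequality for the bounded part, and a crude uniform $L^1$/$L^2$ bound for the tail part, each contributing $\tfrac12t^{-2}$. This works, but note that at your truncation level the statistics are only sub-Gaussian up to deviations of order $\sigma K_X\sqrt{\log(n^2p)/n}$ and sub-exponential beyond; the maximal bound then carries a correction term of order $(\tau K_X\log(n^2p)/n)^2$, which your choice of $\tau$ makes exactly comparable to the sub-Gaussian term --- this balancing is the hidden step that makes your argument close, and it deserves to be made explicit. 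What each approach buys: the paper's proof is shorter and delegates all the work to a citable inequality; yours is elementary and self-contained, and makes transparent why only second moments of $\epsilon_i$ suffice. One caveat applying equally to both: the increments involve $\epsilon^*_{(u,v],i}=\epsilon_i+X_i^T(\beta^{(i)}-\beta^*_{(u,v]})$, whose variance includes an interval-dependent bias contribution; your ``same regime'' remark for the bias increments glosses over its size in exactly the way the paper's claim $\E\|\bZ_i\|_\infty^2\le 4\sigma^2K_X^2$ does, so this is not a gap relative to the paper, but it would need care in a fully rigorous constant-tracking version of either proof.
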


\begin{proof}
For any $i=1,\dotsc, n$ define the vector $\bZ_i\in \R^d$, with $d=p n(n-1)$ as 
\begin{equation}
(\bZ_i)_{j,(u,v]}  = \begin{cases}
\epsilon^{*}_{(u,v],i} X_{i}^{(j)} - \E(\epsilon^{*}_{(u,v],i}X_i^{(j)})\,; & \text{ if }i/n \in (u,v]\,,\\
0\,; & \text{ c.c. }
\end{cases}
\end{equation}
We have that $\bZ_1,\dotsc, \bZ_n$ are independent, with $\E(\bZ_i) = 0$
for all $i$. By Assumptions 1 and 2  we also have that $\E\|\bZ_i\|_\infty^2\leq 4\sigma^2 K_X^2$ for all $i$.
Denote by $\bS_n = \sum_{i=1}^n \bZ_i$. 
By Markov's inequality  we have that  
\begin{align*}
\P\Bigl( \, 2\|n^{-1}\bS_n\|_\infty \,>\, \lambda_0 \Bigr)
&\;\leq\;  \frac{4\,\E   \|n^{-1/2}\bS_n\|_\infty^2}{n\lambda_0^2}\,.
\end{align*}
Now, by \cite[Corollary~2.3]{dumbgen2010} and Assumptions 1 and 2 we have
that
\begin{align*}
\E   \|n^{-1/2}\bS_n\|_\infty^2\,&\leq \, n^{-1}(2e\log d - e)\sum_{i=1}^n\|\bZ_i\|_\infty^2\\
&\leq \, 8e K_X^2\sigma^2 \log d
\end{align*}
therefore
\begin{align*}
\P\Bigl( \, 2\|n^{-1}\bS_n\|_\infty \,>\, \lambda_0 \Bigr)
&\;\leq\;  \frac{ 32eK_X^2 \sigma^2\log(n^2p)}{n \lambda_0^2}\,.
\end{align*}
Moreover, as $\epsilon^*_{(u,v]}$ is orthogonal to $\bX^{(j)}_{(u,v]}$ in the $L^2(P)$ space for all  $j=1,\dotsc,p$ and all $(u,v]\in I_0^n$ then  
\begin{equation*}
2\|n^{-1}\bS_n\|_\infty= \underset{(u,v]\in I_0^n}{\max}\;\underset{1\leq j\leq p}{\max}\; 2\bigl|  \be^{*,T}_{(u,v]}\bX_{(u,v]}^{(j)}\bigr|/n\,.
\end{equation*}
and this concludes the proof.
\end{proof}

Now let $\T_1$ be given by  
\begin{equation*}\label{t1}
\T_1 = \Bigl\{ \underset{(u,v]\in I_0^n}{\max} \,\| \hat\Sigma_{(u,v]}-(v-u)\Sigma\|_\infty \leq 
\lambda_1  \Bigr\} \,,
\end{equation*}
where 
\[
\hat\Sigma_{(u,v]} = \bX_{(u,v]}^T\bX_{(u,v]}/{n}\,.
\]

\begin{lemma}\label{lemma-t1}
If Assumption 1 holds then for all $t>0$ and
\[
\lambda_1 = 10 t  K_X^2 \sqrt{\frac{2\log(np)}{n}}
\]
we have 
\[
\P(\T_1)\;\geq\; 1- 1/t^2\,.
\]
\end{lemma}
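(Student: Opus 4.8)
The plan is to mirror the proof of Lemma~\ref{lemma-t0}, recasting the uniform control of $\hat\Sigma_{(u,v]}-(v-u)\Sigma$ over all sub-intervals and all matrix entries as the supremum norm of a single centered sum of independent high-dimensional vectors, whose second moment is then bounded by a maximal inequality. Concretely, for each $i=1,\dots,n$ I would define a vector $\bW_i$ indexed by triples $(j,k,(u,v])$ with $j,k\in\{1,\dots,p\}$ and $(u,v]\in I_0^n$, setting
\[
(\bW_i)_{j,k,(u,v]} \;=\; \bigl(X_i^{(j)}X_i^{(k)}-\Sigma_{jk}\bigr)\,\1\{i/n\in(u,v]\}\,.
\]
Since $\E(X_i^{(j)}X_i^{(k)})=\Sigma_{jk}$, these are centered and independent across $i$, and with $\bS_n=\sum_{i=1}^n\bW_i$ one checks directly that $(n^{-1}\bS_n)_{j,k,(u,v]}$ equals the $(j,k)$ entry of $\hat\Sigma_{(u,v]}-(v-u)\Sigma$. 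Hence $\T_1=\{\,\|n^{-1}\bS_n\|_\infty\le\lambda_1\,\}$, and Markov's inequality gives
\[
\P(\|n^{-1}\bS_n\|_\infty>\lambda_1)\;\le\;\frac{\E\|n^{-1/2}\bS_n\|_\infty^2}{n\lambda_1^2}\,.
\]

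Next I would bound the summands in sup-norm. Under Assumption~1 we have $|X_i^{(j)}X_i^{(k)}|\le K_X^2$ and $|\Sigma_{jk}|=|\E(X^{(j)}X^{(k)})|\le K_X^2$, so $\|\bW_i\|_\infty\le 2K_X^2$ and therefore $\sum_{i=1}^n\|\bW_i\|_\infty^2\le 4K_X^4 n$. The ambient dimension is $d=p^2\,\#I_0^n=O(p^2n^2)$, so $\log d\le 2\log(np)$ up to lower-order terms. Applying the maximal inequality \cite[Corollary~2.3]{dumbgen2010} exactly as in Lemma~\ref{lemma-t0} then yields
\[
\E\|n^{-1/2}\bS_n\|_\infty^2\;\le\;n^{-1}\,(2e\log d)\sum_{i=1}^n\|\bW_i\|_\infty^2\;\le\;16\,e\,K_X^4\log(np)\,.
\]
Substituting into the Markov bound and inserting $\lambda_1=10tK_X^2\sqrt{2\log(np)/n}$ gives $\P(\T_1^c)\le 16e/(200\,t^2)\le 1/t^2$, which is the claim.

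The computations are entirely routine; the only genuinely substantive point, as in Lemma~\ref{lemma-t0}, is the passage from a uniform statement over the $O(n^2)$ sub-intervals to a single sup-norm, so that the price for uniformity is merely the $\log d$ factor (and hence only the $2\log(np)$ under the root rather than a polynomial dependence on $n$). I would therefore set up the index set so that a coordinate receives a contribution from observation $i$ only when $i/n\in(u,v]$; this is what keeps each $\bW_i$ bounded by $2K_X^2$ uniformly over the interval index and makes the maximal inequality applicable with the stated dimension. Closing the numerical constant is then immediate, since the factor $10$ and the $2$ under the root leave ample slack ($16e/200<1$).
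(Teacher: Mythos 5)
Your proposal is correct and follows essentially the same route as the paper's own proof: the same vectorization $\bW_i$ indexed by $(j,k,(u,v])$ with the indicator $\1\{i/n\in(u,v]\}$, the same Markov step, the same maximal inequality from \cite[Corollary~2.3]{dumbgen2010} with $\log d\le 2\log(np)$, and the same identification of $\T_1$ with the sup-norm event. The numerical slack you compute ($16e/200<1$) matches the paper's conclusion (the paper carries a slightly larger constant but reaches the same bound $\le 1/t^2$), so nothing is missing.
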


\begin{proof}
For 
any $i=1,\dotsc, n$ define the vector $\bW_i\in \R^{d}$, with $d=p^2n(n-1)$ as 
\begin{equation*}
(\bW_i)_{j,l,(u,v]}  = \begin{cases}
X_i^{(j)}X_{i}^{(l)} - \E(X_i^{(j)}X_i^{(l)})\,; & \text{ if }(u,v]\in I^n_0,\,i/n \in (u,v]\,,\\
0\,; & \text{ c.c. }
\end{cases}
\end{equation*}
We have that $\bW_1,\dotsc, \bW_n$ are independent, with $\E(\bW_i) = 0$ for all $i$. By Assumption 1 we also have that $\E\|\bW_i\|_\infty^2\leq 4K_X^4$ for all $i$.
Denote by $\bS_n = \sum_{i=1}^n \bW_i$. 
By Markov's inequality  we have that  
\begin{align*}
\P\Bigl( \, \|n^{-1}\bS_n\|_\infty \,>\, \lambda_1 \Bigr)
&\;\leq\;  \frac{\E   \|n^{-1/2}\bS_n\|_\infty^2}{n\lambda_1^2}\,.
\end{align*}
Now, by \cite[Corollary~2.3]{dumbgen2010}  we have
that
\begin{align*}
\E   \|n^{-1/2}\bS_n\|_\infty^2\,&\leq \, n^{-1}(2e\log d - e)\sum_{i=1}^n\|\bW_i\|_\infty^2\\
&\leq \, 8e K_X^4\log d
\end{align*}
therefore
\begin{align*}
\P\Bigl( \,\|n^{-1}\bS_n\|_\infty \,>\, \lambda_1 \Bigr)
&\;\leq\;  \frac{ 32eK_X^4\log (n^2p^2)}{n\lambda_1^2}\,.
\end{align*}
The proof finishes by noting that 
\begin{equation*}
\|n^{-1}\bS_n\|_\infty= \underset{(u,v]\in I_0^n}{\max}\;\underset{1\leq j,\,l\leq p}{\max}\; \bigl| \bX_{(u,v]}^{(j)}\bX_{(u,v]}^{(l)}/n - (v-u)\Sigma_{j,l}   \bigr|\qedhere
\end{equation*}
\end{proof}

\bibliographystyle{plain}
\bibliography{./references3}

\end{document}